\keywords{quantum programming, entanglement, purity, type systems}
\lstdefinelanguage{entanglement}
{morekeywords={
     fun,let,qubit,in,measure,bool,split,cast,entangle,type,qinit,'pure,'mixed,if,then,else
   },
 sensitive=true,
 morecomment=[n]{(*}{*)},
 morestring=[b]",
 escapechar=\%,
 columns=fullflexible,
 keepspaces=true,
 basicstyle=\ttfamily,
 mathescape=true,
}
\lstdefinestyle{tiny}{
    language=entanglement,
    basicstyle=\tiny\ttfamily,%
    keywordstyle=\color{blue!50!black}\ttfamily,%
    commentstyle=\color{gray}\ttfamily,
    alsoletter={'},
    columns=fullflexible,%
    keepspaces=true,%
    showstringspaces=false,
    numberstyle=\color{gray}\sffamily\tiny,
    numbersep=2mm,
    xleftmargin=2em,
}
\newcommand{\LangName}{Twist}
\newcommand{\CoreLang}{\ensuremath{\mu Q}}
\newcommand{\CoreLangHeading}{\texorpdfstring{\ensuremath{\boldsymbol{\mu}}\textit{Q}}{𝜇Q}}
\title{\LangName{}: Sound Reasoning for Purity and Entanglement in Quantum Programs}
\author{Charles Yuan}
\affiliation{
  \institution{MIT CSAIL}
  \streetaddress{32 Vassar St}
  \city{Cambridge}
  \state{MA}
  \postcode{02139}
  \country{USA}
}
\email{chenhuiy@csail.mit.edu}
\author{Christopher McNally}
\affiliation{
  \institution{MIT RLE}
  \streetaddress{32 Vassar St}
  \city{Cambridge}
  \state{MA}
  \postcode{02139}
  \country{USA}
}
\email{mcnallyc@mit.edu}
\author{Michael Carbin}
\affiliation{
  \institution{MIT CSAIL}
  \streetaddress{32 Vassar St}
  \city{Cambridge}
  \state{MA}
  \postcode{02139}
  \country{USA}
}
\email{mcarbin@csail.mit.edu}
\DeclareMathOperator{\amp}{\&}
\DeclarePairedDelimiter\abs{\lvert}{\rvert}%
\newcommand{\qubit}{\ensuremath{\texttt{qubit}}}
\newcommand{\tBool}{\ensuremath{\texttt{bool}}}
\newcommand{\pure}{\ensuremath{\mathbf{P}}}
\newcommand{\mixed}{\ensuremath{\mathbf{M}}}
\newcommand{\purity}{\ensuremath{\pi}}
\newcommand{\qhastype}[3]{\ensuremath{\vdash_{#1} {#2} : {#3}}}
\newcommand{\hastype}[4]{\ensuremath{{#1} \vdash_{#2} {#3} : {#4}}}
\newcommand{\ctx}{\ensuremath{\Gamma}}
\newcommand{\rctx}{\ensuremath{\Delta}}
\newcommand{\qstate}{\ensuremath{\psi}}
\newcommand{\type}{\ensuremath{\tau}}
\newcommand{\qtype}{\ensuremath{\raisebox{-1.2pt}{\includegraphics[width=0.45em]{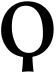}}}}
\newcommand{\oset}[3][0ex]{%
  \mathrel{\mathop{#3}\limits^{
    \vbox to#1{\kern-2.75\ex@
    \hbox{$\scriptstyle#2$}\vss}}}}
\newcommand{\step}[5]{\ensuremath{{#1}; {#2}\ {\mapsto}_{#3}\ {#4}; {#5}}}
\newcommand{\stepo}[6]{\ensuremath{{#1}; {#2}\ {\oset{#3}{\longmapsto}}_{#4}\ {#5}; {#6}}}
\newcommand{\stepstar}[5]{\ensuremath{{#1}; {#2}\ {\mapsto}^*_{#3}\ {#4}; {#5}}}
\newcommand{\stepostar}[6]{\ensuremath{{#1}; {#2}\ {\oset{#3}{\longmapsto}}{}^*_{#4}\ {#5}; {#6}}}
\newcommand{\canstep}[2]{\ensuremath{{#1}; {#2}\ \mapsto \cdot}}
\newcommand{\val}[1]{\ensuremath{{#1}\ \textsf{val}}}
\newcommand{\stepv}[4]{\ensuremath{{#1}; {#2}\ {\mapsto}^*\ {#3}; {#4}}}
\newcommand{\iscompat}[2]{\ensuremath{{#1} \vDash {#2}}}
\newcommand{\Lam}[2]{\ensuremath{\lambda {#1}. {#2}}}
\newcommand{\App}[2]{\ensuremath{{#1}({#2})}}
\newcommand{\Let}[4]{\ensuremath{\mathtt{let}\ ({#1}, {#2}) = {#3}\ \mathtt{in}\ {#4}}}
\newcommand{\LetOne}[3]{\ensuremath{\mathtt{let}\ {#1} = {#2}\ \mathtt{in}\ {#3}}}
\newcommand{\Qinit}{\ensuremath{\mathtt{qinit}}}
\newcommand{\Uone}[1]{\ensuremath{U ({#1})}}
\newcommand{\Utwo}[1]{\ensuremath{U_2 ({#1})}}
\newcommand{\Measure}[1]{\ensuremath{\mathtt{measure}({#1})}}
\newcommand{\True}{\ensuremath{\mathsf{T}}}
\newcommand{\False}{\ensuremath{\mathsf{F}}}
\newcommand{\If}[3]{\ensuremath{\mathtt{if}\ {#1}\ \mathtt{then}\ {#2}\ \mathtt{else}\ {#3}}}
\newcommand{\Epair}[2]{\ensuremath{[{#1}, {#2}]}}
\newcommand{\Qref}[1]{\ensuremath{\mathtt{ref}[{#1}]}}
\newcommand{\Entangle}[3]{\ensuremath{\mathtt{entangle}_{#1}({#2}, {#3})}}
\newcommand{\EntangleOne}[2]{\ensuremath{\mathtt{entangle}_{#1}({#2})}}
\newcommand{\SplitA}[2]{\ensuremath{\mathtt{split}_{#1}({#2})}}
\newcommand{\Cast}[2]{\ensuremath{\mathtt{cast}_{#1}({#2})}}
\newcommand{\Qpurity}[2]{\ensuremath{{#1}^{#2}}}
\newcommand{\tPair}[2]{\ensuremath{{#1} \times {#2}}}
\newcommand{\tEpair}[2]{\ensuremath{{#1} \amp {#2}}}
\newcommand{\tPurity}[2]{\ensuremath{{#1}^{#2}}}
\newcommand{\hastypea}[3]{\ensuremath{{#1} \vdash_A {#2} : {#3}}}
\newcommand{\sempure}[2]{\ensuremath{{#2}; {#1}\ \mathsf{pure}}}
\newcommand{\sempuret}[3]{\ensuremath{{#2}; {#1}\ \mathsf{pure}_{#3}}}
\newcommand{\fail}[3]{\ensuremath{{#1}; {#2}\ {\not\mapsto}_{#3}}}
\newcommand{\SplitPoly}[2]{\ensuremath{\mathsf{Split}({#1}, {#2})}}
\newcommand{\CombinePoly}[2]{\ensuremath{\mathsf{Combine}({#1}, {#2})}}
\newcommand{\xmark}{\text{\ding{55}}}
\newcommand{\rulename}[1]{\textsc{#1}}
\newcommand{\pvar}[1]{\ensuremath{\mathtt{'}{#1}}}
\newcommand{\domain}[1]{\ensuremath{\mathrm{dom}\,{#1}}}
\newcommand{\trace}[1]{\ensuremath{\mathrm{tr}\,{#1}}}
\newcommand{\ptrace}[2]{\ensuremath{\mathrm{Tr}_{#1}({#2})}}
\newcommand{\SplitOp}{\ensuremath{\mathtt{split}}}
\newcommand{\CastOp}{\ensuremath{\mathtt{cast}}}
\newcommand{\SplitP}{\ensuremath{\SplitOp_\pure}}
\newcommand{\CastP}{\ensuremath{\CastOp_\pure}}
\newcommand{\SplitM}{\ensuremath{\SplitOp_\mixed}}
\newcommand{\CastM}{\ensuremath{\CastOp_\mixed}}
\newcommand{\Fun}[3]{\ensuremath{\mathtt{fun}\ {#1}\ {#2} = {#3}}}
\newcommand{\Main}[1]{\ensuremath{\mathtt{fun}\ \mathtt{main}\ () = {#1}}}
\newcommand{\ProgOk}[2]{\ensuremath{{#1} \vdash {#2}\ \mathsf{ok}}}
\newcommand{\denote}[1]{\ensuremath{\left\llbracket{#1}\right\rrbracket}}
\newcommand{\Denote}[3]{\ensuremath{\denote{#1}_{#2}\left({#3}\right)}}
\newcommand{\imeas}[4]{\ensuremath{{#1}; {#2} \Downarrow_{#3} {#4}}}
\newcommand{\vequiv}{\equiv}
\newcommand{\Refs}[1]{\ensuremath{\textsf{Refs}({#1})}}
\newcommand{\dyad}[1]{\ensuremath{\ket{#1}\hspace{-3.5pt}\bra{#1}}}
\newcommand{\vdenote}[1]{\ensuremath{\llparenthesis\,{#1}\,\rrparenthesis}}
\newcommand{\QrefM}[1]{\ensuremath{\mathtt{ref}^\star[{#1}]}}
\newcommand{\Comment}[1]{\ignorespaces}
\begin{document}

\begin{abstract}
\emph{Quantum programming languages} enable developers to implement algorithms for quantum computers that promise computational breakthroughs in classically intractable tasks.
Programming quantum computers requires awareness of \emph{entanglement}, the phenomenon in which measurement outcomes of qubits are correlated. Entanglement can determine the correctness of algorithms and suitability of programming patterns.

In this work, we formalize \emph{purity} as a central tool for automating reasoning about entanglement in quantum programs. A pure expression is one whose evaluation is unaffected by the measurement outcomes of qubits that it does not own, implying freedom from entanglement with any other expression in the computation.

We present \LangName{}, the first language that features a type system for sound reasoning about purity. The type system enables the developer to identify pure expressions using type annotations. \LangName{} also features purity assertion operators that state the absence of entanglement in the output of quantum gates. To soundly check these assertions, \LangName{} uses a combination of static analysis and runtime verification.

We evaluate \LangName{}'s type system and analyses on a benchmark suite of quantum programs in simulation, demonstrating that \LangName{} can express quantum algorithms, catch programming errors in them, and support programs that several languages disallow, while incurring runtime verification overhead of less than 3.5\%.
\end{abstract}

\maketitle

\section{Introduction}

\emph{Quantum programming languages}~\citep{qsharp, quipper, qwire, reqwire, liquid, ying-foundations, qml, qpl, quantum-lambda-calc, silq, clairambault20, staton17} allow programmers to utilize the computational primitives enabled by the quantum computers of today and tomorrow.
Algorithms for quantum computers offer computational breakthroughs in integer factorization~\citep{shor1997}, search~\citep{grover}, cryptographic and communication protocols~\citep{bennett, teleportation}, computational physics and chemistry~\citep{simulation-physics,simulation-chem}, and machine learning~\citep{biamonte2017}.

Quantum computation relies on the manipulation of \emph{quantum states} consisting of \emph{qubits}, the quantum analogs of classical data and bits.
A quantum state exists in a \emph{superposition}, a weighted sum over classical states. \emph{Measurement} causes a superposition to assume a classical state, with probability derived mathematically from the weight ascribed to that state in the sum. In the standard QRAM~\citep{knill} model, computations execute on a classical computer with access to a quantum device that supports initializing and operating on quantum states.

\subsection{Entanglement}

\emph{Entanglement}, the phenomenon of correlation between qubits,\footnotemark{} is critical to the quantum computational advantage~\citep{jozsa2003}.
Given a pair of entangled qubits, measuring one forces the other to assume a state consistent with the measurement.
Thus, the measurement outcome of one qubit causes operations on the other to potentially yield different behavior.%
\footnotetext{More precisely, the measurement outcomes of entangled quantum states have statistical correlations that cannot be explained by physical theories with local realism~\cite{nielsen_chuang_2010}.}

In many instances, reasoning about entanglement is either necessary or beneficial.
During debugging, determining whether two qubits are entangled at a particular point is a sanity check that an algorithm has been implemented correctly~\citep{huang2019}.
Another application is in the handling of temporary qubits used by several algorithms\footnote{Such as oracle functions in Grover's algorithm~\citep{grover} or modular multiplication in Shor's algorithm~\citep{shor1997}.} and programming patterns\footnote{Such as conditionally executing statements depending on some predicate over qubits~\citep{silq}.} that must be measured and deallocated so that physical qubits may be reused. If these temporaries remain entangled with the primary results of the computation, their measurement will cause the results to be incorrect~\citep{silq,reqwire,quipper}.

A further application is mitigation of information-leakage attacks on Shor's algorithm~\citep{azuma2017} and quantum bit-commitment schemes~\citep{lo1998} in which attackers introduce surreptitious entanglement that is rendered impossible if the sensitive state is verified to be not entangled.
Yet another application is to compiler analyses that leverage descriptions of which qubits are entangled to optimize usage of resources such as qubits in programs~\citep{haner2020}.

Entanglement is thus a key to reason about the correctness of an algorithm, verify the suitability of a programming pattern, and empower compiler analyses.
Though languages~\citep{silq,reqwire,amy} have recently been developed to reason about quantum phenomena such as reversibility of computation, prior work has yet to facilitate sound reasoning about entanglement in quantum programs.

\subsection{Purity}

We introduce the concept of the \emph{purity} of an expression, which enables reasoning about entanglement in quantum programs.
Operationally, quantum expressions contain references to qubits, analogous to pointers to classical memory, and evaluate by executing operations known as \emph{gates} on them. We say that an expression \emph{owns} a qubit when the final value to which it evaluates, such as a tuple of qubits, contains a reference to that qubit.

An expression is \emph{pure} if its evaluation is unaffected by the measurement outcome of any qubit it does not own, and \emph{mixed} otherwise.
Specifically, the qubits that a pure expression owns are only potentially entangled with each other and are \emph{separable}, or free of entanglement, from those in the remainder of the program. Pure and mixed expressions coincide with the established quantum mechanical definitions of pure and mixed states~\citep{nielsen_chuang_2010} in that evaluating a pure expression results in its owned qubits constituting a pure sub-state of the program state.

\subsection{Specification and Verification of Purity}

We present \LangName{}, the first language that enables programmers to specify that an expression is pure and soundly verify that the specification holds.
\LangName{} provides a type system to specify that an expression is of pure type,
purity assertions to declare the absence of entanglement, and a combination of static analysis and runtime verification to ensure that purity specifications hold.

\paragraph{Purity Types}
The type system enables a programmer to specify the type of a qubit or tuple of qubits as either pure -- unaffected by measurements of other qubits in the computation -- or mixed -- potentially affected by other qubits.
The type system reasons about purity by tracking potential entanglement between qubits, conservatively identifying that any quantum gate operating on two qubits may entangle them. We introduce a type of \emph{entangled pairs} of quantum data that are potentially entangled with each other, so that the outputs of multi-qubit quantum gates are of entangled tuple type. Elements projected from an entangled pair have mixed type to reflect their potential entanglement with each other.

\paragraph{Purity Assertions}
Because the application of a quantum gate may in fact remove the entanglement of a qubit with others in the computation, the language provides constructs to assert the absence of entanglement relationships in the program.
\LangName{} provides two purity assertions: \texttt{cast}, which asserts that an expression is pure, and \texttt{split}, which asserts that the components of a pure entangled pair are separable from each other and hence pure.
These two assertions work in concert to identify tuples of qubits that have no external entanglement, and then further assert the absence of entanglement between individual qubits within a tuple.

\paragraph{Purity Assertion Verification}
Despite progress in statically characterizing the effect of gates on entanglement~\citep{rand2021static}, in a general quantum program, verifying that a qubit is pure is at least as hard as simulating the program~\citep{gurvits,hayden}.
\LangName{} therefore relies on a combination of both static analysis and runtime verification to check purity assertions.

To verify the \texttt{cast} assertion, \LangName{} first executes a conservative static analysis to determine whether the set of qubits that an expression owns is separable from all others in the program.
To do so, the analysis identifies the qubits that ever share an entangled pair with any qubit in this set.
If all such qubits are also in this set, then the analysis concludes the expression is pure.

To verify \texttt{split}, \LangName{} tests at runtime whether two precise sets of qubits are separable from each other. These checks rely on a primitive that determines whether the runtime quantum state is separable or entangled. Similarly to the quantum assertions of \citet{huang2019,assertions,approximate-assertions}, we abort execution if the condition fails.

\paragraph{Separability Testing}
\LangName{}'s purity assertions rely on a primitive that tests whether the runtime quantum state is separable.
\citet{harrow-separability,walborn2006} have proposed implementations of such a test in hardware, which remain a topic of active research.
In this work, we implement \LangName{} on a state vector-based quantum simulator, examples of which include~\citet{qiskit,qpp}. We present a concrete implementation of separability testing in simulation based on the Schmidt decomposition~\citep{SchmidtZurTD} of quantum states.

\paragraph{Summary.} Together, \LangName{}'s purity types, purity assertions, and analysis techniques work to ensure the sound verification of purity specifications in quantum programs. The developer may use purity types to require pure expressions in computations, use purity assertion constructs to state conditions to be verified, and execute the static analysis and runtime verification on the assertions. The resulting guarantees of purity enable the programmer to debug algorithms, leverage idioms, and enjoy correctness guarantees in their programs.

\subsection{Contributions}

In this paper we present the following contributions:

\begin{itemize}[leftmargin=5.5mm]
    \item \emph{Purity}. We present the novel definition of \emph{pure} expressions in a quantum program, those that are unaffected by measurement outcomes of the remainder of the program. We formulate purity within the operational and denotational semantics of a functional quantum language.

	\item \emph{Purity Types}. We present a type system that identifies pure expressions, and prove that in it, expressions of pure type are in fact pure.

	\item \emph{Purity Assertions}. We present two types of purity assertions that state the absence of entanglement in the output of quantum gates: one stating that an expression is pure, and one stating that the two components of a pure entangled pair are individually pure.

	\item \emph{Purity Assertion Verification}. We present a static analysis and runtime verifications for the purity assertions, such that programs passing verification satisfy their purity specifications.

	\item \emph{Evaluation}. We implement \LangName{}, a language featuring purity types and assertions, in quantum simulation. We show that \LangName{} can express quantum algorithms and reject programming errors in them, that its runtime verification executes with overhead less than $3.5\%$, and that it can express semantically valid programs that existing languages disallow.
\end{itemize}

\noindent Our work introduces the powerful notion of purity to quantum programming, enabling sound reasoning for entanglement. Using \LangName{}, developers can recognize quantum entanglement not as a cognitive burden but rather as a clarifying tool to understanding the correctness of their programs.

\section{Background on Quantum Computation} \label{sec:background}

The following is an overview of key concepts in quantum computation relevant to this work and our notational choices. \citet{nielsen_chuang_2010} provide a comprehensive reference.

\subsection{Pure State Formalism}

We first define a \emph{pure quantum state} and the main formalism of quantum mechanics in this work.

\paragraph{Qubit}
The basic unit of quantum information is the \emph{qubit}, a linear combination $\gamma_0 \ket{0} + \gamma_1 \ket{1}$ known as a \emph{superposition}, where $\ket{0}$ and $\ket{1}$\footnote{The Dirac ket notation $\ket{\cdot}$ is customary in quantum mechanics. In our work, it simply denotes a quantum state.} are \emph{basis states} and $\gamma_0, \gamma_1 \in \mathbb{C}$ are \emph{amplitudes} satisfying $\abs{\gamma_0}^2 + \abs{\gamma_1}^2 = 1$ describing relative weights of basis states. Examples of qubits include the classical zero bit $\ket{0}$, classical one bit $\ket{1}$, and the superposition states ${(\ket{0} + \ket{1})}/{\sqrt{2}}$ and ${(i\ket{0} - \ket{1})}/{\sqrt{2}}$.

\paragraph{Quantum State}
A $2^n$-dimensional \emph{pure quantum state} $\ket{\qstate}$ is a superposition over $n$-bit strings. For example, ${(\ket{00}+\ket{11})}/{\sqrt{2}}$ is a quantum state over 2 qubits. Equivalently, we may represent any pure state as a \emph{state vector}, a length-$2^n$ vector of normalized complex amplitudes.\footnote{For example, the state vector corresponding to ${(\ket{00}+\ket{11})}/{\sqrt{2}}$ is $[\nicefrac{1}{\sqrt{2}}, 0, 0, \nicefrac{1}{\sqrt{2}}]^\top$, with the elements corresponding to amplitudes of $\ket{00}$, $\ket{01}$, $\ket{10}$, and $\ket{11}$ respectively.}

Multiple qubits form a quantum state system by means of the tensor product $\otimes$. Thus, the state $\ket{01}$ is equal to the product $\ket{0} \otimes \ket{1}$.
We use subscripts to denote the names of individual qubits or sets of qubits. For example, to denote a two-qubit system in which a qubit named $\alpha$ has state $\ket{0}$ and qubit $\beta$ has state $\ket{1}$, we write $\ket{0}_\alpha \otimes \ket{1}_\beta$.\footnote{In our notation, we treat the tensor product as commutative, so that $\ket{1}_\beta \otimes \ket{0}_\alpha$ refers to the same state as $\ket{0}_\alpha \otimes \ket{1}_\beta$, with the understanding that qubits are actually stored in some canonical order in the state.}

We define the \emph{empty state} $\ket{\cdot}$ to be the length-$2^0$ vector containing amplitude 1, which effectively describes a system of zero qubits. The tensor product of any $\ket{\qstate}$ and $\ket{\cdot}$ is accordingly $\ket{\qstate}$. The \emph{domain} of a state, $\domain{\ket{\qstate}}$, is the set of qubit names it contains.

\paragraph{Unitary Operators}
A $2^n$-dimensional \emph{unitary operator} is a linear operator on state vectors represented by an $2^n \times 2^n$ matrix $U$ that preserves inner products and whose inverse is its Hermitian adjoint. We denote the state produced by a unitary operator acting on qubit $\alpha$ in state $\ket{\qstate}$ by $U_\alpha\ket{\qstate}$.\footnote{In this work, we will work with one- and two-qubit unitary operators on named qubits with the understanding that they will be padded to all $n$ qubits in the system by tensor product with the identity matrix.}
In this work, we use single-qubit quantum gates such as:
\begin{itemize}
    \item X -- bit-flip (NOT) gate, which maps $\ket{0}$ to $\ket{1}$ and vice versa;
    \item Z -- phase-flip gate, which leaves $\ket{0}$ unchanged and maps $\ket{1}$ to $-\ket{1}$;
    \item H -- Hadamard gate, which maps $\ket{0}$ to ${(\ket{0} + \ket{1})}/{\sqrt{2}}$ and $\ket{1}$ to ${(\ket{0} - \ket{1})}/{\sqrt{2}}$.
\end{itemize}
We also use two-qubit gates, such as controlled-NOT (CNOT), controlled-Z (CZ), and SWAP. The controlled gates perform NOT or Z on their target qubit if their control qubit is in state $\ket{1}$.

The SWAP gate swaps two qubits in a quantum state, and inserting SWAP gates enables us to rename qubits at will. We use $\ket{\qstate}[\gamma / \alpha]$ to denote renaming qubit $\alpha$ to a new name $\gamma$ in $\ket{\qstate}$ by implicitly inserting SWAP gates. For example, $\ket{\qstate} = \ket{0}_\alpha \otimes \ket{1}_\beta$ becomes $\ket{\qstate}[\gamma / \alpha] = \ket{1}_\beta \otimes \ket{0}_\gamma$.

\paragraph{Measurement}
A quantum \emph{measurement} is a probabilistic operation over quantum states. When a qubit $\gamma_0 \ket{0} + \gamma_1 \ket{1}$ is measured,\footnote{We concern ourselves primarily with projective computational (i.e. 0/1) basis measurements, though our results can be generalized to other measurement forms.} the outcome is $\ket{0}$ with probability $\abs{\gamma_0}^2$ and $\ket{1}$ with probability $\abs{\gamma_1}^2$. Measuring a qubit within a larger quantum state will cause the entire state to probabilistically assume one of two outcomes. The outcome state after measurement is equal to the tensor product of the just-measured qubit in a basis state and the new state of the remainder of the system.

We denote the state produced by measurement of qubit $\alpha$ in state $\ket{\qstate}$ as $M_\alpha\ket{\qstate}$. To define measurement, we first rewrite the state into the form $\ket{\qstate} = \gamma_0 \ket{0}_\alpha \otimes \ket{\qstate_0}+ \gamma_1 \ket{1}_\alpha \otimes \ket{\qstate_1}$ where $\ket{\qstate_0}$ and $\ket{\qstate_1}$ are unique quantum states.\footnotemark{} Then, with probability $\abs{\gamma_0}^2$ we obtain $M_\alpha\ket{\qstate} = \ket{0}_\alpha \otimes \ket{\qstate_0}$, and with probability $\abs{\gamma_1}^2$ we obtain $M_\alpha\ket{\qstate} = \ket{1}_\alpha \otimes \ket{\qstate_1}$.
\footnotetext{Scaling a quantum state by a coefficient $e^{i\theta}$ known as a global phase factor produces another state indistinguishable from the original by any measurement. Thus, we define equality and uniqueness of states to be up to global phase.}

\paragraph{Pure and Mixed States}
Unlike the pure states defined so far,
the result of a measurement is a classical probability distribution over pure states, known as a \emph{mixed} state.

For example, ${(\ket{0} + \ket{1})}/{\sqrt{2}}$ is a pure state, whereas the distribution of $\ket{0}$ with probability $\nicefrac{1}{2}$ and $\ket{1}$ w.p. $\nicefrac{1}{2}$ is a mixed state.
Though measuring these two states immediately yields the same outcome distribution, they behave differently under unitary operators. For example, applying a Hadamard gate to ${(\ket{0} + \ket{1})}/{\sqrt{2}}$ always produces $\ket{0}$. Applying Hadamard to the mixed state instead produces another mixed state that when measured yields either 0 or 1 with equal probability.\footnote{In particular, this mixed state is ${(\ket{0} + \ket{1})}/{\sqrt{2}}$ with probability $\nicefrac{1}{2}$ and ${(\ket{0} - \ket{1})}/{\sqrt{2}}$ with probability $\nicefrac{1}{2}$.}

\paragraph{Entanglement and Separability}
A \emph{bipartite} quantum state $\ket{\qstate}$ is a state over the disjoint union of two qubit sets $A \sqcup B$. A bipartite state is \emph{separable} if it can be written as a tensor product of two states over each set, $\ket{\qstate_A} \otimes \ket{\qstate_B}$, or \emph{entangled} otherwise.

For example, the two-qubit state ${(\ket{00} + \ket{01} + \ket{10} + \ket{11})}/{2}$ is separable because it is the product of two copies of ${(\ket{0} + \ket{1})}/{\sqrt{2}}$. By contrast, the \emph{Bell state}~\citep{bell} ${(\ket{00} + \ket{11})}/{\sqrt{2}}$ is entangled because it cannot be written as the product of two single-qubit states.

Given the bipartite state $\ket{\qstate}$, measuring the qubits of $B$ will have different consequences for the remaining state of $A$, depending on whether $\ket{\qstate}$ is separable. If $\ket{\qstate}$ is separable, the measurement will leave $A$ in a pure state, and if it is entangled, measurement will leave $A$ in a mixed state. For example, measuring one of the qubits in the entangled Bell state ${(\ket{00} + \ket{11})}/{\sqrt{2}}$ results in the remaining qubit taking on a mixed state, $\ket{0}$ with probability $\nicefrac{1}{2}$ and $\ket{1}$ with probability $\nicefrac{1}{2}$.

\paragraph{Schmidt Decomposition}
Given a bipartite pure state $\ket{\psi}$ over qubit sets $A \sqcup B$, we may compute its unique \emph{Schmidt decomposition}~\citep{SchmidtZurTD}, $\ket{\psi} = \sum_j \lambda_j \ket{{\psi_A}_j} \otimes \ket{{\psi_B}_j}$ where $\ket{{\psi_A}_j}$ and $\ket{{\psi_B}_j}$ are states of $A$ and $B$ and $\lambda_j$ are positive real \emph{Schmidt coefficients} satisfying $\sum_j \lambda_j^2 = 1$. The Schmidt decomposition provides a criterion for separability -- $\ket{\psi}$ is separable if and only if it has only one nonzero Schmidt coefficient.

\subsection{Mixed State Formalism}

We next describe mixed states, a more expressive alternative formalism for quantum computation. Mixed states model statistical ensembles of states arising over multiple program executions.

\paragraph{Density Matrix}
Given a $2^n$-dimensional state vector $\ket{\qstate}$, we use $\dyad{\qstate}$ to denote its outer product with itself, which is a $2^n \times 2^n$ matrix.
A mixed state is mathematically represented as a \emph{density matrix} $\rho$, a linear combination $\rho = \sum_j p_j \dyad{\psi_j}$ where each $p_j > 0$ and $\rho$ is positive semidefinite.
The \emph{domain} of a density matrix, $\domain{\rho}$, is the set of qubits contained in each $\ket{\psi_j}$.

A density matrix is \emph{normalized} if $\sum_j p_j = 1$ and $\trace{\rho} = 1$.
\emph{Partial density matrices}~\citep{ying-foundations} relax the conditions to $\sum_j p_j \le 1$ and $\trace{\rho} \le 1$ and can be added to form normalized density matrices.

\paragraph{Unitary Operators}
A unitary quantum operator $U$ applies to a density matrix $\rho$ by matrix conjugation, so that the resulting matrix is $U\rho U^\dagger$ where $U^\dagger$ is the Hermitian adjoint of $U$.

\paragraph{Measurement}
A quantum measurement is represented by a set of projections $P_j$ corresponding to possible outcomes, where $\sum_j P_j = I$. Outcome $P_j$ occurs with probability $\trace{(\rho P_j)}$, and results in normalized density matrix ${P_j\rho P_j}/{\trace{(\rho P_j)}}$. $P_j \rho P_j$ is its probability-weighted partial density matrix.

The principle of \emph{deferred measurement} states that any computation that conditionally executes gates based on the measurement of a qubit produces the same mixed state as one that uses quantum conditioned gates and defers measurement until the end of the computation.\footnote{e.g. uses a CNOT gate rather than choosing whether to execute a NOT gate based on a classical measurement outcome.}

\paragraph{Product States and Separability}
We can construct a composite of mixed states using the tensor product $\rho_1 \otimes \rho_2$. We use the notation $\rho[\gamma / \alpha]$ to rename qubit $\alpha$ to $\gamma$ in $\rho$ by implicitly inserting SWAP gates.
A mixed state $\rho$ is \emph{simply separable} if there exist $\rho_1$ and $\rho_2$ where $\rho = \rho_1 \otimes \rho_2$.

\paragraph{Partial Trace}
The \emph{partial trace} of $\rho$ over $A$, $\ptrace{A}{\rho}$, is the unique linear operator satisfying:
\[ \ptrace{A}{\dyad{\psi_A} \otimes \dyad{\psi_B}} = \trace{(\dyad{\psi_A})}\dyad{\psi_B} \]
where $A = \domain{\ket{\psi_A}}$. In the special case where $\rho$ is simply separable as $\rho_A \otimes \rho_B$ where $\domain{\rho_A} = A$ and $\domain{\rho_B} = B$, we have $\ptrace{B}{\rho} = \rho_A$ and $\ptrace{A}{\rho} = \rho_B$.

\paragraph{Purity}
A mixed state $\rho$ is \emph{pure} if it is not an ensemble of more than one state: $\rho = \dyad{\qstate}$ for some pure state $\ket{\psi}$. The \emph{rank test} states that $\rho$ is pure if and only if $\trace{(\rho^2)} = (\trace{\rho})^2 = 1$. We similarly have that a partial density matrix $\rho = p \dyad{\qstate}$ if and only if $\trace{(\rho^2)} = (\trace{\rho})^2$ by linearity of trace.

\section{Example} \label{sec:examples}

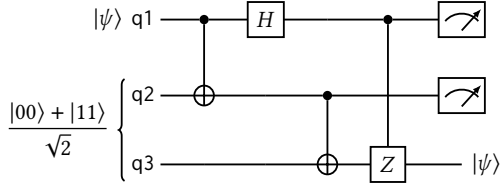
\begin{figure}
\resizebox{0.5\textwidth}{!}{
\begin{quantikz}
\lstick{$\ket{\psi}$} \mathtt{q1}\ & \ctrl{1} & \gate{H} & \qw & \ctrl{2} & \meter{} \\
\lstick[wires=2]{$\dfrac{\ket{00} + \ket{11}}{\sqrt{2}}$} \mathtt{q2}\ & \targ{} & \qw & \ctrl{1} & \qw & \meter{} \\
\mathtt{q3}\ & \qw & \qw & \targ{} & \gate{Z} & \qw \rstick{$\ket{\psi}$}
\end{quantikz}
}
\caption{Teleportation protocol with deferred measurement expressed as a quantum circuit. The inputs are \texttt{q1}, the qubit to be teleported, and a Bell pair of two qubits. The circuit applies conditional-NOT (CNOT) and conditional-Z (CZ) gates to \texttt{q3}. It measures two qubits and outputs one with the teleported state of \texttt{q1}.} \label{fig:circuit}
\end{figure}

We illustrate the value of purity and \LangName{} with the protocol of \emph{quantum teleportation}~\citep{teleportation}, a demonstration of the power of entanglement and a building block for techniques such as gate teleportation~\citep{Gottesman_1999}.
The protocol transmits the information stored in one qubit to a receiver an arbitrary distance away by transferring only two classical bits of information. Though the original protocol uses measurement-conditioned gates to emphasize classical information exchange, we examine a variant~\citep{miller,kumar} that instead uses quantum-conditioned gates and defers all measurement to the end of the program.

\Cref{fig:circuit} presents a quantum circuit for the deferred-measurement variant of teleportation that accepts as input a qubit \texttt{q1} to be teleported.\footnotemark{} We assume a pair of qubits $(\mathtt{q2}, \mathtt{q3})$ exist in a Bell state (\Cref{sec:background}).
The circuit entangles \texttt{q1} and \texttt{q2} by a CNOT gate, applies a Hadamard (H) gate to \texttt{q1}, and applies CNOT and CZ gates in succession on \texttt{q3}.
Finally, the circuit measures \texttt{q1} and \texttt{q2} and outputs \texttt{q3}, which now contains the original state of \texttt{q1}.
\footnotetext{For the purpose for demonstrating our results, understanding the rationale behind this circuit is not necessary; we refer the reader to~\citet{nielsen_chuang_2010} for a detailed explanation.}

\Cref{fig:teleport-program} presents a program implementing this circuit, as a function accepting a qubit \texttt{q1} and returning the teleported output, where the helper \texttt{bell\_pair} allocates a Bell pair.

\begin{figure}
\begin{lstlisting}[numbers=left]
fun teleport (q1 : qubit) : qubit =
  let (q2 : qubit, q3 : qubit) = bell_pair () in
  let (q1 : qubit, q2 : qubit) = CNOT (q1, q2) in
  let q1 : qubit = H (q1) in
  let (q2 : qubit, q3 : qubit) = CNOT (q2, q3) in
  let (q1 : qubit, q3 : qubit) = CZ (q1, q3) in
  let _ : bool * bool = measure (q1, q2) in q3
\end{lstlisting}
\caption{Implementation of the teleportation circuit of~\Cref{fig:circuit} as a quantum program.} \label{fig:teleport-program}
\end{figure}

\subsection{Entanglement and Purity}

The specification of the teleportation protocol is that the final state of \texttt{q3} is the same as the initial state of \texttt{q1}.
Thus, a central property that ensures the correctness of this program is that \texttt{q3} is \emph{pure} -- free of entanglement with \texttt{q1} and \texttt{q2} and hence unaffected by the measurements of these qubits.

Consider instead if one were to replace the CZ gate on line 6 of~\Cref{fig:teleport-program} with a different gate that causes \texttt{q3} to remain entangled with \texttt{q1} and \texttt{q2}, for example a CNOT gate. After replacing CZ with CNOT, \texttt{q3} would be affected by the measurements on line 7.

Because \texttt{q3} is now entangled with \texttt{q1} at the time of measurement, the program no longer satisfies the specification that \texttt{q3} assumes the original state of \texttt{q1}. Instead, if \texttt{q1} initially has the state
${(\ket{0} + \ket{1})}/{\sqrt{2}}$, then \texttt{q3} will assume a different state:
\[
  \begin{cases}
    \frac{\ket{0} + \ket{1}}{\sqrt{2}} & \text{with probability } \frac{1}{2} \\
    \frac{-\ket{0} + \ket{1}}{\sqrt{2}} & \text{with probability } \frac{1}{2}
  \end{cases}
\]
This probability distribution over pure states is a mixed state (\Cref{sec:background}) that stems from different measurement outcomes of \texttt{q1} and \texttt{q2}. By contrast, pure expressions must always evaluate to unique final states not dependent on the measurements of other qubits.

\subsection{Purity Types}

\begin{figure}
\vspace{-1em}
\begin{lstlisting}[numbers=left]
fun teleport (q1 : qubit<P>) : qubit<M> = (* mixed type *)
  let q23 : (qubit & qubit)<P> = bell_pair () in
  let (q2 : qubit<M>, q3 : qubit<M>) = q23 in
  let q12 : (qubit & qubit)<M> = CNOT (q1, q2) in
  let (q1 : qubit<M>, q2 : qubit<M>) = q12 in
  let q1 : qubit<M> = H (q1) in
  let (q2 : qubit<M>, q3 : qubit<M>) = CNOT (q2, q3) in
  let (q1 : qubit<M>, q3 : qubit<M>) = CZ (q1, q3) in
  let _ : bool * bool = measure (q1, q2) in q3
\end{lstlisting}
\caption{Teleportation program in \LangName{}, using purity types. The return type of the function is currently mixed.}
\label{fig:teleport-reject}
\end{figure}

\Cref{fig:teleport-reject} presents the teleportation program from \Cref{fig:teleport-program}, but written in \LangName{} with type annotations for purity. In a \LangName{} program, every quantum expression is of a type that is either pure or mixed.
Pure expressions are those that are unaffected by the measurement of other qubits whereas mixed expressions are those that may be affected.

We say that an expression \emph{owns} a qubit if it evaluates to a value that contains a reference to the qubit.
If a qubit owned by an expression is entangled with another unowned qubit, then the measurement outcome of the unowned qubit inevitably affects the state of the owned qubit. We describe entanglement in the type system using a type of \emph{entangled pairs}, denoted by the type constructor \texttt{\&}, such as on line 2 of~\Cref{fig:teleport-reject}.
An entangled pair stipulates that its two components potentially share entanglement.

Operations such as \texttt{bell\_pair} (line 2) and \texttt{CNOT} (line 4) return entangled pairs, indicating that two qubits are potentially entangled.
Projection from an entangled pair results in an expression of mixed type.
For example, when the program projects \texttt{q2} and \texttt{q3} from the entangled pair \texttt{q23} on line 3, each qubit has mixed type because they may be entangled with each other.
Finally, \texttt{q3} has mixed type according to the rules of the type system, denoting that this implementation may not satisfy its specification of returning a pure qubit.

\subsection{Purity Assertions}

\begin{figure}
\vspace{-0.5em}
\begin{lstlisting}[numbers=left]
fun teleport (q1 : qubit<P>) : qubit<P> = (* pure type *)
  let q23 : (qubit & qubit)<P> = bell_pair () in
  let (q2 : qubit<M>, q3 : qubit<M>) = q23 in
  let (q1 : qubit<M>, q2 : qubit<M>) = CNOT (q1, q2) in
  let q1 : qubit<M> = H (q1) in
  let (q2 : qubit<M>, q3 : qubit<M>) = CNOT (q2, q3) in
  let (q1 : qubit<M>, q3 : qubit<M>) = CZ (q1, q3) in
  let q123 : ((qubit & qubit) & qubit)<M> = ((q1, q2), q3) in
  (* assert ((q1, q2), q3) is pure; check statically *)
  let q123 : ((qubit & qubit) & qubit)<P> = cast<P>(q123) in
  (* assert q3 is separable from (q1, q2); check dynamically *)
  let (q12 : (qubit & qubit)<P>, q3 : qubit<P>) = split<P>(q123) in
  let _ : bool * bool = measure (q12) in q3
\end{lstlisting}
\caption{Teleportation example with purity types and assertions so that its return type is refined to be pure.}
\label{fig:teleport-fixed}
\end{figure}

\Cref{fig:teleport-fixed} presents an implementation of the \texttt{teleport} function that instead soundly returns a pure output by leveraging the purity assertions of Twist.
The program performs two steps to verify that \texttt{q3} is not entangled with any other qubits in the program.

In the first step, the program forms an entangled pair containing all three qubits on line 8 and then on line 10 uses \texttt{cast<P>}, a \emph{purifying-cast assertion} that states that this triple is pure and has no entanglement with the rest of the program.

In the second step, on line 12, the program uses \texttt{split<P>}, a \emph{purifying-split assertion} that states that the two components of the triple, one containing $\mathtt{q1}$ and $\mathtt{q2}$, and the other \texttt{q3}, are not entangled with each other, confirming that they are both individually pure.

\subsection{Purity Assertion Verification}

\LangName{} uses a static analysis to verify \texttt{cast}, whereas it verifies \texttt{split} at runtime.

\paragraph{Verifying \texttt{cast}}
The static analysis determines that an expression is pure if it contains either zero or both components of every entangled pair created by the execution of the program.
In~\Cref{fig:teleport-fixed}, \LangName{} determines that \texttt{q123} is not entangled with the rest of the program, because it contains both components of the pure entangled pair \texttt{q23} created by \texttt{bell\_pair} on line 2, along with the pure function argument \texttt{q1}, all of which never interact with any other qubits.

\paragraph{Verifying \texttt{split}}
In contrast to using static analysis as with \texttt{cast}, \LangName{} verifies \texttt{split} at runtime, because statically checking separability necessitates direct reasoning about the quantum semantics of gates, which to the best of our knowledge is at least as hard as simulation except for circuits constructed from restricted classes of gates~\citep{hayden}.

Specifically, \LangName{} determines whether the two components of a pure entangled pair are entangled with each other via a runtime separability test.
Our implementation of this test obtains the simulated state vector of the program at line 12, and determines whether it is separable along the components of the pair by means of the Schmidt decomposition (\Cref{sec:background}).
Because \texttt{q3} is in fact pure in this program, the assertion always succeeds.

\begin{figure}
\begin{lstlisting}[numbers=left]
fun teleport (q1 : qubit<P>) : qubit<P> =
  let q23 : (qubit & qubit)<P> = bell_pair () in
  let (q2 : qubit<M>, q3 : qubit<M>) = q23 in
  let (q1 : qubit<M>, q2 : qubit<M>) = CNOT (q1, q2) in
  let q1 : qubit<M> = H (q1) in
  let (q2 : qubit<M>, q3 : qubit<M>) = CNOT (q2, q3) in
  let (q1 : qubit<M>, q3 : qubit<M>) = CZ (q1, q3) in
  let _ : bool * bool = measure (q1, q2) in
  cast<P>(q3) (* direct assertion of q3 as pure *)
\end{lstlisting}
\caption{Teleportation program that asserts the purity of \texttt{q3} with a single purity assertion on line 9. Checking the assertion that \texttt{q3} is pure after the measurements of \texttt{q1} and \texttt{q2} requires reasoning on mixed states, which is inefficient in practice. Consequently, our static analysis rejects this program.}
\label{fig:teleport-single}
\end{figure}

\paragraph{Efficiency}
We have designed \texttt{split} and \texttt{cast} to be used in concert to make their verification more efficient. Consider instead the program in~\Cref{fig:teleport-single} that gives \texttt{q3} pure type, but does not use \texttt{split} and only uses \texttt{cast} on line 9.
While this program type checks and has a valid semantic meaning, the static analysis rejects this program by virtue of the fact that the operand of the \texttt{cast} (\texttt{q3}) does not contain all potentially entangled qubits (\texttt{q1} and \texttt{q2}) as was the case in \Cref{fig:teleport-fixed}.

The reason we have designed the static analysis to reject this program is because the measurements of \texttt{q1} and \texttt{q2} (line 8) imply that \texttt{q3} could be in a mixed state, if \texttt{q3} were in fact entangled with either \texttt{q1} or \texttt{q2}.
It is not possible to dynamically verify purity for a qubit within a mixed state by separability tests on the runtime quantum state alone. To perform this check, \LangName{} would instead have to simulate the full density matrix of the program. However, density matrix simulation is inefficient compared to state vector simulation. In a quantum simulator, density matrices have quadratic space overhead over state vectors, and standard simulators such as~\citet{qiskit} support only half as many qubits in mixed-state simulation.%
\footnote{In quantum mechanics, mixed states have more general definitions of entanglement and separability than pure states. We show in~\Cref{sec:app-entanglement} that the special case of \emph{simple separability} is appropriate for reasoning about purity in \LangName{}. This is fortunate, as testing for more general separability is an open problem for non-trivially-small cases~\citep{horodecki1996,harrow-separability}, and was proved by~\citet{gurvits} to be \textbf{NP}-hard in the dimension of the density matrix.}\nowidow

By only performing runtime verification of \texttt{split}, whose argument is of pure type, \LangName{} may exploit the available relative efficiency of testing separability of a runtime pure quantum state.
Put together, \LangName{} balances both static analysis to identify pure types where statically practical -- discharging the \texttt{cast} operator in \Cref{fig:teleport-fixed} -- as well as runtime verification to dynamically enforce purity where it too can be done practically -- the \texttt{split} operator on a pure quantum state.

\subsection{Discarding Pure Values}

\LangName{}'s purity types enable us to write the teleportation program even more concisely than~\Cref{fig:teleport-fixed}. Line 13 of~\Cref{fig:teleport-fixed} explicitly measures \texttt{q12}, a value the program no longer needs. Instead, in~\Cref{fig:teleport-discard} we remove this measurement operation and simply discard \texttt{q12}.

The reason that discarding unused qubits is a potential point of contention is the deferred measurement principle (\Cref{sec:background}). The principle states that discarding a qubit or allowing it to leave scope has the same effect as measuring it at its last point of use, meaning that discarding a value is always akin to measuring it. Measurement may in general affect the states of values elsewhere in the program, which would be an unintuitive consequence if it occurred when we implicitly discarded a variable.

However, the measurement outcomes of a pure expression cannot affect the states of qubits it does not own, and thus the outcome of the measurement of \texttt{q12} cannot have any impact on the remaining computation. In general, \LangName{} supports implicitly measuring a pure expression when it is discarded and goes out of scope.

Existing languages such as~\citet{qwire, qpl} forbid programs from implicitly discarding quantum data.
The language Silq~\citep{silq} can automatically \emph{uncompute} certain temporary qubits, restoring their value to zero and obviating the need to explicitly measure them. However, given a direct translation that eliminates the purity annotations from the program in~\Cref{fig:teleport-fixed}, Silq will not uncompute \texttt{q12} unless the user modifies the program. The reason why Silq does not automatically uncompute \texttt{q12} when it drops from scope is that Silq does not automatically uncompute qubits affected by gates with phase-level effects such as Hadamard, as is the case for \texttt{q12} in \texttt{teleport}.

To write this example in Silq, the developer may either 1) manually invoke a \texttt{measure} operator on \texttt{q12}, 2) manually invoke a \texttt{forget} operator on \texttt{q12}, or 3) apply the \texttt{qfree} annotation to the sequence of gates that produced it. In the first two cases, neither \texttt{measure} nor \texttt{forget} guarantees that its argument is separable from the remaining computation, meaning that invoking either operator may in general leave the computation in a mixed state if the developer does not accurately reason about the purity of the computation. With the third strategy, applying the \texttt{qfree} annotation invokes the requirement of the Silq type system that \texttt{qfree} functions do not use gates such as Hadamard that have phase effects, disqualifying teleportation and programs such as the efficient quantum adder of~\citet{draper2000addition}. Silq could in principle be extended with an unsafe cast to \texttt{qfree} that would permit this example and other programs with phase effects, with the proviso that the developer must themselves accurately reason about the safety of such casts in general.

By contrast, \LangName{} leverages the guarantee made by its purity type system that the discarded value cannot impact the remaining computation, resulting in a concise and safe program.

\begin{figure}
\begin{lstlisting}[numbers=left]
fun teleport (q1 : qubit<P>) : qubit<P> =
  let (q2 : qubit<M>, q3 : qubit<M>) = bell_pair () in
  let (q1 : qubit<M>, q2 : qubit<M>) = CNOT (q1, q2) in
  let q1 : qubit<M> = H (q1) in
  let (q2 : qubit<M>, q3 : qubit<M>) = CNOT (q2, q3) in
  let (q1 : qubit<M>, q3 : qubit<M>) = CZ (q1, q3) in
  let q123 : ((qubit & qubit) & qubit)<M> = ((q1, q2), q3) in
  let (q12 : (qubit & qubit)<P>, q3 : qubit<P>) = split<P>(cast<P>(q123)) in
  q3 (* safely discard pure expression q12 *)
\end{lstlisting}
\caption{Concise teleportation program in \LangName{} that implicitly discards a pure value \texttt{q12} on line 9.} \label{fig:teleport-discard}
\end{figure}

\section{\CoreLangHeading{} Language} \label{sec:core-language}

To formally define evaluation and purity in quantum programs, we present \CoreLang{}, a small quantum language. \CoreLang{} is a functional language featuring classical control and quantum data, in the style of the linear quantum $\lambda$-calculus of~\citet{quantum-lambda-calc}.

In this section, we present \CoreLang{}'s syntax and operational semantics. In~\Cref{sec:semantic-properties}, we develop its semantic properties, including purity, in terms of executions in the operational semantics. In~\Cref{sec:core-denotational}, we present \CoreLang{}'s denotational semantics, which provides a more concise definition of purity useful to formulate \LangName{}'s purity assertions. In~\Cref{sec:language}, we develop the language \LangName{} by adding to \CoreLang{} the purity assertions that reason about purity within the language itself.

\subsection{Syntax}

The syntax of \CoreLang{} consists of types, expressions, and programs.
\begin{align*}
    \textsf{Type}\ \type \Coloneqq{} & \tBool \mid \tPair{\type_1}{\type_2} \mid \type_1 \to \type_2 \mid \qubit \\
    \textsf{Expression}\ e \Coloneqq{} & x \mid f \mid \App{e_1}{e_2} \mid (e_1, e_2) \mid \Let{x}{y}{e_1}{e_2} \mid \If{e}{e_1}{e_2} \mid \True \mid \False \\
    \mid{} & \Qinit\ () \mid \Uone{e} \mid \Utwo{e} \mid \Measure{e} \mid \Qref{\alpha} \\
    \textsf{Program}\ m \Coloneqq{} & \Fun{f}{x}{e}; m \mid \Main{e}
\end{align*}
\CoreLang{} features the types $\type$ of classical Booleans, pairs, functions, and qubits.
Its expressions $e$ include the classical constructs of variables $x$, functions $f$, applications, pairs, and \texttt{let}\footnote{We also use \LetOne{x}{e_1}{e_2} as syntactic sugar for \Let{x}{\_}{(e_1, \True)}{e_2} (where \True{} is arbitrarily chosen).} and \texttt{if}-expressions. Boolean literals, which we denote with metavariable $b$, are \True{} and \False{}.

Other expressions interact with the quantum state. The operator $\Qinit\ ()$ creates a new qubit initialized to $\ket{0}$. The operator $\Uone{e}$ applies a single-qubit unitary to qubit $e$, and $\Utwo{e}$ applies a two-qubit unitary gate to a pair $e$ of two qubits.\footnote{For simplicity, we do not represent larger gates, which can be decomposed into single- and two-qubit operators using constructions like~\citet{Kitaev_1997}.}
The operator $\Measure{e}$ performs a quantum measurement of $e$, returning the classical outcome. The last operator \Qref{\alpha} is a \emph{qubit reference} that only appears in intermediate evaluations, as \CoreLang{} programs do not expose concrete qubit names. Finally, every program $m$ is a sequence of function declarations followed by a \texttt{main} function.

\subsection{Type System}

\Cref{fig:core-types} presents the judgment $\hastype{\ctx}{\rctx}{e}{\type}$ that assigns the expression $e$ the type $\type$ given the \emph{context} $\ctx$ and the \emph{qubit context} $\rctx$.
A context $\Gamma$ maps variables to types, and a qubit context $\rctx$ is a set of allocated qubits. We define the \emph{classical types} to be Booleans, functions, and pairs of classical types. Both contexts are linear, except for classical types, which may be freely duplicated or discarded.\footnotemark{} To ensure that qubits are not duplicated, we use the disjoint set union $\sqcup$, defined only when its arguments are disjoint.
\footnotetext{To prevent qubits from being duplicated or discarded due to the quantum no-cloning~\citep{Wootters1982ASQ} and no-deleting theorems~\citep{Pati2000ImpossibilityOD}, the type system of \CoreLang{} does not allow the structural rules of contraction or weakening for quantum data.}
We populate the initial context with types of function declarations using the judgment $\ProgOk{\ctx}{m}$ (defined in~\Cref{sec:app-semantics}), stating that a program $m$ is well-formed.

The typing rules for variables, functions, applications, pairs, and \texttt{let}-expressions are standard as in a linear $\lambda$-calculus.
We define the \emph{quantum types} to be qubits and pairs of quantum types.
The rule for \texttt{if} imposes a condition that the type of the branches is a quantum type, which will simplify our presentation. Booleans have Boolean type and $\Qinit\ ()$ has qubit type.
Unitary operators operate on one qubit or a pair of qubits, and have the same type as their argument.
Measurement of a qubit results in a classical Boolean type.\footnote{We also allow measurement of tuples as syntactic sugar for sequentially measuring each component.} Qubit references have qubit type.
\begin{figure}
\resizebox{\textwidth}{!}{
\parbox{1.1\textwidth}{
\begin{mathpar}
    \inferrule[T-Var]{\vphantom{\ctx}}{\hastype{x : \type}{\emptyset}{x}{\type}}

    \inferrule[T-Fun]{\vphantom{\ctx}}{\hastype{f : \type \to \type'}{\emptyset}{f}{\type \to \type'}}

    \inferrule[T-App]{\hastype{\ctx_1}{\rctx_1}{e_1}{\type_1 \to \type_2} \\ \hastype{\ctx_2}{\rctx_2}{e_2}{\type_1}}{\hastype{\ctx_1, \ctx_2}{\rctx_1 \sqcup \rctx_2}{\App{e_1}{e_2}}{\type_2}}

    \inferrule[T-Pair]{\hastype{\ctx_1}{\rctx_1}{e_1}{\type_1} \\ \hastype{\ctx_2}{\rctx_2}{e_2}{\type_2}}{\hastype{\ctx_1, \ctx_2}{\rctx_1 \sqcup \rctx_2}{(e_1, e_2)}{\tPair{\type_1}{\type_2}}}

    \inferrule[T-Let]{\hastype{\ctx_1}{\rctx_1}{e_1}{\tPair{\type_1}{\type_2}} \\ \hastype{\ctx_2, x : \type_1, y : \type_2}{\rctx_2}{e_2}{\type}}{\hastype{\ctx_1, \ctx_2}{\rctx_1 \sqcup \rctx_2}{\Let{x}{y}{e_1}{e_2}}{\type}}

    \inferrule[T-If]{\hphantom{\hastype{\ctx_1}{\rctx_1}{e}{\tBool}} \\ \text{$\type$ is a quantum type} \\\\ \hastype{\ctx_1}{\rctx_1}{e}{\tBool} \\ \hastype{\ctx_2}{\rctx_2}{e_1}{\type} \\ \hastype{\ctx_2}{\rctx_2}{e_2}{\type}}{\hastype{\ctx_1, \ctx_2}{\rctx_1 \sqcup \rctx_2}{\If{e}{e_1}{e_2}}{\type}}

    \inferrule[T-Bool]{\vphantom{\ctx}}{\hastype{\cdot}{\emptyset}{b}{\tBool}}

    \inferrule[T-Qinit]{\vphantom{\ctx}}{\hastype{\cdot}{\emptyset}{\Qinit\ ()}{\qubit}} \\

    \inferrule[T-U1]{\hastype{\ctx}{\rctx}{e}{\qubit}}{\hastype{\ctx}{\rctx}{\Uone{e}}{\qubit}}

    \inferrule[T-U2]{\hastype{\ctx}{\rctx}{e}{\tPair{\qubit}{\qubit}}}{\hastype{\ctx}{\rctx}{\Utwo{e}}{\tPair{\qubit}{\qubit}}}

    \inferrule[T-Measure]{\hastype{\ctx}{\rctx}{e}{\qubit}}{\hastype{\ctx}{\rctx}{\Measure{e}}{\tBool}}

    \inferrule[T-Ref]{\vphantom{\ctx}}{\hastype{\cdot}{\{\alpha\}}{\Qref{\alpha}}{\qubit}}
\end{mathpar}
}
}
\caption{\CoreLang{} type system.} \label{fig:core-types}
\end{figure}

\subsection{Operational Semantics}

\Cref{fig:core-dynamics} presents an operational semantics for \CoreLang{} as a probabilistic transition system over \emph{program states} -- pairs of quantum states $\ket{\qstate}$ and classical expressions $e$.\footnote{Any program state can take exactly one or two transitions, with probabilities adding to one. For a formal model of probability in our semantics, see the probabilistic reduction system introduced by \citet{quantum-lambda-calc}.} The semantics depend on particular measurement outcomes described by an \emph{outcome map} $O$. An outcome map is a set of pairs $(\alpha, b)$, each meaning that qubit $\alpha$ was measured, and classical outcome $b$ was observed for it.

\paragraph{Judgments}
The value judgment $\val{v}$ states that functions, Boolean literals, qubit references, and pairs of values are values.
The step judgment $\stepo{\ket{\qstate}}{e}{O}{p}{\ket{\qstate'}}{e'}$ states that the expression $e$ under state $\ket{\qstate}$ steps to $e'$ and new state $\ket{\qstate'}$ after observing the measurement outcomes in $O$ with probability $p$.
The evaluation judgment $\stepostar{\ket{\qstate}}{e}{O}{p}{\ket{\qstate'}}{v}$ states that expression $e$ under state $\ket{\qstate}$ evaluates to a value $v$ and state $\ket{\qstate'}$ having observed outcomes $O$ with probability $p$.

\paragraph{Functions}
The semantics makes use of an unchanging global function context $\phi$ that maps function names $f$ to definitions $\Lam{x}{e}$. A program executes by collecting all function definitions into $\phi$ and then evaluating the body of the \texttt{main} function until it reaches a value.

\begin{figure}
\resizebox{\textwidth}{!}{
\parbox{1.1\textwidth}{
\begin{mathpar}
    \inferrule[S-App]{\phi(f) = \Lam{x}{e} \\ \val{e'}}{\stepo{\ket{\qstate}}{\App{f}{e'}}{\emptyset}{1}{\ket{\qstate}}{[e'/x]e}}

    \inferrule[S-Let]{\val{e_1} \\ \val{e_2}}{\stepo{\ket{\qstate}}{\Let{x}{y}{(e_1, e_2)}{e'}}{\emptyset}{1}{\ket{\qstate}}{[e_1, e_2/x, y]e'}}

    \inferrule[S-Qinit]{\alpha \text{ fresh in } \ket{\qstate}}{\stepo{\ket{\qstate}}{\Qinit\ ()}{\emptyset}{1}{\ket{\qstate} \otimes \ket{0}_\alpha}{\Qref{\alpha}}}

    \inferrule[S-U1]{\vphantom{\ctx}}{\stepo{\ket{\qstate}}{\Uone{\Qref{\alpha}}}{\emptyset}{1}{U_\alpha\ket{\qstate}}{\Qref{\alpha}}}

    \inferrule[S-U2]{\vphantom{\ctx}}{\stepo{\ket{\qstate}}{\Utwo{\Qref{\alpha}, \Qref{\beta}}}{\emptyset}{1}{U_{\alpha, \beta}\ket{\qstate}}{(\Qref{\alpha}, \Qref{\beta})}} \\

    \inferrule[S-IfT]{\vphantom{\ctx}}{\stepo{\ket{\qstate}}{\If{\True}{e_1}{e_2}}{\emptyset}{1}{\ket{\qstate}}{e_1}}

    \inferrule[S-IfF]{\vphantom{\ctx}}{\stepo{\ket{\qstate}}{\If{\False}{e_1}{e_2}}{\emptyset}{1}{\ket{\qstate}}{e_2}}

    \inferrule[S-MeasureT]{M_\alpha\ket{\qstate} = \ket{1}_\alpha \otimes \ket{\qstate'} \text{ w.p. } p \\ O = \{(\alpha, \True)\}}{\stepo{\ket{\qstate}}{\Measure{\Qref{\alpha}}}{O}{p}{\ket{\qstate'}}{\True}}

    \inferrule[S-MeasureF]{M_\alpha\ket{\qstate} = \ket{0}_\alpha \otimes \ket{\qstate'} \text{ w.p. } p \\ O = \{(\alpha, \False)\}}{\stepo{\ket{\qstate}}{\Measure{\Qref{\alpha}}}{O}{p}{\ket{\qstate'}}{\False}} \\

    \inferrule[E-Val]{\val{v}}{\stepostar{\ket{\qstate}}{v}{\emptyset}{1}{\ket{\qstate}}{v}}

    \inferrule[E-Step]{\stepo{\ket{\qstate}}{e}{O_1}{p_1}{\ket{\qstate'}}{e'} \\ \stepostar{\ket{\qstate'}}{e'}{O_2}{p_2}{\ket{\qstate''}}{v} \\ O' = O_1 \cup O_2}{\stepostar{\ket{\qstate}}{e}{O'}{p_1 p_2}{\ket{\qstate''}}{v}}
\end{mathpar}
}
}
\caption{Selected \CoreLang{} operational semantics rules. The full rules are given in~\Cref{sec:app-semantics}.} \label{fig:core-dynamics}
\end{figure}

\paragraph{Operators}
\Cref{fig:core-dynamics} presents a selection of the rules, with the remaining left-to-right call-by-value rules presented in~\Cref{sec:app-semantics}.
The first two rules step application and \texttt{let}-expressions in the standard way, with no effect on $\ket{\qstate}$.
The $\Qinit\ ()$ operator adds a new qubit named $\alpha$ in state $\ket{0}$ to $\ket{\qstate}$ and steps to a reference to $\alpha$.
A single-qubit unitary operator on a qubit reference to $\alpha$ steps to its argument and a new state with $U$ applied on qubit $\alpha$. Similarly, a two-qubit unitary operator applied on qubits $\alpha$ and $\beta$ steps to its argument and a new state with $U$ applied on $\alpha$ and $\beta$.
An \texttt{if}-expression chooses a branch to step to depending on the condition.
Measurement of a qubit has two probabilistic outcomes, classical true or false, with a new state under each outcome. The step rules for measurement transition to each outcome with its occurrence probability (\Cref{sec:background}).

\paragraph{Notation} We define the following new notations for judgments.

\begin{itemize}
    \item $\canstep{\ket{\qstate}}{e}$ means that there exists a set $I$ where for each $i \in I$, there exist $p_i > 0$, $O_i$, $\ket{\qstate_i}$ and $e_i$ where $\stepo{\ket{\qstate}}{e}{O_i}{p_i}{\ket{\qstate_i}}{e_i}$ and $\sum_i p_i = 1$.
    \item $\step{\ket{\qstate}}{e}{}{\ket{\qstate'}}{e'}$ means that there exist $O$ and $p$ where $\stepo{\ket{\qstate}}{e}{O}{p}{\ket{\qstate'}}{e'}$.
    \item $\stepstar{\ket{\qstate}}{e}{}{\ket{\qstate'}}{e'}$ means that there exist $O$ and $p$ where $\stepostar{\ket{\qstate}}{e}{O}{p}{\ket{\qstate'}}{e'}$.
\end{itemize}

\subsection{Type Safety}

We now state the type safety properties of \CoreLang{}.
Progress states that a well-typed expression is a value or can step under a quantum state containing all qubits that the expression references.

\begin{theorem}[Progress] \label{thm:core-progress}
    If $\hastype{\cdot}{\rctx}{e}{\type}$, then $\val{e}$ or for all $\ket{\qstate}$ where $\rctx \subseteq \domain{\ket{\qstate}}$, $\canstep{\ket{\qstate}}{e}$.
\end{theorem}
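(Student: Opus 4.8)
The plan is to proceed by rule induction on the typing derivation $\hastype{\cdot}{\rctx}{e}{\type}$, in the style of standard progress proofs for the linear quantum $\lambda$-calculus. In each case I fix an arbitrary quantum state $\ket{\qstate}$ with $\rctx \subseteq \domain{\ket{\qstate}}$ and show either that $e$ is a value or that $\canstep{\ket{\qstate}}{e}$, i.e. that there is a family of transitions out of $\ket{\qstate}; e$ whose probabilities sum to one. The empty-context hypothesis $\ctx = \cdot$ is what makes this go through: it rules out \rulename{T-Var} and \rulename{T-Fun} as top-level derivations only in the sense that a free variable cannot appear, but in fact \rulename{T-Fun} \emph{can} apply --- a bare function name $f$ is a value by $\val{v}$, so that case is immediate. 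Similarly \rulename{T-Bool}, \rulename{T-Qinit}, and \rulename{T-Ref} are base cases: Boolean literals and qubit references are values, and $\Qinit\ ()$ always steps by \rulename{S-Qinit} (choosing some $\alpha$ fresh in $\ket{\qstate}$, which exists since $\ket{\qstate}$ is finite) with probability one.

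For the inductive cases I would split on whether the immediate subexpressions are values. Take \rulename{T-App}: by inversion the subderivations are $\hastype{\cdot}{\rctx_1}{e_1}{\type_1 \to \type_2}$ and $\hastype{\cdot}{\rctx_2}{e_2}{\type_1}$ with $\rctx = \rctx_1 \sqcup \rctx_2 \subseteq \domain{\ket{\qstate}}$, so the induction hypothesis applies to each subexpression under $\ket{\qstate}$. If $e_1$ is not a value it steps, and a congruence (search) rule from the full semantics in~\Cref{sec:app-semantics} lifts that step to $\App{e_1}{e_2}$; likewise if $e_1$ is a value but $e_2$ is not. If both are values, a canonical-forms argument on the type $\type_1 \to \type_2$ forces $e_1$ to be a function name $f$ (these are the only closed values of arrow type), $\phi(f)$ is defined because the program is well-formed via $\ProgOk{\ctx}{m}$, and \rulename{S-App} fires. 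The cases \rulename{T-Pair}, \rulename{T-Let}, \rulename{T-If}, \rulename{T-U1}, \rulename{T-U2}, \rulename{T-Measure} follow the same template: push the IH through congruence rules until all relevant subterms are values, then invoke canonical forms to identify the value shape --- a pair for \rulename{T-Let} and \rulename{T-U2}, a Boolean literal for \rulename{T-If}, a qubit reference for \rulename{T-U1}, \rulename{T-U2}, \rulename{T-Measure} --- and apply the matching $\rulename{S}$-rule. For \rulename{T-Measure} the relevant reductions are \rulename{S-MeasureT} and \rulename{S-MeasureF}, which together have probabilities $\abs{\gamma_1}^2 + \abs{\gamma_0}^2 = 1$, so $\canstep{}{}$ holds with $\abs{I} = 2$; for \rulename{T-U1}/\rulename{T-U2} a single deterministic transition suffices.

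The key lemma doing the heavy lifting is canonical forms: a closed value of qubit type is a $\Qref{\alpha}$, one of pair type is $(v_1, v_2)$, one of arrow type is an $f$, one of Boolean type is $\True$ or $\False$. I would prove this by inspecting which typing rules can derive a value judgment at each type. A second subtlety to discharge is the side conditions on the quantum operators: \rulename{S-U1} and \rulename{S-MeasureT}/\rulename{S-MeasureF} presuppose that qubit $\alpha$ actually occurs in $\ket{\qstate}$, and \rulename{S-U2} that both $\alpha$ and $\beta$ do --- this is exactly where the hypothesis $\rctx \subseteq \domain{\ket{\qstate}}$ is consumed, since \rulename{T-Ref} puts $\alpha$ into $\rctx$ and the $\sqcup$'s in the other rules propagate all referenced qubits into $\rctx$. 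I would note in passing that disjointness of the $\sqcup$'s guarantees no qubit is demanded twice, but the only place we need a qubit to be \emph{present} is at the leaves. The main obstacle is not any single case but getting the bookkeeping right: the excerpt only shows selected operational rules, so I must be careful that the omitted left-to-right congruence rules exist and have the form the argument assumes, and that the well-formedness judgment $\ProgOk{\ctx}{m}$ genuinely guarantees $\phi$ is total on the function names occurring in $e$ --- both of which are routine but load-bearing.
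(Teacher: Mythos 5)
Your proposal matches the paper's proof: the paper argues by induction on the typing derivation, with canonical forms lemmas identifying the shape of closed values at each type, congruence rules lifting steps of non-value subterms, and the two measurement rules jointly covering probability one; the hypothesis $\rctx \subseteq \domain{\ket{\qstate}}$ is consumed exactly where you say, at the qubit-reference leaves. The bookkeeping concerns you flag (the omitted congruence rules and the totality of $\phi$ via $\ProgOk{\ctx}{m}$) are discharged in the paper's appendix just as you anticipate.
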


The proof is by induction on the derivation of $\hastype{\cdot}{\rctx}{e}{\type}$.
The preservation theorem states that a step preserves the type of the expression under the new qubit context.

\begin{theorem}[Preservation] \label{thm:core-preservation}
    If $\hastype{\ctx}{\rctx}{e}{\type}$ and $\rctx \subseteq \domain{\ket{\qstate}}$ and $\step{\ket{\qstate}}{e}{}{\ket{\qstate'}}{e'}$, then we have $\hastype{\ctx}{\rctx'}{e'}{\type}$ where $\rctx' \subseteq \domain{\ket{\qstate'}}$.
\end{theorem}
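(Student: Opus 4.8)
The plan is to prove Preservation by induction on the derivation of the typing judgment $\hastype{\ctx}{\rctx}{e}{\type}$, with an inner case analysis on the step rule that applies to $e$ (as given in \Cref{fig:core-dynamics} and the full rules in the appendix). The statement is that typing is preserved, the qubit context may change to some $\rctx'$, and the new qubit context remains contained in the domain of the new quantum state. I would first observe that the qubit-context containment conclusion, $\rctx' \subseteq \domain{\ket{\qstate'}}$, is largely bookkeeping: each step rule either leaves $\ket{\qstate}$ unchanged (structural/classical rules, unitaries, \texttt{if}), enlarges it by one fresh qubit (\rulename{S-Qinit}), or shrinks it by exactly the measured qubit (\rulename{S-MeasureT}/\rulename{S-MeasureF}); in each case the matching change to $\rctx$ is dictated by the typing rules \rulename{T-Ref}, \rulename{T-Qinit}, \rulename{T-Measure}, so the containment is immediate once the typing half is established.

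For the typing half, the interesting structure is that the operational semantics is call-by-value and left-to-right, so most step rules act on a congruence position inside a larger expression. I would therefore split the induction into (i) \emph{congruence steps}, where $e = E[e_0]$ for an evaluation context and $e_0$ steps, and (ii) \emph{head reductions}, where $e$ itself matches the left-hand side of one of the base rules. For (i), inversion on the typing derivation gives a typing for the subexpression $e_0$ under a sub-context $(\ctx_0, \rctx_0)$, the induction hypothesis gives $\hastype{\ctx_0}{\rctx_0'}{e_0'}{\type_0}$ with $\rctx_0' \subseteq \domain{\ket{\qstate'}}$, and then I re-assemble the derivation for $E[e_0']$ using the same typing rule, replacing $\rctx_0$ by $\rctx_0'$ in the disjoint union; here I need the fact that the untouched sibling qubit contexts stay disjoint from $\rctx_0'$, which holds because a step only touches qubits already referenced by $e_0$ (freshness for \rulename{S-Qinit}) — this is the one place a small auxiliary lemma about $\refs{e}$ and $\domain{\ket{\qstate}}$ is convenient and I would factor it out. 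For (ii), the head-reduction cases are: \rulename{S-App} and \rulename{S-Let} use the standard substitution lemma for the linear $\lambda$-calculus (which I would state separately: substituting a well-typed value for a variable preserves typing and adds that value's qubit context); \rulename{S-IfT}/\rulename{S-IfF} follow by inversion on \rulename{T-If}, which conveniently gives both branches the same type and contexts; \rulename{S-Qinit}, \rulename{S-U1}, \rulename{S-U2} follow directly by \rulename{T-Ref} and the fact that unitary rules preserve the qubit type and reference set; \rulename{S-MeasureT}/\rulename{S-MeasureF} go from $\qubit$ type on $\Qref{\alpha}$ to $\tBool$ on a Boolean literal, which is well-typed in the empty qubit context, with $\rctx' = \rctx \setminus \{\alpha\} = \emptyset$ and $\alpha$ removed from $\domain{\ket{\qstate}}$ by the measurement.

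The main obstacle I anticipate is not any single case but getting the linearity bookkeeping exactly right across the congruence cases — ensuring that the disjoint-union structure of qubit contexts is genuinely preserved when one summand changes, and in particular that \rulename{S-Qinit}'s fresh qubit is disjoint from every sibling context, not just syntactically fresh in $\ket{\qstate}$. This is why I would prove up front the invariant tying together the typing judgment, the quantum state, and the references appearing in the expression: namely that if $\hastype{\ctx}{\rctx}{e}{\type}$ and $\rctx \subseteq \domain{\ket{\qstate}}$, then every qubit reference syntactically occurring in $e$ lies in $\rctx$, and fresh qubits introduced during evaluation are chosen outside $\domain{\ket{\qstate}} \supseteq \rctx$. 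With that lemma and the substitution lemma in hand, every case of the induction is routine reassembly of derivations, and the proof closes uneventfully.
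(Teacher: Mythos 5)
Your proposal is correct and matches the paper's argument in substance: the paper proves this theorem by structural induction (it inducts on the derivation of $\step{\ket{\qstate}}{e}{}{\ket{\qstate'}}{e'}$ rather than on the typing derivation, but the resulting case analysis — congruence cases via the inductive hypothesis plus disjointness bookkeeping, head reductions via inversion and a substitution lemma — is the same). The auxiliary invariant you factor out relating $\rctx$, $\domain{\ket{\qstate}}$, and the qubit references occurring in $e$ is exactly the bookkeeping the paper's (terse) proof implicitly relies on, so there is no gap.
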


The proof is by induction on the derivation of $\step{\ket{\qstate}}{e}{}{\ket{\qstate'}}{e'}$.

\section{Semantic Properties} \label{sec:semantic-properties}

In this section, we define the semantic property of purity using the operational semantics of \CoreLang{}.
Purity states that an expression executes independently of measurement outcomes of unowned qubits, or equivalently, the qubits owned by a pure expression are separable from those it does not own. Formally, purity states that there is only one possible final program state after evaluating an expression and measuring all qubits that it does not own.

\subsection{Implicit Measurement}

An expression $e$ evaluates under a quantum state $\ket{\qstate}$ to a value $v$ and state $\ket{\qstate'}$. The resulting state $\ket{\qstate'}$ may contain qubits to which $v$ does not refer. If these qubits are measured later in the program, then their measurement outcomes may affect the state of the qubits in $v$ through entanglement.

To capture the effect of the eventual measurement of unowned qubits, we define a relation called \emph{implicit measurement}. Given a state $\ket{\qstate}$ and a value $v$, implicit measurement measures all qubits in $\ket{\qstate}$ to which $v$ does not refer. Define $\Refs{v}$ to be the sequence, or set when order is irrelevant, of qubit names referenced in $v$.
\begin{definition}[Implicit Measurement]
    A state $\ket{\qstate}$ \emph{implicitly measures} modulo a value $v$ to produce a new state $\ket{\qstate_v}$ with probability $p$, written $\imeas{\ket{\qstate}}{v}{p}{\ket{\qstate_v}}$, when $M_{A}\ket{\qstate} = \ket{\qstate_A} \otimes \ket{\qstate_v}$ with probability $p$ where $A = \domain{\ket{\qstate}} \setminus \Refs{v}$ and $\domain{\ket{\qstate_v}} = \Refs{v}$.
\end{definition}
We use notation $\imeas{\ket{\qstate}}{v}{}{\ket{\qstate'}}$ to denote that there exists some $p$ such that $\imeas{\ket{\qstate}}{v}{p}{\ket{\qstate'}}$.
A value has a unique implicit measurement in $\ket{\qstate}$ if and only if its referenced qubits are separable in $\ket{\qstate}$.

\subsection{Qubit Equivalence}

We consider two program states $(\ket{\qstate_1}, v_1)$ and $(\ket{\qstate_2}, v_2)$ to be equivalent if they are equal up to consistent renaming of qubits. For example, we define
\[ (\ket{0}_\alpha \otimes \ket{1}_\beta, (\Qref{\alpha}, \Qref{\beta})) \vequiv (\ket{1}_\gamma \otimes \ket{0}_\delta, (\Qref{\delta}, \Qref{\gamma})) \]
because they can be substituted for each other in a program with no operational effect.

\begin{definition}[Qubit Equivalence]
$(\ket{\qstate_1}, v_1) \vequiv (\ket{\qstate_2}, v_2)$ holds when $\ket{\qstate_2}[\ell_1 / \ell_2] = \ket{\qstate_1}$ and $v_2[\ell_1 / \ell_2] = v_1$ for two duplicate-free sequences of qubit names $\ell_1, \ell_2$ of equal length.
\end{definition}

$\vequiv$ is reflexive, symmetric, and transitive, making it an equivalence relation.
We define the action of rewriting using qubit equivalence as replacing a value $v_1$ of quantum type with any qubit-equivalent value $v_2$ of the same type by also modifying the quantum state correspondingly.

\subsection{Purity}

The \emph{purity} property states that a state and expression always evaluate and implicitly measure to a unique final state and value up to qubit equivalence.
\begin{definition}[Purity] \label{def:purity}
    An expression $e$ is \emph{pure} under state $\ket{\qstate}$, denoted $\sempure{e}{\ket{\qstate}}$, when if $\stepstar{\ket{\qstate}}{e}{}{\ket{\qstate_1}}{v_1}$ and $\imeas{\ket{\qstate_1}}{v_1}{}{\ket{\qstate'_1}}$, and also $\stepstar{\ket{\qstate}}{e}{}{\ket{\qstate_2}}{v_2}$ and $\imeas{\ket{\qstate_2}}{v_2}{}{\ket{\qstate'_2}}$, then we have $(\ket{\qstate'_1}, v_1) \vequiv (\ket{\qstate'_2}, v_2)$.
\end{definition}
Purity asserts that under state $\ket{\qstate}$, expression $e$ evaluates to a unique value $v$ and final state $\ket{\qstate'}$ where $v$ has a unique implicit measurement in $\ket{\qstate'}$. This definition formalizes the intuition that the eventual measurement outcome of unowned qubits cannot affect the state of those it owns.

\section{Denotational Semantics of \CoreLangHeading{}} \label{sec:core-denotational}

The operational definition of purity quantifies over all possible executions of a \CoreLang{} program. In this section, we present the denotational semantics for \CoreLang{}, which enables a more concise definition of purity based on the closed-form denotation of an expression.
Denotational semantics reasons directly about an expression's effect on the distribution of the program state across all executions, using mixed states represented as density matrices to describe distributions over pure states.

\paragraph{Eliminating Nondeterminism}
A \CoreLang{} program executes nondeterministically in terms of both its quantum and classical state. An expression may evaluate to multiple distinct values when measurement outcomes influence classical control flow. Because this nondeterminism in the program's classical state complicates the development of a denotational semantics, we instead force all nondeterminism to occur in the quantum state.

Any expression of function type always evaluates to a unique value, by a simple induction argument over the operational semantics. For Booleans, we utilize the deferred measurement principle (\Cref{sec:background}) to interpret Booleans as deferred qubit measurements that are not resolved until the end of the program.
For quantum types, we canonicalize values that differ only in the order of appearance of qubits. To do so, we use rewriting under qubit equivalence (\Cref{sec:semantic-properties}) to dynamically rename qubits so that every value refers to the same qubits in the same order.

\subsection{Semantics}

\Cref{fig:core-denotation} presents the denotational semantics of \CoreLang{}. The denotation of an expression $e$ is a function from a context $\gamma$ mapping variables to values and an input partial density matrix $\rho$ to the final partial density matrix $\rho'$ and value $v$ to which $e$ evaluates. The denotation of a program $m$ adds functions to the initial $\gamma$ and is always a normalized density matrix.

\paragraph{Basic Operators}
The values of variables reside in the context $\gamma$. Values do not evaluate further. Pairs and \texttt{let}-bindings propagate the state through their evaluation. The denotation of function application is the application of $f$, the denotation of $e_1$, to the denotation of $e_2$. The denotation of $\Qinit\ ()$ adds a new qubit $\alpha$ in state $\ket{0}$ by tensor product with its density matrix form, which is an outer product $\dyad{0}$. The denotation of unitary operators performs matrix conjugation (\Cref{sec:background}).

\paragraph{Measurement and Conditional Branches}
The denotations of \texttt{if} and \texttt{measure} encode deferred measurement. A measurement has no immediate effect and simply evaluates its argument, where the notation $\QrefM{\alpha}$ differentiates a measured from an unmeasured qubit. An \texttt{if}-expression first determines whether its condition is a literal \True{} or \False{} and if so executes an appropriate branch.

Otherwise, the condition is a deferred measurement of a qubit $\alpha$.
The denotation computes the partial density matrices corresponding to each outcome using the matrix $P_\alpha$ projecting qubit $\alpha$ to $\ket{1}$ in $\rho$. It executes the corresponding branches to obtain matrices $\rho_1$ and $\rho_2$. It next corrects their dimensions using \textsf{match\_sizes}, defined as taking tensor product with copies of $\dyad{0}$ on the smaller matrix. This operation allocates the same number of qubits created by \texttt{qinit} to both branches.

The values $v_1$ and $v_2$ returned by the two branches are of the same type but may refer to different qubits. The denotation unifies $v_1$ and $v_2$ by renaming each qubit in $v_2$ and $\rho'_2$ into the corresponding reference in $v_1$. It adds the two partial density matrices to weigh each outcome by its probability.

\begin{figure}
\footnotesize
\begin{align*}
\denote{\Fun{f}{x}{e}; m}_\gamma ={}& \text{let } \gamma' = \gamma[\lambda v. \denote{e}_{\gamma[v/x]} / f] \text{ in} \denote{m}_{\gamma'} \\
\denote{\Main{e}}_\gamma ={}& \Denote{e}{\gamma}{\dyad{\cdot}} \\[0.5em]
\Denote{x}{\gamma}{\rho} ={}& (\rho, \gamma(x)) \\
\Denote{v}{\gamma}{\rho} ={}& (\rho, v) \text{ when } \val{v} \\
\Denote{(e_1, e_2)}{\gamma}{\rho} ={}& \textrm{let}\ \rho_1, v_1 = \Denote{e_1}{\gamma}{\rho}\textrm{in}\ \textrm{let}\ \rho_2, v_2 = \Denote{e_2}{\gamma}{\rho_1}\textrm{in}\ (\rho_2, (v_1, v_2)) \\
\Denote{\Let{x}{y}{e_1}{e_2}}{\gamma}{\rho} ={}& \textrm{let}\ \rho_1, (v_1, v_2) = \Denote{e_1}{\gamma}{\rho}\textrm{in}\Denote{e_2}{\gamma[v_1, v_2 / x, y]}{\rho_1} \\
\Denote{\App{e_1}{e_2}}{\gamma}{\rho} ={}& \textrm{let}\ \rho_1, f = \Denote{e_1}{\gamma}{\rho}\textrm{in}\ \textrm{let}\ \rho_2, v = \Denote{e_2}{\gamma}{\rho_1}\textrm{in}\ f(v)(\rho_2) \\
\Denote{\Qinit\ ()}{\gamma}{\rho} ={}& (\rho \otimes \dyad{0}_\alpha, \Qref{\alpha}) \text{ where $\alpha$ is fresh in $\rho$} \\
\Denote{\Uone{e}}{\gamma}{\rho} ={}& \textrm{let}\ \rho', \Qref{\alpha} = \Denote{e}{\gamma}{\rho}\textrm{in}\ (U_{\alpha} \rho' U^\dagger_{\alpha}, \Qref{\alpha}) \\
\Denote{\Utwo{e}}{\gamma}{\rho} ={}& \textrm{let}\ \rho', (\Qref{\alpha}, \Qref{\beta}) = \Denote{e}{\gamma}{\rho}\textrm{in}\ (U_{\alpha, \beta} \rho' U^\dagger_{\alpha, \beta}, (\Qref{\alpha}, \Qref{\beta})) \\
\Denote{\Measure{e}}{\gamma}{\rho} ={}& \textrm{let}\ \rho', \Qref{\alpha} = \Denote{e}{\gamma}{\rho}\textrm{in}\ (\rho', \QrefM{\alpha}) \\
\Denote{\If{e}{e_1}{e_2}}{\gamma}{\rho} ={}& \textrm{let}\ \rho', v = \Denote{e}{\gamma}{\rho} \textrm{in\ case\ } v \textrm{ of } \True \to \Denote{e_1}{\gamma}{\rho'} \mid \False \to \Denote{e_2}{\gamma}{\rho'} \mid \QrefM{\alpha} \to \\
& \text{let } P_\alpha = \dyad{1}_\alpha \otimes I_{\domain{\rho} \setminus \{\alpha\}} \text{ in} \\
& \text{let } \rho_1, v_1 = \denote{e_1}_\gamma({P_\alpha \rho P_\alpha}) \text{ and } \rho_2, v_2 = \denote{e_2}_\gamma({(I_{\domain{\rho}} - P_\alpha) \rho (I_{\domain{\rho}} - P_\alpha)}) \text{ in} \\
& \text{let } \rho'_1, \rho'_2 = \textsf{match\_sizes}(\rho_1, \rho_2) \text{ in}\ (\rho'_1 + \rho'_2[\Refs{v_1} / \Refs{v_2}], v_1)
\end{align*}
\caption{\CoreLang{} denotational semantics.}
\label{fig:core-denotation}
\end{figure}

\subsection{Semantics Equivalence}

In this section, we show that the operational and denotational semantics are equivalent in terms of final program states under particular measurement outcomes.
Let $\phi$ denote the initial context containing only top-level function declarations.
Given an outcome map $O$ and a value $v$, we define $\textsf{apply}(O, v)$ to be the value syntactically identical to $v$ except that every instance of $\QrefM{\alpha}$ is replaced with $b$ where $(\alpha, b) \in O$.
In the opposite direction, we define $\textsf{defer}(O)$ to be the tensor product of all outcomes in $O$ expressed as quantum states:\nowidow\vspace*{-\baselineskip}
\begin{align*}
    \textsf{defer}(O) = \bigotimes \left\{\text{if $b = \True{}$ then $\ket{1}_\alpha$ else $\ket{0}_\alpha$} \mid (\alpha, b) \in O\right\}
\end{align*}
The following theorem states that the denotation of an expression captures every execution up to qubit equivalence. $\rho$ contains all final $\ket{\qstate_i}$ and outcomes $O_i$ and some number of unused padding qubits, and deferred and operational measurement outcomes align in the final value $v$.
\begin{theorem} \label{thm:core-sem-equiv}
    Given $\ket{\qstate}$ and $e$, let the multiset $S = [ (p_i, \ket{\qstate_i}, v_i, O_i) \mid \stepostar{\ket{\qstate}}{e}{O_i}{p_i}{\ket{\qstate_i}}{v_i} ]$. Then,
    \begin{align*}
        \Denote{e}{\phi}{\dyad{\qstate}} = (\rho, v)
        \text{ where }& \rho = \textstyle\sum_{(p_i, \ket{\qstate_i}, v_i, O_i) \in S} p_i \dyad{\qstate'_i}
        \text{and } (\ket{\qstate'_i}, v) \vequiv (\ket{\qstate''_i}, v') \\
        \text{ where }& \ket{\qstate_i} \otimes \textsf{defer}(O_i) \otimes \ket{0}^* = \ket{\qstate''_i}
        \text{and } v_i = \textsf{apply}(O_i, v').
    \end{align*}
\end{theorem}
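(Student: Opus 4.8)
The plan is to prove the theorem by induction on the expression $e$, after strengthening it so that the induction hypothesis ranges over an arbitrary value-context $\gamma$ in place of $\phi$ and an arbitrary input partial density matrix $\rho$ in place of $\dyad{\qstate}$; the stated result is the instance $\gamma = \phi$, $\rho = \dyad{\qstate}$. Three auxiliary lemmas come first. A \emph{substitution lemma} establishes $\denote{[v/x]e}_\gamma = \denote{e}_{\gamma[v/x]}$ for a value $v$, reconciling the operational rules \textsc{S-App} and \textsc{S-Let} (which substitute) with the denotational clauses for application and \texttt{let} (which extend $\gamma$); the only subtlety is that an operational value may carry a Boolean literal where the denotational value carries a $\QrefM{\alpha}$ marker, and that gap is exactly what $\textsf{apply}(O,\cdot)$ and $\textsf{defer}(O)$ bridge. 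A \emph{value-independence lemma} shows that, once a canonical fresh-name discipline is fixed for \texttt{qinit}, the value component of $\denote{e}_\gamma(\rho)$ depends only on $e$, $\gamma$, and $\domain{\rho}$. A \emph{linearity lemma} shows that if $\domain{\rho_1} = \domain{\rho_2}$ then $\denote{e}_\gamma(\rho_1 + \rho_2) = (\rho'_1 + \rho'_2, v)$ where $\denote{e}_\gamma(\rho_i) = (\rho'_i, v)$; together with an evaluation-factorization lemma that decomposes a multi-step evaluation of a compound expression into whole-subexpression evaluations of its immediate subexpressions (and, for application, of the instantiated function body), this is what lets the sum over the executions in $S$ be threaded through each denotational clause.

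With these in hand, the cases for values, variables, pairs, \texttt{let}, application, \texttt{qinit}, and the single- and two-qubit unitary operators are routine: invert the operational evaluation via the factorization lemma, apply the induction hypothesis to each subexpression, and push the resulting sums through the denotational clause using linearity; since none of these constructs resolve a measurement, the factors $\textsf{defer}(O_i)$ and the $\QrefM{}$ markers in $v$ simply propagate. The substantive case is \texttt{if}, together with \texttt{measure}, which is denotationally inert (it merely rewrites $\Qref{\alpha}$ to $\QrefM{\alpha}$) while operationally it measures eagerly. The two semantics therefore disagree at the point of measurement: operationally the condition of an \texttt{if} is always a literal, so \textsc{S-IfT}/\textsc{S-IfF} just select a branch; denotationally the measurement is deferred and, at the \texttt{if}, the clause projects $\rho$ onto the two outcomes via $P_\alpha$ and $I - P_\alpha$, evaluates both branches, calls \textsf{match\_sizes}, renames the second branch's references by $[\Refs{v_1}/\Refs{v_2}]$, and adds. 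The correspondence I would prove is that every operational execution passing through the \texttt{if} lands in the $P_\alpha$ summand (if the condition measured to $\True$) or the $I - P_\alpha$ summand (if $\False$), that its probability equals the weight carried by the corresponding projected partial density matrix, and that $\textsf{defer}$ re-tensors $\ket{1}_\alpha$ or $\ket{0}_\alpha$ back onto the state, rebuilding the $\ket{\qstate_i} \otimes \textsf{defer}(O_i)$ factor; applying the induction hypothesis to each branch on its projected input and summing then yields exactly $\denote{\If{e}{e_1}{e_2}}_\gamma(\rho)$.

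I expect the chief obstacle to be the qubit-name and padding bookkeeping, which is also the reason the theorem holds only up to $\vequiv$ and carries the trailing $\ket{0}^*$. Different operational executions may invoke \texttt{qinit} a different number of times and with arbitrary fresh names, while denotationally the \texttt{if} clause uses \textsf{match\_sizes} to equalize qubit counts across branches and the substitution $[\Refs{v_1}/\Refs{v_2}]$ to unify the returned values. I would isolate a naming lemma guaranteeing that (i) after \textsf{match\_sizes} and the renaming the two branch density matrices have the same domain, so their sum is well defined; (ii) the fresh names chosen along a denotational evaluation and along the matching operational execution differ only by a consistent renaming, which is precisely what carries $v$ to the $v'$ of the statement; and (iii) the accumulated padding is exactly the $\ket{0}^*$ of the statement. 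Once the \texttt{if} case is discharged with this lemma, the remaining work is mechanical: checking that each denotational clause commutes with the sum over executions and that probabilities multiply along \textsc{E-Step} exactly as the partial-density-matrix weights compose under projection.
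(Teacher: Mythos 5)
Your plan is, in substance, the proof the paper gives: the paper's argument is exactly an induction over the evaluation derivations in which every construct except unitaries, \texttt{if}, and \texttt{measure} is routine left-to-right bookkeeping, and the interesting cases are discharged by the state-vector/density-matrix correspondence and deferred measurement. Your auxiliary lemmas (substitution, linearity of the denotation in $\rho$, value-independence under a fixed fresh-name discipline, and the naming/padding lemma accounting for $\vequiv$ and $\ket{0}^*$) are precisely the pieces the paper's two-sentence sketch leaves implicit, and your treatment of \texttt{if} --- matching each operational branch to the $P_\alpha$ or $I-P_\alpha$ summand, with $\textsf{defer}$ re-tensoring the measured qubit --- is the intended content of the ``deferred measurement'' appeal.

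The one concrete place your plan diverges, and would break, is the induction measure. You induct on the expression $e$, but in the application case the operational rule \textsc{S-App} continues with $[e'/x]e$ where $e$ is the body of $\phi(f)$, which is not a structural subexpression of $\App{e_1}{e_2}$; your factorization lemma correctly identifies that you need an instance of the theorem for that instantiated body, but structural induction on $e$ does not supply it. The paper avoids this by inducting on the derivations of $\stepostar{\ket{\qstate}}{e}{O_i}{p_i}{\ket{\qstate_i}}{v_i}$, where the body's evaluation is a strict subderivation; you should either switch to that measure (keeping your strengthened statement over $\gamma$ and $\rho$ as the thing being proved at each node) or add a lexicographic component on the position of $f$ in the declaration sequence, which $\ProgOk{\ctx}{m}$ makes well-founded. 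A second, smaller caution: your correspondence for measurement is established only at the \texttt{if} that consumes the outcome, so the case of a \texttt{measure} whose Boolean never reaches an \texttt{if} rests entirely on the $\textsf{apply}/\textsf{defer}$ bridge and is the most delicate point of the literal matrix equality; state and check that case explicitly rather than folding it into the \texttt{if} case.
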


The proof is by induction on the derivations of $\stepostar{\ket{\qstate}}{e}{O_i}{p_i}{\ket{\qstate_i}}{v_i}$. Most operators in the denotational semantics follow a left-to-right eager evaluation of the arguments and return the same value as the step rule in the operational semantics. The interesting cases are for unitary operators, \texttt{if}-expressions, and measurement, where we appeal to the correspondence between the state vector and density matrix models of quantum mechanics as well as the principle of deferred measurement.

\subsection{Purity}

The following corollary provides an equivalent definition of purity using denotational semantics.

\begin{corollary} \label{thm:purity-denotation}
    $\sempure{e}{\ket{\qstate}}$ holds if and only if $ \Denote{e}{\phi}{\dyad{\qstate}} = (\rho \otimes \rho', v) $ where $\domain{\rho} = \Refs{v}$ and $\rho$ is a pure state, i.e. $\rho = \dyad{\qstate'}$ for some $\ket{\qstate'}$.
\end{corollary}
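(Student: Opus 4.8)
The plan is to derive the corollary from the semantics-equivalence theorem (\Cref{thm:core-sem-equiv}) together with a single linear-algebra fact about density matrices. By \Cref{thm:core-sem-equiv}, $\Denote{e}{\phi}{\dyad{\qstate}} = (\rho_{\mathrm{tot}}, v)$ where $\rho_{\mathrm{tot}} = \sum_i p_i \dyad{\qstate'_i}$ ranges over the operational executions $\stepostar{\ket{\qstate}}{e}{O_i}{p_i}{\ket{\qstate_i}}{v_i}$, and each $\ket{\qstate'_i}$ is a renaming of $\ket{\qstate_i} \otimes \textsf{defer}(O_i) \otimes \ket{0}^*$ chosen so that the renamed value $v'_i$ is $v$ and $\textsf{apply}(O_i, v'_i) = v_i$. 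Write $B = \Refs{v}$ and $A = \domain{\rho_{\mathrm{tot}}} \setminus B$. I would prove that \emph{both} sides of the biconditional are equivalent to the single statement that $\ptrace{A}{\rho_{\mathrm{tot}}}$ is a pure state, which immediately yields the corollary.

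The linear-algebra ingredient, which I would isolate as a lemma, is: a normalized density matrix $\mu$ with $\domain{\mu} = B \sqcup A$ factors as $\mu = \mu_B \otimes \mu_A$ with $\mu_B$ pure if and only if $\ptrace{A}{\mu}$ is pure. One direction is immediate from $\ptrace{A}{\mu_B \otimes \mu_A} = \mu_B$. For the converse, expand $\mu = \sum_k q_k \dyad{\phi_k}$; the marginals $\ptrace{A}{\dyad{\phi_k}}$ are density matrices on $B$ whose convex combination is the rank-one projector $\ptrace{A}{\mu}$, so by the extreme-point property of rank-one projectors each $\ptrace{A}{\dyad{\phi_k}}$ equals that same projector $\dyad{\psi}$. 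Each $\ket{\phi_k}$ therefore has a pure marginal on $B$, hence by the Schmidt decomposition (\Cref{sec:background}) is a product $\ket{\psi} \otimes \ket{\xi_k}$, and summing gives $\mu = \dyad{\psi} \otimes \sum_k q_k \dyad{\xi_k}$. Instantiated at $\mu = \rho_{\mathrm{tot}}$, this shows the right-hand side of the corollary holds exactly when $\ptrace{A}{\rho_{\mathrm{tot}}}$ is pure.

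It then remains to match purity in the sense of \Cref{def:purity} with purity of $\ptrace{A}{\rho_{\mathrm{tot}}}$. Following the renaming of \Cref{thm:core-sem-equiv}, tracing $A$ out of $\dyad{\qstate'_i}$ discards the classical factors of $\textsf{defer}(O_i)$ and $\ket{0}^*$ lying outside $\Refs{v_i}$ together with the ancillas of $\ket{\qstate_i}$ not referenced by $v_i$, leaving $\ptrace{A}{\dyad{\qstate'_i}} = \nu_i \otimes \bigotimes_{\beta \in M} \dyad{c^i_\beta}_\beta$, where $\nu_i$ is the reduced state of $\ket{\qstate_i}$ on the qubits of $v_i$ (renamed to their slots in $v$), $M \subseteq \Refs{v}$ is the fixed set of measured-qubit slots of $v$, and $c^i_\beta \in \{0,1\}$ records the outcome of slot $\beta$ in $O_i$. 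Since a value has a unique implicit measurement in a state exactly when its referenced qubits are separable there (\Cref{sec:semantic-properties}), $\nu_i$ is pure precisely when $v_i$ has a unique implicit measurement in $\ket{\qstate_i}$, and then $\nu_i$ is the dyad of that measured state. For the forward direction: if $\sempure{e}{\ket{\qstate}}$ then each $\ket{\qstate_i}$ must be separable along $\Refs{v_i}$ --- two distinct implicit measurements of the same $v_i$ would be qubit-equivalent under a renaming fixing $v_i$, hence equal, a contradiction --- and the resulting implicit-measured pairs are all mutually qubit-equivalent, so after renaming the $\nu_i$ coincide (to some $\dyad{\psi}$) and the bitstrings $(c^i_\beta)_\beta$ coincide (to some $(c_\beta)_\beta$) across $i$; hence $\ptrace{A}{\rho_{\mathrm{tot}}} = \dyad{\psi} \otimes \bigotimes_{\beta \in M} \dyad{c_\beta}$ (using $\sum_i p_i = 1$) is pure. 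For the reverse: if $\ptrace{A}{\rho_{\mathrm{tot}}}$ is rank one then, being a convex combination of the density matrices $\ptrace{A}{\dyad{\qstate'_i}}$, each of the latter equals it, which forces every $\nu_i$ pure and every $(c^i_\beta)_\beta$ equal; hence each $\ket{\qstate_i}$ is separable along $\Refs{v_i}$, every implicit measurement exists and is unique, and the renamed implicit-measured pairs all coincide, giving $\sempure{e}{\ket{\qstate}}$.

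The main obstacle I anticipate is this last matching step: keeping the renaming and $\textsf{defer}/\textsf{apply}$ bookkeeping of \Cref{thm:core-sem-equiv} straight while partitioning $\domain{\rho_{\mathrm{tot}}}$ into the owned reduced state $\nu_i$, the classical bits $c^i_\beta$ contributed by measured qubits that are nevertheless returned in the value, and the traced-out remainder $A$, and checking that qubit equivalence of implicit-measured program states is equivalent to equality of all these pieces. The linear-algebra lemma is short, but it supplies the conceptual content of the ``if'' direction --- that a mixed state with a pure marginal must be a product state.
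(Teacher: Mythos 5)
Your proposal is correct and follows the route the paper itself takes: the corollary is presented as a direct consequence of \Cref{thm:core-sem-equiv} with no further argument, and your product-state lemma (a mixed state with a pure marginal on $B$ factors as a pure state on $B$ tensored with the rest) together with the partition of $\domain{\rho_{\mathrm{tot}}}$ into owned unmeasured qubits, measured-but-returned slots, and the traced-out remainder correctly supplies the details the paper leaves implicit. The one delicate point --- that qubit equivalence of implicit-measured program states forces both the owned reduced states $\nu_i$ and the returned outcome bits $c^i_\beta$ to agree across executions, and conversely that rank-one-ness of the convex combination forces each summand to equal it --- is exactly the bookkeeping you flag, and your treatment of it is sound.
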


Thus, computing the denotation of $e$ under $\ket{\qstate}$ gives us a direct way of testing whether $\sempure{e}{\ket{\qstate}}$. We leverage this fact in the following sections to define the \emph{purity assertion} operators of \LangName{}.

\section{\LangName{} Language} \label{sec:language}

In this section, we present the formal core of the \LangName{} language. \LangName{} extends \CoreLang{} with a \emph{purity type} system that specifies which expressions are pure. We present the two \emph{purity assertion} operators that specify and check purity in a quantum program.

We present both a denotational and operational semantics of \LangName{}. Denotationally, we define purity assertions using mathematical \emph{separability conditions} on mixed-state denotations. Operationally, we implement them using a \emph{separability test} primitive on runtime pure states.

\subsection{Syntax}

To develop \LangName{}, we augment the syntax of \CoreLang{} as follows, where new syntax is in black:\footnote{We describe in~\Cref{sec:higher-level} additional syntactic features of \LangName{}, such as implicit measurement of discarded pure values.}
\begin{align*}
    \textsf{Purity}\ \purity \Coloneqq{} & \pure \mid \mixed \\
    \textsf{Quantum type}\ \qtype \Coloneqq{} & \textcolor{gray}{\qubit} \mid \tEpair{\qtype_1}{\qtype_2} \\
    \textsf{Type}\ \type \Coloneqq{} & \textcolor{gray}{\tBool \mid \tPair{\type_1}{\type_2} \mid \type_1 \to \type_2} \mid \tPurity{\qtype}{\purity} \\
    \textsf{Quantum value}\ q \Coloneqq{} & \textcolor{gray}{\Qref{\alpha}} \mid \Epair{q_1}{q_2} \\
    \textsf{Expression}\ e \Coloneqq{} & \textcolor{gray}{x \mid f \mid \App{e_1}{e_2} \mid (e_1, e_2) \mid \Let{x}{y}{e_1}{e_2} \mid \If{e}{e_1}{e_2} \mid \True \mid \False} \\
    \textcolor{gray}{\mid{}} & \textcolor{gray}{\Qinit\ () \mid \Uone{e} \mid \Utwo{e} \mid \Measure{e}} \mid \Qpurity{q}{\purity}\\
    \mid{} & \EntangleOne{\purity}{e} \mid \SplitA{\purity}{e} \mid \Cast{\purity}{e}
\end{align*}
We introduce purity annotations $\purity$, either pure, $\pure$, or mixed, $\mixed$.
We also introduce a sort of \emph{quantum types} $\qtype$,\footnote{Lowercase Greek letter qoppa.} either a \qubit{} or an \emph{entangled pair} $\tEpair{\qtype_1}{\qtype_2}$ of two quantum types.
They denote \emph{quantum values} $q$, either a reference to a qubit or an entangled pair $\Epair{q_1}{q_2}$ of two quantum values.

The new forms of expressions manipulate purity and have no effect on the quantum state. A purity-annotated quantum value $\Qpurity{q}{\purity}$ is an expression that only arises in intermediate evaluations.
The $\EntangleOne{\purity}{e}$ operator constructs an entangled pair from its argument, the $\SplitA{\purity}{e}$ operator destructs an entangled pair into two components, and $\Cast{\purity}{e}$ modifies the purity of an expression.

\subsection{Type System}

\Cref{fig:types} presents the type system of \LangName{}, extending the type system of \CoreLang{} to the new operators.
The new judgment $\qhastype{\rctx}{q}{\qtype}$ assigns the quantum value $q$ the quantum type $\qtype$ under the context $\rctx$. A reference to qubit $\alpha$ has qubit type under a context only containing $\alpha$, and an entangled pair has entangled pair type if its components have quantum types.

We next modify the typing judgment $\hastype{\ctx}{\rctx}{e}{\type}$, showing only the rules that change, with the remainder in~\Cref{sec:app-semantics}.
The rule for \texttt{if} requires its branches to have the same quantum type and now returns a mixed version of that type.
The $\Qinit\ ()$ operator returns a pure qubit. Unitary operators operate on a qubit or entangled pair of any purity and have the same purity as their argument.
The \Measure{e} operator accepts a qubit of any purity and returns a Boolean.
A purity-annotated quantum value $\Qpurity{q}{\purity}$ has type $\tPurity{\qtype}{\purity}$ if $q$ has quantum type $\qtype$.

\paragraph{Entangled Pairs and Purity Assertions}

The purpose of the entangled pair type is to denote values that may have been entangled by a two-qubit unitary operator.
The $\mathtt{entangle}_{\purity}$ operator creates an entangled pair of purity $\purity$ from an ordinary pair of quantum values of that same purity.

When a program extracts the two components of an entangled pair of any purity, conservatively the type system assumes they became entangled and now constitute mixed states, and assigns them mixed type.
The operator $\SplitA{\mixed}{e}$ destructs a mixed entangled pair into two mixed components, and $\Cast{\mixed}{e}$ sets the purity of any expression to mixed.

Two \emph{purity assertion} operators obtain pure types for quantum expressions. The \emph{purifying-split} operator $\SplitP$ destructs a pure entangled pair into two pure components, and the \emph{purifying-cast} operator $\CastP$ sets the purity of any expression to pure.
To ensure that expressions of pure type are actually pure, the semantics of the purity assertions impose conditions on their usage.
\begin{figure}
\resizebox{\textwidth}{!}{
\parbox{1.1\textwidth}{
\begin{mathpar}
    \inferrule[Q-Ref]{\vphantom{\ctx}}{\qhastype{\{\alpha\}}{\Qref{\alpha}}{\qubit}}

    \inferrule[Q-Pair]{\qhastype{\rctx_1}{q_1}{\qtype_1} \\ \qhastype{\rctx_2}{q_2}{\qtype_2}}{\qhastype{\rctx_1 \sqcup \rctx_2}{\Epair{q_1}{q_2}}{\tEpair{\qtype_1}{\qtype_2}}}

    \inferrule[T-If]{\hastype{\ctx_1}{\rctx_1}{e}{\tBool} \\ \hastype{\ctx_2}{\rctx_2}{e_1}{\tPurity{\qtype}{\purity}} \\ \hastype{\ctx_2}{\rctx_2}{e_2}{\tPurity{\qtype}{\purity}}}{\hastype{\ctx_1, \ctx_2}{\rctx_1 \sqcup \rctx_2}{\If{e}{e_1}{e_2}}{\tPurity{\qtype}{\mixed}}}

    \inferrule[T-Qinit]{\vphantom{\ctx}}{\hastype{\cdot}{\emptyset}{\Qinit\ ()}{\tPurity{\qubit}{\pure}}}

    \inferrule[T-U1]{\hastype{\ctx}{\rctx}{e}{\tPurity{\qubit}{\purity}}}{\hastype{\ctx}{\rctx}{\Uone{e}}{\tPurity{\qubit}{\purity}}}

    \inferrule[T-U2]{\hastype{\ctx}{\rctx}{e}{\tPurity{(\tEpair{\qubit}{\qubit})}{\purity}}}{\hastype{\ctx}{\rctx}{\Utwo{e}}{\tPurity{(\tEpair{\qubit}{\qubit})}{\purity}}}

    \inferrule[T-Measure]{\hastype{\ctx}{\rctx}{e}{\tPurity{\qubit}{\purity}}}{\hastype{\ctx}{\rctx}{\Measure{e}}{\tBool}}

    \inferrule[T-Qval]{\qhastype{\rctx}{q}{\qtype}}{\hastype{\cdot}{\rctx}{\Qpurity{q}{\purity}}{\tPurity{\qtype}{\purity}}}

    \inferrule[T-Entangle]{\hastype{\ctx}{\rctx}{e}{\tPair{\tPurity{\qtype_1}{\purity}}{\tPurity{\qtype_2}{\purity}}}}{\hastype{\ctx}{\rctx}{\EntangleOne{\purity}{e}}{\tPurity{(\tEpair{\qtype_1}{\qtype_2})}{\purity}}}

    \inferrule[T-Split]{\hastype{\ctx}{\rctx}{e}{\tPurity{(\tEpair{\qtype_1}{\qtype_2})}{\purity}}}{\hastype{\ctx}{\rctx}{\SplitA{\purity}{e}}{\tPair{\tPurity{\qtype_1}{\purity}}{\tPurity{\qtype_2}{\purity}}}}

    \inferrule[T-Cast]{\hastype{\ctx}{\rctx}{e}{\tPurity{\qtype}{\purity'}}}{\hastype{\ctx}{\rctx}{\Cast{\purity}{e}}{\tPurity{\qtype}{\purity}}}
\end{mathpar}
}
}
\caption{\LangName{} purity type system. Only rules changed from~\Cref{fig:core-types} shown; full rules in~\Cref{sec:app-semantics}.} \label{fig:types}
\end{figure}

\subsection{Denotational Semantics}

\Cref{fig:denotation} presents the denotational semantics of \LangName{}, extending the semantics of \CoreLang{} to the new operators. We define $\vdenote{v}$ to strip purity annotations and replace entangled pairs with ordinary pairs in $v$, so that the denotation of a value $v$ is $\vdenote{v}$. Operators $\texttt{entangle}_\purity$, $\CastM$, and $\SplitM$ only manipulate purity annotations, and so their denotation is simply the denotation of their argument.

\SplitP{} and \CastP{} evaluate their arguments and then assert a condition about the resulting partial density matrix. If the condition holds, the operator leaves the state unchanged. Otherwise, its denotation is the special element $\bot$ corresponding to the program aborting at runtime.

\paragraph{Separability Conditions}
Given a partial density matrix $\rho$ and a partition of its domain, a \emph{separability condition} states that $\rho$ is simply separable into sub-states whose domains are the qubit sets of the partition. The \SplitP{} operator asserts that $\rho$ is simply separable into, i.e. is the product of three sub-states, where two are pure and correspond to $q_1$ and $q_2$. The definition of \CastP{} is similar, asserting that the state has a pure sub-state corresponding to $q$.

\paragraph{Classical Control}
A purity assertion under one branch of an \texttt{if}-expression checks that an expression is pure across states satisfying that branch of the \texttt{if}-condition. The typing rule for \texttt{if} ensures that a pure value inside a branch will be considered mixed at the end of the \texttt{if}-expression. For example, suppose $x$ and $y$ are two qubits in a Bell pair. Each \CastP{} within
\[
\If{\Measure{x}}{\Cast{\pure}{y}}{\Cast{\pure}{y}} : \tPurity{\qubit}{\mixed}
\]
is valid because if $x$ was measured to be $\ket{0}$, $y$ must also be in the pure state $\ket{0}$, and measuring $\ket{1}$ for $x$ would likewise yield $\ket{1}$ for $y$. Nevertheless, the expression overall is mixed -- the type of \texttt{if} is always mixed because the type system cannot assume any particular outcome for $x$.

\begin{figure}
\footnotesize
\centering
\begin{minipage}[t]{.35\textwidth}
    \centering
    \begin{align*}
      \vdenote{v} ={}& v \text{ when $v$ is $f$, $b$, or $\Qref{\alpha}$} \\
      \vdenote{\Qpurity{q}{\purity}} ={}& \vdenote{q} \\
      \vdenote{\Epair{q_1}{q_2}} ={}& (\vdenote{q_1}, \vdenote{q_2})
      \end{align*}
  \end{minipage}%
\begin{minipage}[t]{.5\textwidth}
  \centering
  \begin{align*}
    \Denote{v}{\gamma}{\rho} ={}& (\rho, \vdenote{v}) \text{ when $\val{v}$} \\
    \denote{\EntangleOne{\purity}{e}} ={}& \denote{\Cast{\mixed}{e}} = \denote{\SplitA{\mixed}{e}} = \denote{e} \\
    \Denote{\SplitA{\pure}{e}}{\gamma}{\rho} ={}& \Denote{e}{\gamma}{\rho} \text{ if } \Denote{e}{\gamma}{\rho} = (\rho_1 \otimes \rho_2 \otimes \rho_0, (q_1, q_2)) \text{ else } \bot \\
    & \text{where } \domain{\rho_1} = \Refs{q_1} \text{ and } \domain{\rho_2} = \Refs{q_2} \\
    & \text{and $\rho_1$ and $\rho_2$ are pure} \\
    \Denote{\Cast{\pure}{e}}{\gamma}{\rho} ={}& \Denote{e}{\gamma}{\rho} \text{ if } \Denote{e}{\gamma}{\rho} = (\rho_1 \otimes \rho_0, q) \text{ else } \bot \\
    & \text{where } \domain{\rho_1} = \Refs{q} \text{ and $\rho_1$ is pure}
    \end{align*}
\end{minipage}
\caption{Selected \LangName{} denotational semantics. Only rules changed from~\Cref{fig:core-denotation} shown.}
\label{fig:denotation}
\end{figure}

\paragraph{Verifying Separability Conditions}
Soundly verifying \LangName{}'s purity assertions requires verifying their separability conditions. If desired, a mixed-state quantum simulator can be used for this purpose. The simulator determines whether a density matrix is simply separable by taking its partial trace and executing the rank test~(\Cref{sec:background}). Though this approach supports simulating all well-typed \LangName{} programs, it is tied to a computationally inefficient mixed-state simulator.

\subsection{Operational Semantics}

\begin{figure}
\resizebox{\textwidth}{!}{
\parbox{1.1\textwidth}{
\begin{mathpar}
    \inferrule[S-IfT]{\vphantom{\ctx}}{\stepo{\ket{\qstate}}{\If{\True}{e_1}{e_2}}{\emptyset}{1}{\ket{\qstate}}{\Cast{\mixed}{e_1}}}

    \inferrule[S-IfF]{\vphantom{\ctx}}{\stepo{\ket{\qstate}}{\If{\False}{e_1}{e_2}}{\emptyset}{1}{\ket{\qstate}}{\Cast{\mixed}{e_2}}}

    \inferrule[S-Entangle]{\vphantom{\ctx}}{\stepo{\ket{\qstate}}{\Entangle{\purity}{\Qpurity{q_1}{\purity}}{\Qpurity{q_2}{\purity}}}{\emptyset}{1}{\ket{\qstate}}{\Qpurity{\Epair{q_1}{q_2}}{\purity}}}

    \inferrule[S-SplitMixed]{\vphantom{\ctx}}{\stepo{\ket{\qstate}}{\SplitA{\mixed}{\Qpurity{\Epair{q_1}{q_2}}{\mixed}}}{\emptyset}{1}{\ket{\qstate}}{(\Qpurity{q_1}{\mixed}, \Qpurity{q_2}{\mixed})}}

    \inferrule[S-SplitPure]{\ket{\qstate} = \ket{\qstate_1} \otimes \ket{\qstate_2} \otimes \ket{\qstate_0} \\\\ \domain{\ket{\qstate_1}} = \Refs{q_1} \\ \domain{\ket{\qstate_2}} = \Refs{q_2}}{\stepo{\ket{\qstate}}{\SplitA{\pure}{\Qpurity{\Epair{q_1}{q_2}}{\pure}}}{\emptyset}{1}{\ket{\qstate}}{(\Qpurity{q_1}{\pure}, \Qpurity{q_2}{\pure})}}

    \inferrule[S-Cast]{(unchecked)}{\stepo{\ket{\qstate}}{\Cast{\purity}{\Qpurity{q}{\purity'}}}{\emptyset}{1}{\ket{\qstate}}{\Qpurity{q}{\purity}}}
\end{mathpar}
}
}
\caption{Selected \LangName{} operational semantics. The full rules are given in~\Cref{sec:app-semantics}. Rule (\rulename{S-Cast}) is unsound because purity cannot be verified using $\ket{\qstate}$ on one execution alone. Combining this semantics with the static analysis in~\Cref{sec:analysis} enables building a sound interpreter.} \label{fig:dynamics}
\end{figure}

The operational semantics of \LangName{} manipulates a pure quantum state over an individual program execution. The semantics verifies separability conditions concretely using a \emph{separability test} primitive that determines whether the runtime state is separable or entangled.

\Cref{fig:dynamics} presents the updated step judgment, and in~\Cref{sec:app-semantics}, we modify the value judgment to state that $\val{\Qpurity{q}{\purity}}$. For \texttt{if}-expressions, because a different evaluation of the condition could have resulted in taking the other branch, the result may depend on the measurement outcome of some unspecified qubit. Thus, the rule for \texttt{if} casts its output to mixed.
The $\texttt{entangle}_\purity$ operator creates an entangled pair, and \SplitM{} destructs an entangled pair into an ordinary pair containing its components annotated as mixed. We next define semantics for \SplitP{} and \CastP{}.

\paragraph{Separability Tests} Given the pure state $\ket{\qstate}$ and a partition of its domain, a \emph{separability test} determines whether $\ket{\qstate}$ is separable into pure sub-states whose domains are the sets in the partition.

\paragraph{Verifying \texttt{split}} The \texttt{split} operator soundly verifies purity using a separability test on the runtime quantum state $\ket{\qstate}$. Sound verification is possible because the typing rule for $\SplitA{\pure}{e}$ guarantees that $e$ is pure.
The expression $\SplitA{\pure}{e}$ evaluates under state $\ket{\qstate}$ by first evaluating $e$ to a unique value $\Qpurity{\Epair{q_1}{q_2}}{\pure}$ and state $\ket{\qstate'}$. Purity guarantees that no qubit in $\Epair{q_1}{q_2}$ is entangled with the rest of $\ket{\qstate'}$. Thus, $\ket{\qstate'} = \ket{\qstate_{12}} \otimes \ket{\qstate_0}$ where $\domain{\ket{\qstate_{12}}} = \Refs{\Epair{q_1}{q_2}}$, and furthermore $\ket{\qstate_{12}}$ is identical across all executions.
To verify the purity of $q_1$ and $q_2$, the premises of the \SplitP{} step rule test whether $\ket{\qstate_{12}} = \ket{\qstate_1} \otimes \ket{\qstate_2}$ where $\domain{\ket{\qstate_1}} = \Refs{q_1}$ and $\domain{\ket{\qstate_2}} = \Refs{q_2}$.

\paragraph{Verifying \texttt{cast}} By contrast, the state of the current execution alone cannot indicate whether a \CastP{} is sound. For example, suppose that $x$ and $y$ are in a Bell pair, and consider the expression:
\[
(\Measure{x}, \Cast{\pure}{y}) : \tBool \times \tPurity{\qubit}{\pure}
\]
This expression returns a qubit $y$ that is in a mixed state because $y$ was entangled with $x$ before $x$ was measured, and hence the \CastP{} assertion is invalid.
However, on any particular execution of the program, $y$ appears to take on a pure state of either $\ket{0}$ or $\ket{1}$. Upon reaching the $\CastP{}$, it is not possible to determine that $y$ is mixed, because the probabilistic branch has already occurred.

Thus, the operational semantics for \CastP{} cannot precisely verify its separability condition and is necessarily unsound or incomplete. \Cref{fig:dynamics} presents an operational semantics in which \CastP{} is unsound and does not verify purity.
Instead, we present in~\Cref{sec:analysis} a sound static analysis that guarantees that all uses of \CastP{} occur on pure expressions.
Combining this analysis with the operational semantics allows \LangName{} to verify a large class of programs featuring purity assertions using only the runtime pure state of the program.

\paragraph{Executing Separability Tests}
The operational semantics uses a separability test primitive that operates on the runtime pure quantum state. Though this work does not focus on the hardware implementation of this primitive, we overview in~\Cref{sec:app-hardware} a proposed implementation based on work by~\citet{harrow-separability}.
In a pure-state simulator, \LangName{} verifies \SplitP{} using the Schmidt decomposition (\Cref{sec:background}).
\LangName{} computes the Schmidt coefficients of a state vector by interpreting it as a matrix and taking its singular value decomposition using well-studied algorithms~\citep{svd}. Then, \LangName{} checks that there is only one nonzero coefficient.

\subsection{Type Safety}

We now state the type safety properties for \LangName{}.
We introduce a judgment to state that an attempt to step a \SplitP{} aborts the program if its separability condition fails, denoted $\fail{\ket{\qstate}}{e}{\SplitOp}$ and fully defined in~\Cref{sec:app-semantics}. The progress theorem states that well-typed closed expressions are values, can step, or will fail at runtime due to failing a \SplitP{} separability condition:\noclub

\begin{theorem}[Progress] \label{thm:progress}
If $\hastype{\cdot}{\rctx}{e}{\type}$, then either $\val{e}$ or for all $\ket{\qstate}$ where $\rctx \subseteq \domain{\ket{\qstate}}$, we have either $\canstep{\ket{\qstate}}{e}$ or $\fail{\ket{\qstate}}{e}{\SplitOp}$.
\end{theorem}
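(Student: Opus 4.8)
The plan is to prove Progress by induction on the derivation of $\hastype{\cdot}{\rctx}{e}{\type}$, reusing the structure of the proof of \Cref{thm:core-progress} for the fragment inherited from \CoreLang{} and supplying cases for the constructs new to \LangName{}. First I would establish an updated canonical-forms lemma: a closed value of type $\tPurity{\qtype}{\purity}$ has the form $\Qpurity{q}{\purity}$ with $\qhastype{\rctx}{q}{\qtype}$, a closed quantum value of type $\qubit$ is a reference $\Qref{\alpha}$, and one of type $\tEpair{\qtype_1}{\qtype_2}$ is an entangled pair $\Epair{q_1}{q_2}$. The cases for the rules inherited from \CoreLang{} whose result types merely acquired purity annotations (\rulename{T-If}, \rulename{T-Qinit}, \rulename{T-U1}, \rulename{T-U2}, \rulename{T-Measure}) then go through exactly as before, since a purity annotation on an output never blocks a step: for \texttt{if}, if the condition is a value it is $\True$ or $\False$ by canonical forms and we step by \rulename{S-IfT}/\rulename{S-IfF} (which merely wrap the chosen branch in a $\Cast{\mixed}{\cdot}$), otherwise we use the induction hypothesis together with a congruence or a failure-propagation rule applied to the condition.

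For the genuinely new operators I would argue uniformly. The annotated quantum value $\Qpurity{q}{\purity}$ is a value by the updated value judgment. For $\EntangleOne{\purity}{e}$, rule \rulename{T-Entangle} gives $e : \tPair{\tPurity{\qtype_1}{\purity}}{\tPurity{\qtype_2}{\purity}}$; by the induction hypothesis $e$ is a value — necessarily $(\Qpurity{q_1}{\purity}, \Qpurity{q_2}{\purity})$ by canonical forms, so \rulename{S-Entangle} fires — or $e$ can step or fail, in which case the enclosing expression steps or fails by the matching congruence / propagation rule. The same template handles $\SplitA{\mixed}{e}$ (a value reduces to $\Qpurity{\Epair{q_1}{q_2}}{\mixed}$ and steps by \rulename{S-SplitMixed}) and $\Cast{\purity}{e}$ (once $e$ is the value $\Qpurity{q}{\purity'}$, it steps by the unchecked rule \rulename{S-Cast}).

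The crux is $\SplitA{\pure}{e}$. By \rulename{T-Split}, $e : \tPurity{(\tEpair{\qtype_1}{\qtype_2})}{\pure}$; if $e$ is not yet a value, use the induction hypothesis and congruence/propagation as above. If it is a value, canonical forms gives $e = \Qpurity{\Epair{q_1}{q_2}}{\pure}$, and since the qubit context $\rctx$ of $\SplitA{\pure}{e}$ coincides with $\Refs{\Epair{q_1}{q_2}}$ by the typing derivation, we have $\Refs{\Epair{q_1}{q_2}} \subseteq \domain{\ket{\qstate}}$. I would then split on the purely mathematical statement that $\ket{\qstate}$ factors as $\ket{\qstate_1} \otimes \ket{\qstate_2} \otimes \ket{\qstate_0}$ with $\domain{\ket{\qstate_1}} = \Refs{q_1}$ and $\domain{\ket{\qstate_2}} = \Refs{q_2}$: if such a factorization exists, \rulename{S-SplitPure} applies and $\canstep{\ket{\qstate}}{e}$ with a single deterministic outcome of probability $1$; if it does not, then by definition $\fail{\ket{\qstate}}{e}{\SplitOp}$.

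The main obstacle I anticipate is bookkeeping around the $\fail$ judgment rather than anything mathematically deep. I must check that its rule for $\SplitP$ is exactly the negation of the premises of \rulename{S-SplitPure}, so that the case split above is genuinely exhaustive, and that $\fail$ is closed under all evaluation contexts (application, pairing, \texttt{let}, the \texttt{if}-condition, unitaries, measurement, $\mathtt{entangle}$, $\mathtt{split}$, $\mathtt{cast}$), so that a failing subexpression never leaves the enclosing expression stuck — this closure is precisely what makes the three-way disjunction \emph{value / step / fail} stable under the inductive step. A secondary point worth stating is that $\Cast{\pure}{\cdot}$ is deliberately unchecked in this operational semantics, so it always steps and never fails; hence Progress does not by itself certify that the purity casts of a well-typed program are valid — that is deferred to the static analysis of \Cref{sec:analysis} — but it does guarantee that no well-typed \LangName{} expression is stuck.
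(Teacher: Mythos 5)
Your proposal matches the paper's own proof: the same canonical-forms lemmas followed by induction on the typing derivation, with congruence rules dispatching non-value subexpressions and, for $\SplitA{\pure}{e}$, a case split on whether $\ket{\qstate}$ admits the required factorization so that either \rulename{S-SplitPure} or the failure rule applies (and \rulename{S-Cast} always fires unchecked). Your explicit insistence that the $\fail$ judgment be exactly the negation of the \rulename{S-SplitPure} premises and be closed under all evaluation contexts is a bookkeeping point the paper's write-up elides but that is indeed needed for the three-way disjunction to survive the inductive step.
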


The proof is given in~\Cref{sec:progress-proof} and is analogous to~\Cref{thm:core-progress}, extended to the operators in \LangName{}. The preservation theorem is identical to~\Cref{thm:core-preservation} and proved analogously.

\subsection{Semantics Equivalence}

The denotational semantics constrain valid operational executions in the presence of the \CastP{} operator. The following theorem modifies~\Cref{thm:core-sem-equiv} by stating that if the denotation of a program is a valid density matrix, then it agrees with the operational semantics:
\begin{theorem} \label{thm:sem-equiv}
    Let the multiset $S = [ (p_i, \ket{\qstate_i}, v_i, O_i) \mid \stepostar{\ket{\qstate}}{e}{O_i}{p_i}{\ket{\qstate_i}}{v_i} ]$. Then, if
    \begin{align*}
        \Denote{e}{\phi}{\dyad{\qstate}} = (\rho, v)
        \text{ then }& \rho = \textstyle\sum_{(p_i, \ket{\qstate_i}, v_i, O_i) \in S} p_i \dyad{\qstate'_i}
        \text{and } (\ket{\qstate'_i}, v) \vequiv (\ket{\qstate''_i}, v') \\
        \text{ where }& \ket{\qstate_i} \otimes \textsf{defer}(O_i) \otimes \ket{0}^* = \ket{\qstate''_i}
        \text{and } v_i = \textsf{apply}(O_i, v').
    \end{align*}
\end{theorem}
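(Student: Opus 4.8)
The plan is to proceed by induction on the operational evaluation derivation $\stepostar{\ket{\qstate}}{e}{O_i}{p_i}{\ket{\qstate_i}}{v_i}$, closely following the proof of \Cref{thm:core-sem-equiv} but now carrying the standing hypothesis that $\Denote{e}{\phi}{\dyad{\qstate}} \neq \bot$. A first preliminary observation is that the denotational semantics of \Cref{fig:denotation} is strict in $\bot$: if the denotation of a compound expression is a genuine density-matrix/value pair, then so is the denotation of each sub-expression at the state on which it is invoked (for \texttt{if}, on the appropriate projected sub-state of each branch), so under the hypothesis the induction hypothesis applies to every sub-derivation. A second observation, proved simultaneously, is that under this hypothesis no operational execution of $e$ gets stuck — $\fail{\ket{\qstate_j}}{e_j}{\SplitOp}$ never arises — so that, modulo the same termination considerations already implicit in \Cref{thm:core-sem-equiv}, the multiset $S$ does not lose the executions that reach a $\SplitA{\pure}{\cdot}$ and $\rho$ is genuinely $\sum_{(p_i,\ket{\qstate_i},v_i,O_i) \in S} p_i\dyad{\qstate'_i}$. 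For the classical constructs and the \CoreLang{} quantum operators ($\Qinit$, $\Uone{\cdot}$, $\Utwo{\cdot}$, $\Measure{\cdot}$, \texttt{if}) the argument is verbatim that of \Cref{thm:core-sem-equiv}: left-to-right eager evaluation, matrix conjugation for unitaries, and deferred measurement plus the state-vector/density-matrix correspondence for \texttt{measure} and \texttt{if}; the only extra bookkeeping is the $\Cast{\mixed}{\cdot}$ wrapper that \rulename{S-IfT} and \rulename{S-IfF} place on the chosen branch, handled by the next paragraph.

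The annotation-only operators are immediate. For $\EntangleOne{\purity}{\cdot}$, $\CastM$, and $\SplitM$, \Cref{fig:denotation} gives a denotation equal to that of the argument, while \rulename{S-Entangle}, \rulename{S-SplitMixed}, and \rulename{S-Cast} at purity $\mixed$ leave $\ket{\qstate}$ untouched and only rewrite purity annotations and the entangled-pair/ordinary-pair distinction, both of which $\vdenote{\cdot}$ erases; the correspondence follows directly from the induction hypothesis on the argument. The purifying cast is the same: under the validity hypothesis the denotation of $\Cast{\pure}{e}$ equals that of $e$ (its separability condition holds, so the $\bot$ branch is not taken), \rulename{S-Cast} fires unconditionally changing only the annotation, and $\vdenote{\Qpurity{q}{\pure}} = \vdenote{\Qpurity{q}{\purity'}}$. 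That \rulename{S-Cast} is unsound in isolation is irrelevant here: this theorem concerns only the state/value correspondence, not whether a $\pure$ annotation is semantically justified — the latter is established separately via the static analysis.

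The crux is the purifying split $\SplitA{\pure}{e}$. By the validity hypothesis and \Cref{fig:denotation}, $\Denote{e}{\gamma}{\rho} = \rho_1 \otimes \rho_2 \otimes \rho_0$ with $\domain{\rho_1} = \Refs{q_1}$, $\domain{\rho_2} = \Refs{q_2}$, and $\rho_1,\rho_2$ pure, and $\denote{\SplitA{\pure}{e}} = \denote{e}$. Applying the induction hypothesis to $e$, its operational executions $(p_i,\ket{\qstate_i},v_i,O_i)$ satisfy $\rho_1\otimes\rho_2\otimes\rho_0 = \sum_i p_i\dyad{\chi_i}$, where each $\ket{\chi_i}$ is the deferred completion $\ket{\qstate_i} \otimes \textsf{defer}(O_i) \otimes \ket{0}^*$ up to qubit equivalence and $v_i$ matches $q_1,q_2$ accordingly. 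Since a convex combination of the positive-semidefinite marginals obtained by tracing out $\domain{\rho_0}$ equals the rank-one matrix $\rho_1\otimes\rho_2$, each summand's marginal on $\Refs{q_1}\sqcup\Refs{q_2}$ must itself equal $\rho_1\otimes\rho_2$; the Schmidt-decomposition separability criterion of \Cref{sec:background} then gives $\ket{\chi_i} = \ket{\chi_{12,i}}\otimes\ket{\chi_{0,i}}$ with $\dyad{\chi_{12,i}} = \rho_1\otimes\rho_2$, and applying the criterion again (as $\rho_1,\rho_2$ are each pure) yields $\ket{\chi_{12,i}} = \ket{\chi_{1,i}}\otimes\ket{\chi_{2,i}}$ with $\domain{\ket{\chi_{1,i}}} = \Refs{q_1}$, $\domain{\ket{\chi_{2,i}}} = \Refs{q_2}$. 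The qubits of $\Refs{q_1}\sqcup\Refs{q_2}$ are live — disjoint from those introduced by $\textsf{defer}(O_i)$ and the padding $\ket{0}^*$ — so this factorization restricts to the operational state $\ket{\qstate'}$ reached just before the split: $\ket{\qstate'} = \ket{\qstate_1}\otimes\ket{\qstate_2}\otimes\ket{\qstate_0}$ with $\domain{\ket{\qstate_1}} = \Refs{q_1}$, $\domain{\ket{\qstate_2}} = \Refs{q_2}$. Hence the premises of \rulename{S-SplitPure} are met, the step fires leaving $\ket{\qstate'}$ unchanged and returning $(\Qpurity{q_1}{\pure},\Qpurity{q_2}{\pure})$, whose denotational value $(\vdenote{q_1},\vdenote{q_2})$ agrees with $\denote{e}$'s; this closes the case and discharges the simultaneous "no stuck execution" obligation for splits.

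I expect the main obstacle to be exactly this last case: reconciling the \emph{denotational} separability condition — stated on a mixed density matrix that still contains the deferred-measurement qubits and denotational padding — with the \emph{operational} separability test of \rulename{S-SplitPure}, which runs on the collapsed pure state vector of a single execution containing none of those extra qubits. Bridging the two requires both the state-vector/density-matrix correspondence inherited from \Cref{thm:core-sem-equiv} and the linear-algebra fact that a pure state with a pure subsystem marginal factors across that subsystem (the place where the restriction to simple separability, cf. \Cref{sec:app-entanglement}, is doing real work), together with careful tracking of which qubits are live versus deferred-measured versus padding so that the density-matrix factorization genuinely descends to the operational state. The remaining cases either track \Cref{thm:core-sem-equiv} verbatim or are routine annotation bookkeeping.
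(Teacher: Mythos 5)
Your proposal is correct and follows essentially the same route as the paper: induction on the operational evaluation derivation extending the proof of \cref{thm:core-sem-equiv}, with the annotation-only operators ($\mathtt{entangle}_\purity$, $\CastM$, $\SplitM$) handled by denotational transparency, $\CastP$ handled under the standing non-$\bot$ hypothesis, and $\SplitP$ handled by aligning the denotational and operational separability conditions. In fact your treatment of the $\SplitP$ case --- tracing out the padding and deferred qubits, using extremality of pure states to force each execution's marginal to equal $\rho_1 \otimes \rho_2$, and then descending the factorization to the live operational state --- supplies exactly the linear-algebra detail that the paper's one-sentence claim that the two separability conditions ``align'' leaves implicit.
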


The proof of this theorem adds cases for $\texttt{entangle}_\purity$, \CastM{}, and \SplitM{}, which evaluate to values with the same denotation as their argument, as well as \SplitP{}, whose operational and denotational separability conditions align, and \CastP{} when its separability condition holds.

Similarly, we weaken~\Cref{thm:purity-denotation} to state that the denotation having a pure sub-state is a sufficient condition for purity. \Cref{def:purity} for purity holds vacuously when the program aborts at runtime and there is no execution to a final value. While the denotation of an illegal assertion is always $\bot$, corresponding to the program aborting at runtime, the operational semantics does not verify the separability condition of \CastP{} and may not abort.

\subsection{Purity Soundness}

We now prove that under the operational semantics, the purity type system excluding the \CastP{} operator is sound. In this section, we assume expressions contain no instances of \CastP{}. We first establish a relationship stating that quantum states respect pure annotations in expressions:

\begin{definition}[Compatibility]
    An expression $e$ is \emph{compatible} with quantum state $\ket{\qstate}$, denoted $\iscompat{\ket{\qstate}}{e}$, if for every $\Qpurity{q}{\pure}$ appearing within $e$, we have $\sempure{\Qpurity{q}{\pure}}{\ket{\qstate}}$.
\end{definition}

For example, the expression $\Qpurity{\Qref{\alpha}}{\pure}$ of pure type is only pure if qubit $\alpha$ is separable from the rest of the system in $\ket{\qstate}$.
We maintain the compatibility property through a preservation theorem for \LangName{} that augments~\Cref{thm:core-preservation}:

\begin{theorem}[Preservation] \label{thm:preservation}
    If $\hastype{\ctx}{\rctx}{e}{\type}$ and $\rctx \subseteq \domain{\ket{\qstate}}$ and $\iscompat{\ket{\qstate}}{e}$ and $\step{\ket{\qstate}}{e}{}{\ket{\qstate'}}{e'}$, then $\hastype{\ctx}{\rctx'}{e'}{\type}$ where $\rctx' \subseteq \domain{\ket{\qstate'}}$ and $\iscompat{\ket{\qstate'}}{e'}$.
\end{theorem}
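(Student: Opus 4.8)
The plan is to proceed by induction on the derivation of $\step{\ket{\qstate}}{e}{}{\ket{\qstate'}}{e'}$, just as for \Cref{thm:core-preservation}, and to carry the compatibility invariant $\iscompat{\ket{\qstate}}{e}$ alongside the typing invariant at every step. The typing-preservation part of the conclusion, $\hastype{\ctx}{\rctx'}{e'}{\type}$ with $\rctx' \subseteq \domain{\ket{\qstate'}}$, is handled exactly as in \CoreLang{}, with routine new cases for \rulename{S-Entangle}, \rulename{S-SplitMixed}, \rulename{S-SplitPure}, and \rulename{S-IfT}/\rulename{S-IfF} (which now wrap the chosen branch in a $\CastM$); I would not belabor these. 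The substance of the theorem is propagating $\iscompat{\ket{\qstate'}}{e'}$, i.e. showing that every $\Qpurity{q}{\pure}$ occurring in $e'$ is pure under $\ket{\qstate'}$.

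First I would dispatch the structural and congruence cases. For congruence steps, where a subexpression $e_1 \mapsto e_1'$ inside some context, the $\Qpurity{q}{\pure}$ subterms of $e'$ are either inherited unchanged from the inactive parts of $e$ (and those are pure under $\ket{\qstate}$, hence under $\ket{\qstate'}$ once we observe that the relevant qubits' separability structure is unaffected, or more carefully that purity is a property of what happens after this point) or lie within $e_1'$, where the inductive hypothesis applies after noting $\iscompat{\ket{\qstate}}{e_1}$ follows from $\iscompat{\ket{\qstate}}{e}$. For \rulename{S-App} and \rulename{S-Let}, which substitute values for variables and leave $\ket{\qstate}$ fixed, I must check that substitution preserves compatibility: any $\Qpurity{q}{\pure}$ in $[v/x]e$ either came from $e$ or from $v$, and in both cases it was already required pure under $\ket{\qstate}$ by compatibility of the redex. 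The genuinely new cases are the \LangName{} operators. For \rulename{S-Entangle}, the output $\Qpurity{\Epair{q_1}{q_2}}{\pure}$ is pure under $\ket{\qstate}$ precisely because $\Qpurity{q_1}{\pure}$ and $\Qpurity{q_2}{\pure}$ were (the premise of \rulename{T-Entangle} plus compatibility of the argument gives each $q_i$ separable, and the union of two separable disjoint blocks is separable — here I would lean on \Cref{thm:purity-denotation} and the tensor structure). For \rulename{S-SplitPure}, the premise literally supplies $\ket{\qstate} = \ket{\qstate_1}\otimes\ket{\qstate_2}\otimes\ket{\qstate_0}$ with the right domains, so each $\Qpurity{q_i}{\pure}$ in the output is immediately pure under $\ket{\qstate}$. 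For \rulename{S-IfT}/\rulename{S-IfF} the branch is wrapped in $\CastM$, so no new $\Qpurity{q}{\pure}$ is introduced at the top level and the inner ones come from a subexpression already covered by compatibility of the \texttt{if}.

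The main obstacle — and the place I expect to spend real effort — is the \rulename{S-SplitMixed} case and, more subtly, the interaction between purity (which quantifies over \emph{all} future executions of the residual expression, per \Cref{def:purity}) and the fact that a single step may change $\ket{\qstate}$ (under \rulename{S-U1}, \rulename{S-U2}, \rulename{S-Qinit}, and measurement). The concern is that $\sempure{\Qpurity{q}{\pure}}{\ket{\qstate}}$ does \emph{not}, on its face, imply $\sempure{\Qpurity{q}{\pure}}{\ket{\qstate'}}$ for an arbitrary successor state. The resolution I would pursue: when the residual expression $e'$ still syntactically contains $\Qpurity{q}{\pure}$ after a gate or measurement step, that occurrence must be in a part of $e$ that is not yet active (the gate/$\Qinit$/measure operated on references, not on the annotated quantum value), so the qubits of $q$ are disjoint from the qubits touched by the step; purity of $q$ under $\ket{\qstate}$ then transfers to $\ket{\qstate'}$ because applying a unitary to, initializing, or measuring \emph{other} qubits cannot create entanglement across the cut separating $\Refs{q}$ from its complement. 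Making this precise requires a small lemma — roughly, if $\Refs{q}$ is separable in $\ket{\qstate}$ and $\ket{\qstate} \mapsto \ket{\qstate'}$ by an operation whose footprint is disjoint from $\Refs{q}$, then $\Refs{q}$ is separable in $\ket{\qstate'}$ — which follows from the tensor-product form and the locality of gates and single-qubit measurement. I would isolate and prove that lemma first, then the remaining cases are bookkeeping; \rulename{S-Cast} is excluded by the standing assumption that $e$ contains no \CastP{}, so it contributes no case.
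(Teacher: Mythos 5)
Your proposal is correct and follows essentially the same route as the paper: induction on the step derivation, checking that each newly introduced $\Qpurity{q}{\pure}$ annotation is pure, with the ``locality'' lemma you identify (operations whose footprint is disjoint from $\Refs{q}$ preserve its separability) being exactly the auxiliary lemma the paper states before its proof. The only minor miscalibration is that you flag \rulename{S-SplitMixed} as a hard case, whereas it is trivial (it introduces only $\mixed$ annotations and leaves the state unchanged); the genuine work is where you correctly place it, in the gate/measurement and congruence cases.
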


The proof is by induction on the derivation of $\step{\ket{\qstate}}{e}{}{\ket{\qstate'}}{e'}$ and given in~\Cref{sec:preservation-proof}.
The main soundness theorem states that an expression with pure type is pure. Assuming the expression satisfies runtime verification for \SplitP{}, it evaluates to a unique final value and state.

\begin{theorem}[Purity Soundness] \label{thm:purity}
    If $\hastype{\cdot}{\rctx}{e}{\tPurity{\qtype}{\pure}}$, $\rctx \subseteq \domain{\ket{\qstate}}$, and $\iscompat{\ket{\qstate}}{e}$, then $\sempure{e}{\ket{\qstate}}$.
\end{theorem}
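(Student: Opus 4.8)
The plan is to pass to the denotational semantics and prove the corresponding claim about $\Denote{e}{\phi}{\dyad{\qstate}}$ by induction on the typing derivation, using a strengthened invariant. Since $\sempure{e}{\ket{\qstate}}$ (\Cref{def:purity}) holds vacuously whenever no operational execution of $e$ from $\ket{\qstate}$ reaches a value, by \Cref{thm:sem-equiv} together with the weakening of \Cref{thm:purity-denotation} noted for \LangName{} (a pure sub-state of the denotation suffices for purity) it is enough to show that $\Denote{e}{\phi}{\dyad{\qstate}}$ is $\bot$ or equals $(\rho, v)$ with $\rho = \rho_0 \otimes \rho_1$, $\domain{\rho_0} = \Refs{v}$, and $\rho_0$ a pure state. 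The $\bot$ case reduces to a failed separability check; because \CastP{} is assumed absent and rule \rulename{T-Split} forces the argument of every \SplitP{} to have pure type, that argument evaluates (by the lemma below) to one fixed pure sub-block, identical across executions, so the check is execution-independent and fails uniformly. Every execution reaching it therefore aborts via $\fail{\ket{\qstate'}}{\cdot}{\SplitOp}$, and with \Cref{thm:preservation} one concludes that no execution of $e$ reaches a value, so purity is again vacuous.

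A direct induction on the derivation of $\hastype{\cdot}{\rctx}{e}{\tPurity{\qtype}{\pure}}$ fails, since subexpressions carry mixed types and open contexts. I would instead prove a lemma generalizing over all value contexts and input matrices. Say a value $v$ \emph{respects} a type $\type$ in a partial density matrix $\rho$ when, writing $A_1, \dots, A_k$ for the qubit sets referenced at the positions of $v$ typed $\tPurity{\cdot}{\pure}$ according to $\type$, we have $\rho = \rho_{A_1} \otimes \cdots \otimes \rho_{A_k} \otimes \rho_{\mathrm{rest}}$ with each $\rho_{A_i}$ pure; extend this to a value context respecting a typing context. The lemma states: if $\hastype{\ctx}{\rctx}{e}{\type}$, $\gamma$ respects $\ctx$ in $\rho$, every $\Qpurity{q}{\pure}$ occurring in $e$ references a pure sub-block of $\rho$ separable from all of the above, and the qubits of $\rctx$, $\gamma$, and $e$ lie in $\domain{\rho}$, then $\Denote{e}{\gamma}{\rho}$ is $\bot$ or equals $(\rho', v)$ with $v$ respecting $\type$ in $\rho'$. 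The theorem is the instance $\gamma = \phi$, $\rho = \dyad{\qstate}$, $\ctx = \cdot$, where $\iscompat{\ket{\qstate}}{e}$ supplies exactly the hypothesis on the $\Qpurity{q}{\pure}$ subterms, and mutual separability of the blocks involved follows from a partial-trace lemma (disjoint qubit sets whose reduced states are individually pure are jointly a product of pure states).

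I would prove the lemma by induction on the typing derivation, following \Cref{fig:denotation}. $\Qinit\ ()$ appends a fresh $\dyad{0}$, a pure separable block. For \rulename{T-U1} and \rulename{T-U2}, conjugating $\rho_A \otimes \rho_{\mathrm{rest}}$ by a unitary on $A$ gives $(U\rho_A U^\dagger) \otimes \rho_{\mathrm{rest}}$, and a unitary image of a pure state is pure. For \rulename{T-Entangle} the argument is a pair of two pure-typed values, each a pure separable block by the IH and mutually separable by the invariant, so their product is a pure separable block -- the one place the mutual-separability clause is essential. \rulename{T-Measure} only relabels $\Qref{\alpha}$ to $\QrefM{\alpha}$, moving $\alpha$ into $\rho_{\mathrm{rest}}$ (a projection of a separable pure qubit is a scaled pure product). \rulename{T-If} has a mixed result type, so nothing is required of pure positions of the output; one checks that feeding the conditioned matrices $P_\alpha \rho P_\alpha$ and $(I - P_\alpha)\rho(I - P_\alpha)$ into the branches preserves the hypotheses (the projected qubit sits in $\rho_{\mathrm{rest}}$) and that $\textsf{match\_sizes}$ and the final sum yield a valid partial density matrix. $\CastM$, $\SplitM$, and $\EntangleOne{\mixed}{\cdot}$ are immediate, as their denotation is that of the argument and their result type is entirely mixed. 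For the pure case of \rulename{T-Split} one reads $\rho_1 \otimes \rho_2 \otimes \rho_0$ with $\rho_1, \rho_2$ pure directly off the side condition in the denotation (or obtains $\bot$), giving the two pure output blocks. The variable, application, pair, and \texttt{let} cases thread the invariant through, using that substituting values into $\gamma$ preserves ``respects''; \rulename{T-Cast} does not arise by assumption.

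I expect the main obstacle to be fixing the strengthened invariant -- the ``respects'' relation and the mutual-separability bookkeeping -- so that it is simultaneously preserved by every construct (notably the projections and recombination inside \texttt{if}), implied by $\iscompat{\ket{\qstate}}{e}$ at the top level, and strong enough to power the \rulename{T-Entangle} case. The supporting linear-algebra facts are a secondary obstacle: chiefly the partial-trace lemma that disjoint qubit sets with individually pure reduced states are jointly a product of pure states, and the reconciliation of the mixed-state separability condition built into the denotation of \SplitP{} with the runtime pure-state separability test -- together with the observation that \SplitP{}'s pure-typed argument makes that test deterministic -- which is precisely what lets the $\bot$/abort case be dismissed as vacuous.
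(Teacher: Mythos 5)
Your overall strategy --- working denotationally, establishing that $\Denote{e}{\phi}{\dyad{\qstate}}$ factors with a pure sub-block on $\Refs{v}$, and transferring back through \Cref{thm:sem-equiv} and the weakened \Cref{thm:purity-denotation} --- is a genuinely different route from the paper, which stays entirely operational and argues by logical relations over the step semantics. The denotational route is workable for the first-order fragment, but as stated it has a gap at function types. Your ``respects'' relation is defined only in terms of the qubit sets referenced at pure-typed positions of a value, so at type $\type_1 \to \type_2$ it imposes no condition at all (a function value references no qubits). In the \rulename{T-App} case the inductive hypothesis therefore tells you nothing about $f(v)(\rho_2)$: knowing which qubits $f$ touches does not tell you that applying $f$ to an argument respecting $\type_1$ yields a result respecting $\type_2$. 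You cannot recover this by unfolding the function body, because the body's typing derivation is not a sub-derivation of the application's, and the language does pass functions as first-class arguments (e.g.\ the oracle parameters in the Deutsch and Grover benchmarks). The fix is to make ``respects'' a logical relation at arrow type --- arguments respecting $\type_1$ are sent to results respecting $\type_2$ --- and to establish it for every top-level declaration by a separate induction over the program; this is exactly the shape of the paper's proof, which defines $\sempuret{e}{\ket{\qstate}}{\type_1 \to \type_2}$ as $\forall e'.\ \sempuret{e'}{\ket{\qstate}}{\type_1} \Rightarrow \sempuret{\App{e}{e'}}{\ket{\qstate}}{\type_2}$ and closes open terms under substitutions satisfying the relation.

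A second, smaller gap is the round trip between the two semantics in your $\bot$ case. To conclude that a failed \SplitP{} aborts \emph{every} execution you need the operational fact that its pure-typed argument reaches the same intermediate state on all runs; but your inductive hypothesis is a statement about the denotation of that argument, and \Cref{thm:sem-equiv} relates denotations to executions only for whole evaluations from a fixed initial state, not for sub-expressions reached mid-derivation. You would have to prove a compositional version of \Cref{thm:sem-equiv}, or first establish operational purity of the sub-expression --- which is most of the theorem. The paper sidesteps both issues by never leaving the operational semantics: determinism of the argument's evaluation is exactly the inductive hypothesis at pure type, and a failed separability check is absorbed into the relation because purity holds vacuously when all evaluations abort.
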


The proof is by logical relations. For the relation, we define purity at a type $\type$, lifting purity to function types by stating that they take pure inputs to pure outputs and to pairs by stating that their components are pure.
We give the full proof in~\Cref{sec:purity-proof}, strengthening the theorem to open terms using a substitution of free variables for pure expressions, and then proceeding by induction on the derivation of $\hastype{\ctx}{\rctx}{e}{\type}$ using the strengthened inductive hypothesis.

\section{Static Analysis for Purity} \label{sec:analysis}

In this section, we present a static analysis guaranteeing that all uses of \CastP{} operators are sound, obviating the need to verify them using mixed-state simulation.
This analysis verifies that the qubits owned by an expression are separable from all others in the system, including those that were measured.
The analysis is sound and conservative, relying on the fact that qubit $\alpha$ may only become entangled with qubit $\beta$ by entering the same entangled pair as $\beta$ or a qubit $\gamma$ that is entangled with $\beta$. Thus, the analysis tracks possibly-entangled qubits by a variant of data-dependence analysis.

\subsection{Tracking Split Entangled Pairs}

If a pure entangled pair is split into two components $e_1$ and $e_2$, any expression containing only $e_1$ or $e_2$ is potentially mixed. For an expression to be pure it must contain, for every $\SplitM{}$ in the program, either zero or both of its components.
Based on this observation, the analysis tracks for each expression the fraction of each \SplitM{} in the program it contains.

Our approach is similar to fractional permissions~\citep{boyland} in that we associate each type with a fractional quantity that is divided upon destructing a type into constituents.
Specifically, we associate each expression with one fraction per entangled pair created by the program, representing the components of the pair.
We modify the type system from~\Cref{sec:language} to generalize purities $\pure$ and $\mixed$ to a data structure that we call a \emph{history}.

\begin{definition}
    A \emph{history} is a linear combination $\sum_i c_i x_i$ where each $c_i \in \mathbb{Q}$, $0 \le c_i < 1$ and each $x_i$ is a symbol distinguishable from $x_j$ when $i \neq j$.
\end{definition}

The analysis assigns each instance of \SplitM{} in the program a unique index $i$ and associates it with the symbol $x_i$. An expression's history specifies the fraction of each \SplitM{} it contains.

\paragraph{Split and Combine}
We define two operations to manipulate histories. Letting histories $f = \sum_{i=0}^N a_i x_i$ and $g = \sum_{i=0}^N b_i x_i$, and a fresh index $j > N$, define:
\begin{align*}
    \textstyle
    \SplitPoly{f}{j} = \frac{1}{2}\left(f + x_j\right) \text{ and }\,
    \CombinePoly{f}{g} = \sum_{i=0}^N \mathrm{frac}\left(a_i + b_i\right)x_i
\end{align*}
The $\SplitPoly{f}{j}$ operator adds a new term $x_j$ to $f$ and halves every coefficient, representing the components that each hold half of the parent. The $\CombinePoly{f}{g}$ operator adds two histories term-wise, taking the fractional part of each coefficient so that fractions adding to one cancel. The definition performs this cancellation because if an expression contains either zero or both components of every \SplitM{} in the program, it must be pure.

\paragraph{Pure Expressions}
The history containing zero terms contains no fractional component of any $\SplitM{}$ in the program, meaning it represents a pure expression. We denote this history $\pure$ for sake of continuity.
As an example, suppose $\Epair{q_1}{q_2}$ has history $f = \frac{1}{2}x_1 + \frac{3}{4}x_2$. Applying $\SplitM{}$ produces two expressions with history $\SplitPoly{f}{3} = \frac{1}{4}x_1 + \frac{3}{8}x_2 + \frac{1}{2}x_3$. Neither expression can become pure unless combined with the other to cancel the $x_3$ term. Now suppose that value $q_3$ has $g = \frac{1}{2}x_1 + \frac{1}{4}x_2$. Then $\Epair{\Epair{q_1}{q_2}}{q_3}$ has history $\CombinePoly{f}{g} = \pure$ and is pure. Thus, $\Epair{\Epair{q_1}{q_2}}{q_3}$ has no outside entanglements because it cannot contain a fraction of any $\SplitM{}$ in the program.

\paragraph{Analysis Type System}
Formally, the analysis accepts a program by assigning it a type under a modified type system with purities replaced by histories.
\Cref{fig:analysis-types} presents the typing judgment $\hastypea{\ctx}{e}{\type}$ used by the analysis.
Nearly all of its defining rules are derived directly from the original type system, and only the shown typing rules are modified substantially.

\begin{figure}
\resizebox{\textwidth}{!}{
\parbox{1.1\textwidth}{
\begin{mathpar}
    \inferrule[A-If]{\hastypea{\ctx_1}{e}{\tBool} \\ \hastypea{\ctx_2}{e_1}{\tPurity{\qtype}{f}} \\ \hastypea{\ctx_2}{e_2}{\tPurity{\qtype}{g}}}{\hastypea{\ctx_1, \ctx_2}{\If{e}{e_1}{e_2}}{\tPurity{\qtype}{\mixed}}}

    \inferrule[A-Entangle]{\hastypea{\ctx}{e}{\tPair{\tPurity{\qtype_1}{f}}{\tPurity{\qtype_2}{g}}} \quad h = \CombinePoly{f}{g}}{\hastypea{\ctx}{\EntangleOne{\purity}{e}}{\tPurity{(\tEpair{\qtype_1}{\qtype_2})}{h}}}

    \inferrule[A-SplitMixed]{\hastypea{\ctx}{e}{\tPurity{(\tEpair{\qtype_1}{\qtype_2})}{f}} \quad j\ \mathsf{fresh} \quad g = \SplitPoly{f}{j}}{\hastypea{\ctx}{\SplitA{\mixed}{e}}{\tPair{\tPurity{\qtype_1}{g}}{\tPurity{\qtype_2}{g}}}}

    \inferrule[A-CastMixed]{\hastypea{\ctx}{e}{\tPurity{\qtype}{f}}}{\hastypea{\ctx}{\Cast{\mixed}{e}}{\tPurity{\qtype}{f}}}

    \inferrule[A-CastPure]{\hastypea{\ctx}{e}{\tPurity{\qtype}{\pure}}}{\hastypea{\ctx}{\Cast{\pure}{e}}{\tPurity{\qtype}{\pure}}}
\end{mathpar}
}
}
\caption{Selected rules of analysis type system. The full rules are given in~\Cref{sec:app-analysis}.} \label{fig:analysis-types}
\end{figure}

An \texttt{if}-expression obtains a special history $\mixed$ where $\SplitPoly{\mixed}{j} = \mixed$ and $\CombinePoly{f}{\mixed} = \mixed$ for any history $f$. The reason is that the static analysis cannot know which branch the \texttt{if} takes.
The rule for $\mathtt{entangle}_{\purity}$ invokes $\CombinePoly{f}{g}$ on its argument, disregarding the annotation $\purity$. The rule for \SplitM{} invokes $\SplitPoly{f}{j}$ on its arguments to compute the type of the result. The \CastM{} operator has the same history as its argument, and the \CastP{} operator is only valid when the analysis knows its argument to be pure.

\subsection{Purity Soundness}

The following theorem states that a well-typed program that passes the analysis will satisfy its purity specification at runtime. An expression of pure type is pure, and assuming it satisfies runtime verification for \SplitP{}, it evaluates to a unique final value and state.

\begin{theorem}[Purity Soundness] \label{thm:analysis-soundness}
    If $\hastypea{\cdot}{e}{\tPurity{\qtype}{\pure}}$, then $\sempure{e}{\ket{\cdot}}$.
\end{theorem}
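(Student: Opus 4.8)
The plan is to prove purity soundness by a logical relations argument that strengthens the proof of \Cref{thm:purity} from plain purities to histories. Alongside the operational pure state $\ket{\qstate}$ I would carry a \emph{ghost} map $\Sigma$ that assigns to each split index $j$ introduced so far the pair of qubit-reference sets $(P^0_j, P^1_j)$ that were the two components of the $\SplitM{}$ with index $j$ at the moment it fired. For a history $f = \sum_i c_i x_i$, a quantum value $q$, a state $\ket{\qstate}$, and a consistent $\Sigma$, I would define a relation $q \sim_f \qtype$ expressing that every entanglement in $\ket{\qstate}$ crossing the boundary of $\Refs{q}$ is \emph{accounted for} by the split indices occurring in $f$: concretely, there is a set $T$ of qubits disjoint from $R = \Refs{q}$, assembled from the \emph{companion halves} of the splits whose symbol $x_j$ occurs in $f$ (the half of pair $j$ that $q$ does not own, with $c_j$ dictating how much of pair $j$ is contributed), such that $\ket{\qstate} = \ket{\qstate_{R \sqcup T}} \otimes \ket{\qstate_0}$. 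The base case is the one that matters: when $f = \pure$ there is no symbol from which to draw a companion, so $T$ is forced empty and the relation says exactly that $\Refs{q}$ is separable in $\ket{\qstate}$, i.e. $\Qpurity{q}{\pure}$ is pure there (\Cref{def:purity}); when $f = \mixed$ the relation is vacuous. As in \Cref{thm:purity}, $\sim$ is lifted to function types by ``maps related inputs to related outputs'' (a case essentially unchanged, since function type annotations contain only $\pure$ and $\mixed$), to pairs component-wise, and to closed expressions by ``every execution evaluates to a related value, all such values agreeing up to $\vequiv$''; the statement is then strengthened to open terms by quantifying over substitutions of free variables by related values.

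I would then prove the fundamental lemma: if $\hastypea{\ctx}{e}{\type}$, the current $\ket{\qstate}$ and $\Sigma$ are consistent, and $e$'s free variables are substituted by values related at their types, then $e$ evaluates to a value related at $\type$ under a consistently extended $\Sigma$. The induction is on the analysis typing derivation; most cases reuse the corresponding step from \Cref{thm:preservation} and \Cref{thm:purity}. The substantive cases are: $\Qinit\ ()$, whose fresh qubit is trivially separable and so has history $\pure$; $\Utwo$, which acts only on the two qubits already owned by an entangled pair and hence rearranges entanglement inside $R$ without changing which companions are needed, so $\sim_f$ is preserved; $\mathtt{entangle}_\purity$, where $\CombinePoly{f}{g}$ cancels precisely the $x_j$ whose two halves are now jointly owned, so the companion witness shrinks accordingly and separability is re-established for the smaller boundary; $\SplitM{}$, where a fresh $j$ is recorded in $\Sigma$ with the two current reference sets, each component receives history $\SplitPoly{f}{j}$, and the companion set of each half is the opposite half together with the inherited companions; \texttt{if}, whose result history is the absorbing $\mixed$ for which $\sim$ imposes nothing; and the crucial \CastP{} case, where the typing premise forces the operand to have history $\pure$, so by the inductive hypothesis its owned qubits are genuinely separable from the remainder, the cast is a semantic no-op on the state, and the resulting value is related at $\pure$. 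The purely arithmetic obligations -- that $\mathsf{Split}$ and $\mathsf{Combine}$ keep coefficients in $[0,1)$, that complementary halves cancel under $\mathsf{Combine}$, and that $\pure$ is the unit and $\mixed$ the absorbing element of $\mathsf{Combine}$ -- are routine.

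The main obstacle is designing the ghost bookkeeping and the companion-set part of $\sim_f$ so that they survive the interleaving of $\SplitM{}$, re-$\mathtt{entangle}$, $\Utwo$, and further $\SplitM{}$: splits may be nested and partial, a half of one split may itself be split again, and a coefficient $c_i$ records exactly what fraction of ancestor pair $i$ a value carries, so I must verify that the $\mathsf{Split}$/$\mathsf{Combine}$ arithmetic stays in lockstep with the geometric fact about which unions of ghost halves can be peeled off $\ket{\qstate}$ as a tensor factor -- in particular the lemma that two tensor factorizations along the boundaries of disjoint owned sets can be refined into one along their union, which is what makes the $\mathtt{entangle}_\purity$ case go through. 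A secondary point is that a top-level function may be invoked several times, so its internally generated split indices must be freshened per invocation while its declared ($\pure$/$\mixed$) result type reflects only what escapes; since an escaping value of pure type has had all internal splits cancel within the body, the function relation reduces to the ``pure in, pure out'' clause exactly as in \Cref{thm:purity}.

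Finally, instantiating the fundamental lemma at the top level -- empty context, state $\ket{\cdot}$, empty ghost map, and no substitution -- with the hypothesis $\hastypea{\cdot}{e}{\tPurity{\qtype}{\pure}}$ gives that $e$ evaluates along every execution to a value related at history $\pure$, which by the base case of $\sim$ unwinds to exactly $\sempure{e}{\ket{\cdot}}$ in the sense of \Cref{def:purity}.
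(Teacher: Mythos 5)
Your proposal is correct in substance and follows the same basic decomposition as the paper: an induction on the analysis typing derivation $\vdash_A$, with the load-bearing observation that a symbol $x_j$ surviving in a history witnesses a potential entanglement with the companion half of $\SplitM{}$ number $j$, so that cancellation under $\CombinePoly{f}{g}$ corresponds exactly to re-owning both halves, $\mixed$ is absorbing through \texttt{if}, and \rulename{A-CastPure} is discharged because its operand already has history $\pure$. Where you differ is in how that invariant is carried: the paper piggybacks on the logical-relations proof of~\Cref{thm:purity} for all the old cases and handles \rulename{A-Entangle} by a localized contradiction argument ("if the pair were entangled with an unowned $\alpha$, some $\SplitM{}$ must have put $\alpha$ on the other side, so $x_i$ appears in $f$ but not $g$ and cannot cancel in $h$"), which quietly assumes as a global property of executions precisely the companion-half bookkeeping that you instead make explicit with the ghost map $\Sigma$ and the history-indexed relation $q \sim_f \qtype$. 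Your version is more self-contained and turns the paper's informal appeal into a locally maintained inductive invariant, which is arguably what a fully rigorous proof requires; the price is the obstacle you correctly identify — keeping the $\SplitPoly{}{}$/$\CombinePoly{}{}$ arithmetic in lockstep with which unions of ghost halves can actually be peeled off $\ket{\qstate}$ as a tensor factor under nested and interleaved splits — which is nontrivial and is exactly the step the paper's shorter argument elides. The paper's route buys brevity by reusing~\Cref{thm:strengthened-purity-soundness}; yours buys a sharper statement of why the fractional-permission arithmetic is a sound over-approximation of entanglement. Either way the top-level instantiation (empty context, $\ket{\cdot}$, history $\pure$) yields $\sempure{e}{\ket{\cdot}}$ as required.
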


We give the full proof in~\Cref{sec:analysis-soundness-proof}, proceeding by induction on $\hastypea{\cdot}{e}{\tPurity{\qtype}{\pure}}$ to show that an expression with pure history does not own any qubit that is entangled with any unowned qubit. A program passing the static analysis may only invoke \CastP{} on expressions of history $\pure$, which are never the results of \texttt{if}-expressions or entangled with unowned qubits, and hence are pure.

\section{Evaluation} \label{sec:evaluation}

We now implement the type checker, static analysis, and interpreter for \LangName{},\footnotemark{} and use them to analyze and execute a set of benchmark programs and answer the following research questions:
\footnotetext{The implementation is available at \url{https://www.github.com/psg-mit/twist-popl22}.}

\begin{enumerate}[label=\emph{RQ\arabic*.}]
\item
Is \LangName{} expressive enough to permit writing standard quantum algorithms?

\item
Does \LangName{} reject programs that contain bugs caused by violating purity specifications?

\item
How does \LangName{}'s runtime performance compare between pure- and mixed-state simulators?

\item
What is the runtime overhead of \LangName{}'s purity assertions in simulation?

\item
Is \LangName{} expressive enough to permit programs that existing languages disallow?
\end{enumerate}

\subsection{Implementation}

We implemented the interpreter in OCaml, using Quantum++~\citep{qpp}, a state-of-the-art C++ quantum simulator. We perform measurements and unitary gates by invoking Quantum++ functions. The implementation adds support for three-qubit Toffoli (CCNOT) and Fredkin (CSWAP) gates and arbitrary (controlled) phase rotation gates. We implemented purity assertions using both pure- and mixed-state simulation (\Cref{sec:language}), which Quantum++ natively supports.

\subsection{Methodology}

For RQ1, we wrote a set of benchmark programs, described in the next section, and annotated each with purity specifications. For RQ2, we modified several programs to introduce a small bug, for example deleting a vital unitary gate or using an incorrect gate, such that the program would yield incorrect results. We then executed the type checker, static analysis, and both pure- and mixed-state runtime verification on the programs. We did not execute later analysis passes on ill-typed programs, nor did we execute the pure-state simulator when the static analysis failed.

For RQ3 and RQ4, we wrote a family of programs invoking runtime verification on an increasing number of qubits, described in the next section. We executed them using both the pure- and mixed-state simulators and measured their execution time, specifically the portion of time spent performing runtime verification. We executed all benchmarks on a MacBook Pro with 2.4GHz 8-core Intel Core i9 processor and 64 GB of RAM. We invoked optimization level -O3 and enabled OpenMP (used by Quantum++). All reported timings are the average of 10 executions.\noclub

For RQ5, we compare the results of \LangName{}'s analyses on the benchmarks with the type system of Silq~\citep{silq}, a recent quantum programming language.

\subsection{Benchmark Programs}

We implemented a range of programs featuring entangled states, including well-known quantum algorithms. These benchmarks include \emph{Teleport-Deferred}, the example from~\Cref{sec:examples} of deferred-measurement teleportation; a classical AND oracle function; a faulty substitution of a Bell state by a Greenberger-Horne-Zeilinger state~\citep{ghz}; Deutsch, Deutsch-Jozsa~\citep{deutsch}, and Grover's~\citep{grover} algorithms; a quantum Fourier transform~\citep{qft}; and Shor's nine-qubit error-correcting code~\citep{shor-code}. For the benchmarks \emph{AndOracle}, \emph{Deutsch}, \emph{DeutschJozsa}, \emph{Grover}, and \emph{ShorCode}, we also implemented erroneous variants that violate their purity specifications and yield incorrect results.

\begin{figure}[t]
\begin{lstlisting}[numbers=left]
fun teleport (q1 : qubit<P>) : qubit<P> =
  let (q2 : qubit<M>, q3 : qubit<M>) = bell_pair () in
  let (q1 : qubit<M>, q2 : qubit<M>) = CNOT (q1, q2) in
  let q1 : qubit<M> = H (q1) in
  let q3 = if measure (q2) then X (q3) else q3 in
  let q3 = if measure (q1) then Z (q3) else q3 in
  cast<P>(q3)
\end{lstlisting}
\caption{\emph{Teleport-Measure} benchmark, a variant of \emph{Teleport-Deferred} that uses classical \texttt{if}-expressions.} \label{fig:teleport-measure}
\end{figure}

The \emph{Teleport-Measure} benchmark (\Cref{fig:teleport-measure}) is a variant of \emph{Teleport-Deferred} that does not use quantum conditional gates. Instead, the program measures \texttt{q1} and \texttt{q2} and uses the classical outcomes to conditionally execute gates on \texttt{q3} via \texttt{if}-expressions followed by a purifying cast.

To study the performance of runtime verification on increasingly complex programs, we implemented a benchmark \emph{ModMul($n$)}, inspired by~\citet{huang2019}, which implements modular multiplication for a number of qubits $n$ ranging from 4 to 22, with a different version of the program for each input size. The programs use purity assertions to verify that after a conditional modular multiplication followed by its inverse, the condition qubit is separable from the multiplicand. We also implemented an erroneous variant \emph{ModMul($n$)-NotInverse} where the inverse operation is incorrect, resulting in the condition qubit remaining entangled.

Full descriptions of all benchmark programs may be found in~\Cref{sec:app-examples}, and their full sources are provided in~\Cref{sec:app-full-sources}. The sources utilize syntax extensions to \LangName{} described in~\Cref{sec:higher-level}.

\subsection{RQ1 and RQ2: Analysis Results}

We list in~\Cref{tbl:results} the analysis outputs for each benchmark compared to ground-truth knowledge. Detailed descriptions of the analysis results may be found in~\Cref{sec:app-examples}.

\paragraph{Research Question 1}
We can express quantum algorithms such as \emph{Deutsch}, \emph{DeutschJozsa}, \emph{Grover}, \emph{QFT}, and \emph{ShorCode}, and \LangName{} correctly determines that they satisfy their purity specification.

\paragraph{Research Question 2}
The type checker correctly rejects three of the benchmarks, \emph{AndOracle-NotUncomputed}, \emph{Bell-GHZ}, and \emph{ShorCode-Drop}, which use mixed expressions in contexts that require pure expressions. The static analysis correctly rejects two of the benchmarks, \emph{DeutschJozsa-MixedInit}, and \emph{Grover-BadOracle}, that inappropriately use the purifying-cast operator to coerce a mixed expression into a pure one. Runtime verification correctly rejects three of the remaining benchmarks, \emph{Teleport-NoCZ}, \emph{Deutsch-BadResultBasis}, and \emph{ModMul($n$)-NotInverse}, due to failing the separability condition for the purifying-split operator.

For \emph{Teleport-Measure}, the static analysis is imprecise. The static analysis rejects the final purifying-cast assertion and does not permit the result to be annotated as pure, even though it is pure. As a result, we must use the mixed-state simulator to determine that the program is valid.

\newcommand{\na}{\textcolor{gray}{\footnotesize N/A}}
\newcommand{\micros}{\ensuremath{\mu s}}
\newcommand{\subname}{\ \raisebox{2pt}{\rotatebox[origin=c]{180}{\textcolor{gray}{$\neg$}}}}

\begin{table}[t]
\caption{Evaluation results on benchmark programs. The column ``valid'' denotes the ground truth of whether the program is valid under its purity specification, and the following three columns state whether the program passed the type check, static analysis, and runtime verification respectively. Pure- and mixed-state simulations agreed in all instances where both were run. If type checking failed, the later analyses were not executed.\\\hspace{\textwidth}
${}^\ast$ For tests that failed the static analysis, only the mixed-state simulator was executed.\\\hspace{\textwidth}
${}^\dagger$ For \emph{Teleport-Measure}, the static analysis was sound but imprecise (overly conservative).}
\begin{tabular}{ l c c c c c c c c }
\toprule
     &           & \multicolumn{4}{c}{purity specifications} & \multirow[b]{2}{*}{\raisebox{1.1pt}{\begin{tabular}{@{}c@{}}\LangName{} \\ correct\end{tabular}}} \\
\cmidrule{3-6}
Name & \# qubits & valid & types & static & dynamic & \\
\midrule
Teleport-Deferred & 3 & \checkmark & \checkmark & \checkmark & \checkmark & \checkmark \\
\subname{} Teleport-NoCZ & 3 & \xmark & \checkmark & \checkmark & \xmark & \checkmark \\
\subname{} Teleport-Measure & 3 & \checkmark & \checkmark & \xmark & \hphantom{$^\ast$}$\checkmark^\ast$ & \hphantom{$^\dagger$}${-}^\dagger$ \\
AndOracle & 3 & \checkmark & \checkmark & \checkmark & \checkmark & \checkmark \\
\subname{} AndOracle-NotUncomputed & 3 & \xmark & \xmark & \na & \na & \checkmark \\
Bell-GHZ & 3 & \xmark & \xmark & \na & \na & \checkmark \\
Deutsch & 2 & \checkmark & \checkmark & \checkmark & \checkmark & \checkmark \\
\subname{} Deutsch-BadResultBasis & 2 & \xmark & \checkmark & \checkmark & \xmark & \checkmark \\
DeutschJozsa & 3 & \checkmark & \checkmark & \checkmark & \checkmark & \checkmark \\
\subname{} DeutschJozsa-MixedInit & 3 & \xmark & \checkmark & \xmark & \hphantom{$^\ast$}$\xmark^\ast$ & \checkmark \\
Grover & 4 & \checkmark & \checkmark & \checkmark & \checkmark & \checkmark \\
\subname{} Grover-BadOracle & 4 & \xmark & \checkmark & \xmark & \hphantom{$^\ast$}$\xmark^\ast$ & \checkmark \\
QFT & 3 & \checkmark & \checkmark & \checkmark & \checkmark & \checkmark \\
ShorCode & 9 & \checkmark & \checkmark & \checkmark & \checkmark & \checkmark \\
\subname{} ShorCode-Drop & 9 & \xmark & \xmark & \na & \na & \checkmark \\
ModMul($n$) & 4--22 & \checkmark & \checkmark & \checkmark & \checkmark & \checkmark \\
\subname{} ModMul($n$)-NotInverse & 4--22 & \xmark & \checkmark & \checkmark & \xmark & \checkmark \\
\bottomrule
\end{tabular}
\label{tbl:results}
\end{table}

\subsection{RQ3 and RQ4: Timing Results}

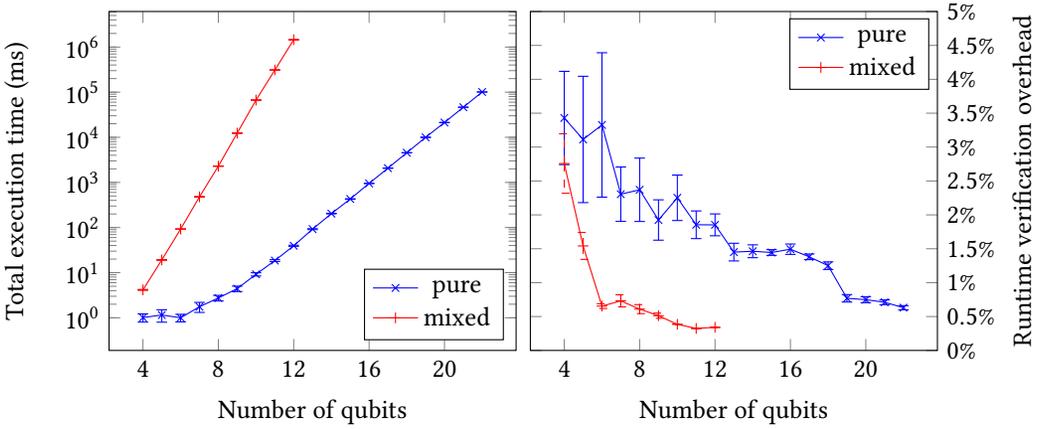
\begin{figure}
\begin{tikzpicture}
\begin{axis}[
    xlabel=Number of qubits,
    ylabel=Total execution time (ms),
    ymode=log,
    ytick={0,1,1e1,1e2,1e3,1e4,1e5,1e6,1e7},
    xtick distance=4,
    width=7cm,
    legend pos=south east
]
\addplot[blue, mark=x, error bars]
plot [error bars/.cd, y dir=both, y explicit]
coordinates {
    (4, 1.011491) +- (0.209637816008944, 0.209637816008944)
    (5, 1.152729) +- (0.353348266050649, 0.353348266050649)
    (6, 1.00112) +- (0.191191720776119, 0.191191720776119)
    (7, 1.756216) +- (0.440267203977565, 0.440267203977565)
    (8, 2.748419) +- (0.390483568706661, 0.390483568706661)
    (9, 4.421783) +- (0.610191657357753, 0.610191657357753)
    (10, 9.193992) +- (0.758592934641792, 0.758592934641792)
    (11, 18.544388) +- (0.880256771093021, 0.880256771093021)
    (12, 39.481378) +- (1.10862563199847, 1.10862563199847)
    (13, 92.261601) +- (2.06925187766897, 2.06925187766897)
    (14, 203.816509) +- (2.47594456722433, 2.47594456722433)
    (15, 428.893066) +- (5.03222833607985, 5.03222833607985)
    (16, 950.522472) +- (4.62294843922646, 4.62294843922646)
    (17, 2070.960187) +- (11.4085263466331, 11.4085263466331)
    (18, 4550.88296) +- (7.57481149467718, 7.57481149467718)
    (19, 9976.399374) +- (98.4764911795562, 98.4764911795562)
    (20, 21251.917957) +- (28.4233275722753, 28.4233275722753)
    (21, 46356.893419) +- (102.054430719963, 102.054430719963)
    (22, 100984.694291) +- (282.31773262886, 282.31773262886)
};
\addlegendentry{pure}
\addplot[red, mark=+, error bars]
plot [error bars/.cd, y dir=both, y explicit]
coordinates {
    (4,  4.15935754) +- (0.09011420048822495, 0.09011420048822495)
    (5,  19.16473872) +- (0.29388141973025156, 0.29388141973025156)
    (6,  92.54250282999999) +- (0.6100137435256759, 0.6100137435256759)
    (7,  478.91998269999993) +- (4.315470876432495, 4.315470876432495)
    (8,  2296.5412616999997) +- (6.942379748617561, 6.942379748617561)
    (9,  12375.3449917) +- (7.904342068599695, 7.904342068599695)
    (10, 66822.87957660001) +- (179.96674864412313, 179.96674864412313)
    (11, 310134.82446659997) +- (1267.2110525696905, 1267.2110525696905)
    (12, 1450989.8498535) +- (635.9310644672839, 635.9310644672839)
};
\addlegendentry{mixed}
\end{axis}
\end{tikzpicture}%
\hspace{5pt}%
\begin{tikzpicture}
\begin{axis}[
    ymin=0, ymax=0.05,
    xlabel=Number of qubits,
    ylabel=Runtime verification overhead,
    ylabel near ticks,
    scaled ticks=false,
    ytick distance=0.005,
    yticklabel=\pgfmathparse{100*\tick}\pgfmathprintnumber{\pgfmathresult}\%,
    yticklabel style={/pgf/number format/.cd,fixed,precision=2},
    xtick distance=4,
    width=7cm,
    ylabel near ticks, yticklabel pos=right,
]
\addplot[blue, mark=x, error bars]
plot [error bars/.cd, y dir=both, y explicit]
coordinates {
    (4, 0.0342959057470605) +- (0.00688173704597687, 0.00688173704597687)
    (5, 0.0311270038317766) +- (0.00931965792821487, 0.00931965792821487)
    (6, 0.0332667412498002) +- (0.0106514636292966, 0.0106514636292966)
    (7, 0.0230512647646986) +- (0.00401471829286768, 0.00401471829286768)
    (8, 0.0236990793616257) +- (0.00467480030664182, 0.00467480030664182)
    (9, 0.0192442731812031) +- (0.00297781329920975, 0.00297781329920975)
    (10, 0.0225194888139994) +- (0.00335336517073461, 0.00335336517073461)
    (11, 0.0185392475610411) +- (0.00204087951981662, 0.00204087951981662)
    (12, 0.0185287352432329) +- (0.00161270628047409, 0.00161270628047409)
    (13, 0.0145064358898346) +- (0.00130521223356406, 0.00130521223356406)
    (14, 0.0146297373781434) +- (0.00094812654890137, 0.00094812654890137)
    (15, 0.0144476548846793) +- (0.000421348617686832, 0.000421348617686832)
    (16, 0.0149325517471827) +- (0.000765912674253485, 0.000765912674253485)
    (17, 0.0138109410212433) +- (0.000419490006138882, 0.000419490006138882)
    (18, 0.0125058399656141) +- (0.000560171619049788, 0.000560171619049788)
    (19, 0.00769031662865745) +- (0.00055366401969092, 0.00055366401969092)
    (20, 0.00748865303931683) +- (0.000440855083015007, 0.000440855083015007)
    (21, 0.00710414887001512) +- (0.000347477012989469, 0.000347477012989469)
    (22, 0.0063039484396076) +- (0.000305229005638907, 0.000305229005638907)
};
\addlegendentry{pure}
\addplot[red, mark=+, error bars/error bar style={dashed}]
plot [error bars/.cd, y dir=both, y explicit]
coordinates {
    (4,  0.02757255631358875) +- (0.004382555677290433, 0.004382555677290433)
    (5,  0.015416370362079217) +- (0.001984784028211189, 0.001984784028211189)
    (6,  0.006541107507239008) +- (0.000351210693531613, 0.000351210693531613)
    (7,  0.007312110428671825) +- (0.0008960177176545982, 0.0008960177176545982)
    (8,  0.006094903424308024) +- (0.000664987415128514, 0.000664987415128514)
    (9,  0.005126496169807785) +- (0.000376597857900788, 0.000376597857900788)
    (10, 0.003852014933072968) +- (0.00005407676339508646, 0.00005407676339508646)
    (11, 0.003246923709492828) +- (0.000012241227547381364, 0.000012241227547381364)
    (12, 0.0034092423410125533) +- (0.000021468626800372702, 0.000021468626800372702)
};
\addlegendentry{mixed}
\end{axis}
\end{tikzpicture}%
\caption{Execution performance of \emph{ModMul($n$)} programs in \LangName{}. The first plot depicts total execution time as the number of qubits $n$ increases, and the second depicts the percentage of total execution time spent on runtime verification. Error bars display the standard error of the mean of ten runs.} \label{fig:multiply}
\end{figure}

We display in~\Cref{fig:multiply} the runtime performance of the \emph{ModMul($n$)} family of programs. All static analyses terminated within 50 ms and are not counted as part of runtime.

\paragraph{Research Question 3} Mixed-state simulation rapidly becomes impractical to execute compared to pure-state simulation, taking longer on 11 qubits than the pure-state simulator did on 22 qubits.

\paragraph{Research Question 4} In pure-state simulation, runtime verification is not a large performance burden. On a program with 4 qubits, the relative overhead of runtime verification is the largest, at 3.5\% of total runtime. As the number of gates and qubits increases, the relative overhead of runtime verification diminishes, and is approximately 0.5\% for 22-qubit programs.

Mixed-state simulation is similar, with runtime verification overhead being about 3\% for 4-qubit programs and approximately 0.3\% for 12-qubit programs. Though the relative overhead is ostensibly lower for mixed than pure states, the baseline is much slower for mixed-state simulation.

The reason why the relative overhead decreases with more qubits in this application is that the larger test cases require more gates to encode multiplication over more qubits. However, each program needs only one purity assertion to enforce its purity specification.

\subsection{RQ5: Comparison to Silq}

We compare \LangName{} with Silq~\citep{silq}, a recent language that enables high-level programming through type annotations expressing freedom from effects such as mutation and measurement, letting it automatically uncompute certain temporary qubits that exit scope. We chose to compare against Silq because it is designed to express high-level constructs while preventing unintuitive or physically unrealizable behavior, and because it subsumes features of languages such as~\citet{qwire, quipper, qsharp}. Because our benchmarks extensively utilize Twist's annotations, we evaluated whether Silq's type system accepted a direct translation of the program without any \LangName{}-specific features while still allocating and discarding the same qubits.

Silq rejects a direct translation of the \emph{Teleport-Deferred}, \emph{Deutsch}, and \emph{ShorCode} benchmarks, which, as written in Twist, each discard a temporary expression that has become separable at the end of the computation without measuring its value. Though Silq supports automatically uncomputing certain qubits when they exit scope, it does so only for qubits that are introduced within the same scope by an operation that Silq's type system deems to be \texttt{qfree}, i.e. one that does not use unsupported gates with phase-level effects such as Hadamard. In these benchmarks, Silq cannot determine whether an arbitrary expression, such as the input or output of a function, is separable and safe to discard via uncomputation.

To translate these benchmarks to Silq, the developer may either 1) manually invoke a \texttt{measure} operator on the expression being discarded, 2) manually invoke a \texttt{forget} operator on the expression, or 3) apply the \texttt{qfree} annotation to the operation that produced it. In the first two cases, neither \texttt{measure} nor \texttt{forget} guarantees that the expression is separable from the remaining computation, meaning that invoking either operator may in general leave the computation in a mixed state if the developer does not accurately reason about the purity of the computation. In the third case, the \texttt{qfree} annotation invokes the requirement of the Silq type system that disqualifies the use of phase-level effects within a \texttt{qfree} operation to compute the expression, for example the efficient quantum adder of~\citet{draper2000addition}. As with \texttt{forget}, Silq could in principle be extended with an unsafe cast to \texttt{qfree} that would permit this implementation, with the proviso that the developer must themselves accurately reason about the safety of such casts in general.

By contrast, \LangName{} soundly accepts these programs when annotated with purity specification, because it can determine that measuring a pure expression cannot affect the ongoing computation, and thus permits implicitly measuring it.

\section{Extensions and Limitations} \label{sec:limitations}

\paragraph{Language Features} We overview in~\Cref{sec:higher-level} the higher-level syntactic features that we have already implemented in \LangName{} to enable writing more concise programs. Adding additional features such as arrays, loops, quantum conditional blocks, and automatic adjoints may further improve ease of use. Currently, the user must annotate purities in types, and type inference would make programming faster and more concise, especially if the developer could use it as a tool to immediately see the inferred purity specification of their programs.

\paragraph{Classical Control}
Classical control affects the precision of our static analysis, as seen in the \emph{Teleport-Measure} example. Programs that use classical control currently require the mixed-state approach of purity assertion verification, which is inefficient. Further work may improve the precision of the static analysis on classical control and enable execution on a pure-state simulator.

\paragraph{Hardware Execution}
In this work, we operate in an idealized model of quantum computation and describe how the purity assertions of \LangName{} can be implemented in simulation using the primitive of separability testing.
In quantum hardware, determining whether a state is separable is a form of quantum state tomography~\citep{tomography}, where ascertaining properties of an unknown quantum state generally requires many copies of the state to obtain high accuracy. In~\Cref{sec:app-hardware}, we describe a procedure to determine with high probability whether a pure quantum state is separable, based on~\citet{harrow-separability}. Additionally, existing runtime quantum assertion frameworks such as \citet{huang2019,assertions,approximate-assertions} support some form of separability testing which \LangName{} could leverage. These methods indicate that separability testing is potentially achievable natively in quantum hardware.

\paragraph{Simulation Accuracy}
\LangName{}'s runtime verification as implemented relies on numeric floating-point arithmetic in the quantum simulator, which may introduce the possibility of imprecision error. Our attempts to exploit imprecision to cause separability tests to pass on an entangled state were unsuccessful, but it is possible that very slightly entangled states or excessive error accumulation in the simulator could lead to unsoundness, which would require further effort to mitigate.

\section{Related Work} \label{sec:related-work}

\paragraph{Entanglement Reasoning}

\citet{perdrix2007,perdrix-ai,prost2009,Honda2015AnalysisOQ} present logical systems for reasoning about entanglement in quantum programs based on type systems and abstract interpretation. These frameworks track fine-grained entanglement between specific qubits, making them limited in scale and unable to handle more complex programs such as teleportation.

\citet{rand2021static,gottesman-beyond} propose a type system establishing circumstances under which gate outputs are separable, based on their Heisenberg representation. Unlike ours, it does not guarantee purity, and can only determine separability in specific bases. However, this direction may enable better static checking for separability conditions, increasing the utility of purity specifications.

Researchers have developed compilers that perform reasoning about entanglement in a quantum circuit. For example, ScaffCC~\citep{JavadiAbhari_2015} provides a disentanglement check warning the user when possibly entangled qubits exit scope. Unlike \LangName{}, this check is purely syntactic and cannot be refined by semantic knowledge about the program. \citet{haner2020} leverage entanglement annotations to optimize circuit gate count, but do not ensure that the annotations are sound. By contrast, we proved that pure-typed expressions are in fact pure.

Frameworks for quantum runtime assertions support reasoning about entangled and separable states, such as~\citet{huang2019} who use statistical hypothesis testing on measurement outcomes of repeated program executions, \citet{assertions} who check predicates using projection operators, and~\citet{approximate-assertions} who propose quantum circuits implementing approximate assertions. Though these tools do not support sound reasoning for whole programs, their techniques may be useful to implement runtime verification in \LangName{}.

\paragraph{Quantum Program Verification}

Separation logic~\citep{reynolds} is a well-studied formalism for reasoning about memory aliasing in classical programs, and researchers have extended it into the probabilistic~\citep{barthe2020} and quantum~\citep{zhou2021quantum} realms, generalizing the notion of separation to probabilistic independence and quantum separability respectively. They have also adapted classical techniques such as model checking~\citep{gay2008qmc}, abstract interpretation~\citep{yu}, and Hoare logic~\citep{singhal2020quantum,unruh2019quantum}, and developed relational proof strategies~\citep{relational} for quantum programs. Verification tools have enabled applications such as analysis of robustness against error~\citep{robustness, glepnir} and mechanized soundness proofs for optimizations~\citep{verified-optimizer, shi2020certiq}.

\LangName{}'s pure annotation can be construed as separable conjunction in separation logic, a connection worthy of additional study. However, an advantage of our type system is that it is directly usable during programming -- purity is a first-class construct, and the type checker and analyses automate reasoning about purity. The programmer may simply write a program with annotations and execute it to be confident that it satisfies the specification. Our automated reasoning does not require the programmer to understand a proof framework external to the language.

\LangName{}'s static analysis relies on manipulations of fractions in a similar way as fractional permissions~\citep{boyland}, originally introduced to reason about mutable effects and which has been extended to separation logic~\citep{bornat}, symbolic rather than concrete quantities~\citep{heule}, and applications in memory management~\citep{suenaga}. To our knowledge, we are the first to use fractional permissions-style reasoning for quantum entanglement.

\paragraph{Ancilla Correctness}

One application of purity guarantees in a quantum program is the correctness of temporary qubits, known as ancillas, and verifying uncomputation of ancillas. Silq~\citep{silq} automatically inserts sound uncomputation for ancillas or rejects programs not supporting uncomputation. \citet{unqomp} provide a general scheme to synthesize uncomputation for circuits, and \citet{reqwire} provide a mechanized system for proving ancilla correctness. In general, \LangName{} benefits from the presence of automatic or provable uncomputation, because it can trust the correctness of the uncomputation and elide the runtime verification.

In turn, systems with automatic uncomputation benefit from soundness guarantees of purity types and convenience of implicitly discarding pure values statically known to not require uncomputation. We demonstrated that \LangName{} can detect incorrect programs that these languages cannot. In addition, \LangName{} does not require distinguishing particular qubits as ancillas, and its runtime verification can perform reasoning that they cannot, for example recognizing gates like Hadamard as self-inverse.

\section{Conclusion}

Quantum computing presents unique challenges to programmers who must reason about phenomena such as entanglement that have no analog in the classical world, and if improperly addressed can result in unintuitive bugs. So far, quantum programming languages have sought to ensure valid semantics under the laws of quantum mechanics, but have not developed comprehensive means of understanding entanglement. The result is that the developer must manually determine whether their computations are affected by measurement outcomes of seemingly unrelated qubits.

In this work, we introduce \LangName{}, the first language with sound reasoning for purity, the property of an expression that states its evaluation is unaffected by measurement outcomes of unowned qubits. We present language constructs to assert purity of expressions and verifications for these assertions. \LangName{} enjoys a soundness guarantee stating that in programs that pass its verifications, every expression of pure type is in fact pure and free from entanglement.

To our knowledge, this work is the first to define the powerful notion of purity, which enables sound reasoning about entanglement in quantum programs. We hope this work paves the way to languages featuring abstractions that align with the complex and unintuitive phenomena inherent in quantum computing, allowing classical programmers to reap its computational benefits.

\begin{acks}
We would like to thank Alex Renda, Eric Atkinson, Jesse Michel, Tian Jin, Alex Lew, Sara Achour, Robert Rand, Yipeng Huang, Axel Feldmann, and Rohan Yadav, who all provided feedback on drafts of this paper.
This work was supported in part by the MIT-IBM Watson AI Lab, the Intel Probabilistic Computing Center, and the Sloan Foundation. Any opinions, findings, and conclusions or recommendations expressed in this material are those of the authors and do not necessarily reflect the views of the funding agencies.
\end{acks}

\bibliography{biblio.bib}

\newpage

\appendix

\section{Entanglement for Pure and Mixed States} \label{sec:app-entanglement}

In this work, we focus on the analysis of entanglement and separability in the sense of pure states rather than mixed states. However, from the perspective of quantum mechanics, the notions of separability for pure states and mixed states are not equivalent, and it is conceivable that one would want to analyze quantum programs using the mixed-state definition of separability instead.

In this section, we briefly illustrate differences between the two formalisms and argue that the pure-state definition, known as \emph{simple separability} when applied to density matrices, is appropriate for reasoning about purity in quantum programs. Thus, the separability tests powering \LangName{}'s purity assertions are precise and avoid the extra complexity of general separability of mixed states.

\subsection{Definitions of Entanglement: Concurring Case}

We present example programs that we analyze using the frameworks of pure- and mixed-state entanglement. First, we show a case where the pure- and mixed-state definitions align.

\begin{figure}
\begin{lstlisting}[numbers=left]
fun phi_plus () : (qubit & qubit)<P> = CNOT (H (qinit ()), qinit ())
fun phi_minus () : (qubit & qubit)<P> = CNOT (H (X (qinit ())), qinit ())
fun flip_second (xy : (qubit & qubit)<P>) : (qubit & qubit)<P> =
  let (x : qubit<M>, y : qubit<M>) = xy in
  let (xy : (qubit & qubit)<P>) = (x, X (y)) in xy
fun psi_plus () : (qubit & qubit)<P> = flip_second (phi_plus ())
fun psi_minus () : (qubit & qubit)<P> = flip_second (phi_minus ())
\end{lstlisting}
\caption{Definitions of the four Bell states. The listing uses \LangName{} syntax extensions described in~\Cref{sec:higher-level}.} \label{fig:bell-pairs}
\end{figure}

\Cref{fig:bell-pairs} presents \LangName{} functions that produce the four maximally entangled \emph{Bell states}~\citep{bell}, the first of which is used throughout examples in this work:
\begin{align*}
    \ket{\Phi^+} = \frac{\ket{00} + \ket{11}}{\sqrt{2}} &\qquad \ket{\Phi^-} = \frac{\ket{00} - \ket{11}}{\sqrt{2}} \\
    \ket{\Psi^+} = \frac{\ket{01} + \ket{10}}{\sqrt{2}} &\qquad \ket{\Psi^-} = \frac{\ket{01} - \ket{10}}{\sqrt{2}}
\end{align*}

In \LangName{}, pure quantum values have the property that even when the overall program is in a mixed state, the qubits owned by a pure value constitute a pure sub-state of the program state. For example, consider the program in~\Cref{fig:pure-substate}, in which we produce a qubit \texttt{x} in state $\ket{1}$ and two qubits \texttt{y} and \texttt{z} in the Bell state $\ket{\Phi^+}$. When we measure \texttt{z}, the whole program enters a mixed state: $\ket{1}_\texttt{x} \otimes \ket{0}_\texttt{y}$ with probability $\nicefrac{1}{2}$ and $\ket{1}_\texttt{x} \otimes \ket{1}_\texttt{y}$ with probability $\nicefrac{1}{2}$. Nevertheless, \texttt{x} is pure, as evidenced by the fact that in either branch, the program state is separable into some sub-state for \texttt{y} and a sub-state for \texttt{x} which is always the same: $\ket{1}_\texttt{x}$.

We can equivalently express the program state, and this property of the pure expression \texttt{x}, using the mixed state formalism. Mathematically, the density matrix corresponding to the program state immediately after the measurement of \texttt{z} is:
\begin{align*}
    \rho &= \frac{\dyad{\psi_0} + \dyad{\psi_1}}{2} \text{ where } \ket{\psi_0} = \ket{1}_\texttt{x} \otimes \ket{0}_\texttt{y}, \ket{\psi_1} = \ket{1}_\texttt{x} \otimes \ket{1}_\texttt{y} \\
    &= \dyad{1}_\texttt{x} \otimes \frac{\dyad{0}_\texttt{y} + \dyad{1}_\texttt{y}}{2}
\end{align*}
The fact that the density matrix $\rho$ is separable into density matrices for \texttt{x} and \texttt{y}, where the factor $\dyad{1}_\texttt{x}$ is a pure density matrix, indicates in this specific example that the expression \texttt{x} is pure. Here, the pure- and mixed-state notions of separability coincide.

\begin{figure}
\begin{lstlisting}[numbers=left]
let x : qubit<P> = X (qinit ()) in
let (y : qubit<M>, z : qubit<M>) = phi_plus () in
let _ = measure (z) in (x, y)
\end{lstlisting}
\caption{A quantum program that produces a mixed state with a sub-state for the qubit \texttt{x} that is pure.} \label{fig:pure-substate}
\end{figure}

\subsection{Definitions of Entanglement: Contrasting Case}

\begin{figure}
\begin{lstlisting}[numbers=left]
fun random_bool () : bool = measure (H (qinit ()))
fun random_bell () : (qubit & qubit)<M> =
  if random_bool () then if random_bool () then phi_plus () else phi_minus ()
  else if random_bool () then psi_plus () else psi_minus ()
\end{lstlisting}
\caption{A quantum program that produces one of the four Bell states at random.} \label{fig:bell-program}
\end{figure}

We now examine a case where the mixed-state definition of entanglement yields a different conclusion from the pure-state definition.
\Cref{fig:bell-program} presents a program that produces, uniformly randomly, one of the four Bell states.
The function \texttt{random\_bell} returns an entangled pair of qubits whose state is one of these four with $\nicefrac{1}{4}$ probability each. The result is not a pure expression, because the final state is a mixed state and each individual execution depends on the measurement outcomes of the intermediate qubits used by \texttt{random\_bool} as a source of randomness.

In the output of \texttt{random\_bell}, both qubits of the pair are also mixed when considered in isolation because they are entangled with each other. If we, for example, extract one qubit (and measure the other), there is no sound manner in \LangName{} to coerce it to a qubit of pure type. The static and dynamic verifications will reject attempts using the purifying-cast and split operators because on any execution, regardless of which of the four Bell states is produced, the two qubits are entangled.

However, remarkably, the mixed state corresponding to the output of \texttt{random\_bell} is separable by the mixed state definition. Mathematically, we can confirm:
\begin{align*}
    \rho &= \frac{\dyad{\Phi^+} + \dyad{\Phi^-} + \dyad{\Psi^+} + \dyad{\Psi^-}}{4} \\
    &= \begin{pmatrix}
        \nicefrac{1}{4} & 0 & 0 & 0 \\
        0 & \nicefrac{1}{4} & 0 & 0 \\
        0 & 0 & \nicefrac{1}{4} & 0 \\
        0 & 0 & 0 & \nicefrac{1}{4}
    \end{pmatrix}
    = \begin{pmatrix}
        \nicefrac{1}{2} & 0 \\
        0 & \nicefrac{1}{2}
    \end{pmatrix} \otimes \begin{pmatrix}
        \nicefrac{1}{2} & 0 \\
        0 & \nicefrac{1}{2}
    \end{pmatrix} \\
    &= \frac{\dyad{0} + \dyad{1}}{2} \otimes \frac{\dyad{0} + \dyad{1}}{2}
\end{align*}

Thus, quantum mechanically, the uniform mixture of the four Bell states is indistinguishable (by any measurement process) from two independent copies of one single qubit in a uniform mixture of $\ket{0}$ and $\ket{1}$. The mixed-state definition of separability thus recovers a fact that the pure-state definition, which always concludes that \texttt{x} and \texttt{y} are not separable, cannot.

\subsection{Sufficiency of Pure-State Entanglement}

As the example demonstrates, there are situations where using the mixed-state formalism provides more fine-grained information about how the program state factors into independent sub-states.
However, for the purposes of \LangName{}, knowing that the program state is separable into two mixed states does not always refine our reasoning about the purity of expressions. After all, we are interested in whether there is a pure sub-state in the program, not a mixed one, and ultimately, \LangName{} is concerned not with separability but with purity.

Effectively, the mixed-state definition of separability is too general for \LangName{}. Mixed states can be separable in ways that do not contribute to purity reasoning:
\begin{itemize}
    \item The mixed state is separable, but the components are not pure states, and thus do not provide information about purity of sub-expressions (\texttt{random\_bell}). In this case, the sub-expressions are individually in mixed states.
    \item The mixed state is separable, but only as a convex combination of tensor products. For example, decomposing $\rho = \frac{\rho_1 \otimes \rho_2 + \rho_3 \otimes \rho_4}{2}$ also does not provide information about purity of sub-expressions. In this case, the sub-expressions may be classically correlated.
\end{itemize}

Only if a density matrix $\rho$ is \emph{simply separable} into $\rho_1 \otimes \rho_2$ where one factor is a pure state does separability guarantee that a sub-expression is pure in the sense of \LangName{}. Simple separability is stronger than separability for mixed states because it guarantees that there exists no classical correlation between the two sub-states.

However, we can recover the information of simple separability through pure-state reasoning. After all, assuming without loss of generality that $\rho_1 = \dyad{\phi}$ and $\rho_2 = \sum_j p_j \dyad{\psi_j}$, we have
\begin{align*}
    \rho &= \dyad{\phi} \otimes \sum_j p_j \dyad{\psi_j} = \sum_j p_j \left(\ket{\phi} \otimes \ket{\psi_j}\right)\left(\bra{\phi} \otimes \bra{\psi_j}\right)
\end{align*}
which is equivalent to the statement that all executions of the program yield states that share a common factor $\ket{\phi}$, and can be checked by pure-state separability tests alone.
\section{Separability Testing in Hardware} \label{sec:app-hardware}

\newcommand{\Xvec}{\mathbf{X}}

In this section, we briefly discuss a potential implementation of separability testing on a hardware quantum computer. We utilize concepts from the density matrix formalism of quantum mechanics, including reduced density matrices and the quantity of purity. We refer the reader to~\citet{nielsen_chuang_2010} for the technical background to this section.

Adapting the following procedure to \LangName{} requires re-executing a program to produce new copies of a state that is being subject to a separability test, as separability testing is a form of quantum state tomography~\citep{tomography} that in general requires multiple copies of a state to characterize it.

The \emph{SWAP test}~\citep{buhrman2001} is a separability-testing scheme that consumes two copies of a $d$-qubit quantum state $\ket{\psi}$, and in $O(\log d)$ time accepts separable states with probability 1, using $O(1)$ extra qubits. Given a state $\ket{\psi}$, we divide it into subsystems $A$ and $B$ of dimensions $d_A$ and $d_B$ with reduced density matrices $\rho_A$ and $\rho_B$.

First, we produce two copies $\ket{\psi} \otimes \ket{\psi}$ of the state under test. We swap the $A$ subsystems of the two states, conditioned on an ancilla qubit, and apply the same procedure to the $B$ subsystems. If a subsystem is a single qubit, this amounts to a single Fredkin gate; in general, it requires $\log d$ Fredkin gates. Finally, we perform a measurement to detect the phase acquired by the ancilla qubits.

The SWAP test accepts with probability $p = \frac{1}{2} + \frac{1}{2}\mathfrak{p}$, where $\mathfrak{p} = \mathrm{tr}(\rho_A^2) = \mathrm{tr}(\rho_B^2)$ is the subsystem purity.
During the swaps, the ancilla qubits acquire a phase depending on the purity of the respective subsystems, which are equal for a bipartite state. The phase can be detected by an interferometric measurement, yielding the expression for $p$. More precisely, define the \emph{distance} $\epsilon$ between $\ket{\psi}$ and the nearest separable state as:
\[
    \epsilon = 1 - \mathrm{max} \left\{ \abs{\braket{\psi|\varphi}}^2 :\ket{\varphi} \text{is a separable state} \right\}
\]
This testing scheme then yields a false positive rate that depends on $\epsilon$, such that the test accepts with probability $1 - \alpha \epsilon \le p \le 1 - \beta \epsilon$, where $\alpha = 2$ and $\beta = 11/512$ \citep{harrow-separability}. The test has been realized experimentally \citep{walborn2006}. However, these bounds given assume perfect operation of the quantum hardware; accounting for the cost of imperfect gates in near-term hardware remains an open problem.

To bound $\epsilon$ above with high probability, one could run $n$ repetitions of the SWAP test, registering observations $\Xvec = (X_1, X_2, \dots, X_n)$. If $\ket{\psi}$ is a separable state, and hardware operations are perfect, these will result in $X_i = 1$ for all $i$. To control the distance parameter $\epsilon$, we can express the test as a parameter estimation problem, and bound the probability that the actual distance is greater than some fixed $\epsilon^*$ conditioned on the observations by Markov's inequality:
\[
\Pr [\epsilon > \epsilon^* \mid \Xvec] \le  \frac{\mathbb{E}[\epsilon \mid \Xvec]}{\epsilon^*}
\]
where the conditional expectation term may be obtained from the known quantity $\Pr[\Xvec \mid \epsilon] = p^n$ and a prior distribution over $\epsilon$ given by a Haar-uniform distribution~\citep{Haar1933DerMI} over states. This quantity could equally well be computed in terms of the purity $\mathfrak{p}$, rather than $\epsilon$, and closed-form expressions for the corresponding distribution are known \citep{giraud2007}. In this way, one can achieve any desired level of confidence in the separability of $\ket{\psi}$, by setting the parameter $\epsilon^*$ and adjusting the number of iterations $n$ to upper-bound the probability of a false negative.
\section{Additional Syntactic Features of \LangName{}} \label{sec:higher-level}

In this section, we discuss syntactic features of \LangName{} that enable writing more concise programs.

\subsection{Affine Pure Types}

As discussed, \LangName{} allows pure expressions to be discarded and implicitly measured.
Formally, we perform a syntactic translation to insert measurements of unused variables of type $\tPurity{\qtype}{\pure}$ and discard the result of the measurement. Specifically, we translate functions and \texttt{let}-bindings:
\begin{align*}
\Lam{x}{e} &\leadsto \Lam{x}{\LetOne{\_}{\Measure{x}}{e}} \\
\Let{x}{y}{e'}{e} &\leadsto \Let{x}{y}{e'}{\LetOne{\_}{\Measure{x}}{e}}
\end{align*}
when $x$ has type $\tPurity{\qtype}{\pure}$ and does not appear in $e$, and likewise for $y$ if necessary. It is irrelevant when the implicit measurement takes place, and we do so before evaluating $e$. To discard a pair containing only pure and classical types, we recursively measure and discard its contents.

\subsection{Inferring Conversion Operators}

Converting between ordinary and entangled product types, as well as between different purities, requires appropriate \texttt{entangle}, \texttt{split}, and \texttt{cast} operators.
Instead of forcing the user to write these operators, \LangName{} exposes a generalized \texttt{let}-expression using type annotations from which the language can automatically insert appropriate operators.
The generalized \texttt{let}-expression binds an expression to a \emph{pattern} $p$ which is either a type-annotated variable or a pair of patterns:
\begin{align*}
\textsf{Pattern}\ p &\Coloneqq x : \type \mid (p_1, p_2)
\end{align*}
This expression is a derived form that recursively translates into core constructs:
\begin{align*}
  \Let{p_1}{p_2}{e}{e'} &\leadsto \Let{x}{y}{e}{\LetOne{p_1}{x}{\LetOne{p_2}{y}{e'}}}
\end{align*}
The procedure to infer conversion operators is type-directed. For every \texttt{let}-expression binding the expression $e$ to a pattern $p$, it synthesizes the type $\tau$ of $e$ and the type $\tau'$ that $p$ expects to bind. It then follows rules to transform $e$ by inserting conversions so that its type becomes $\tau'$:
\begin{itemize}
    \item If $\tau = \tau'$, do nothing to $e$.
    \item If $\tau = \tPurity{\qtype}{\pure}$ and $\tau' = \tPurity{\qtype}{\mixed}$, replace with $\Cast{\mixed}{e}$.
    \item If $\tau = \tPurity{\qtype}{\mixed}$ and $\tau' = \tPurity{\qtype}{\pure}$, replace with $\Cast{\pure}{e}$.
    \item If $\tau = \tPurity{(\tEpair{\qtype_1}{\qtype_2})}{\purity}$ and $\tau' = \tPair{\tPurity{\qtype_1}{\purity}}{\tPurity{\qtype_2}{\purity}}$, replace with $\SplitA{\purity}{e}$.
    \item If $\tau = \tPair{\tPurity{\qtype_1}{\purity}}{\tPurity{\qtype_2}{\purity}}$ and $\tau' = \tPurity{(\tEpair{\qtype_1}{\qtype_2})}{\purity}$, replace with $\Let{x}{y}{e}{\Entangle{\purity}{x}{y}}$.
    \item In other cases, recursively descend into $\tau$ and $\tau'$ and apply the above rules.
\end{itemize}

As an example, a program that takes an entangled pair of qubits as input and performs a phase flip conditioned on the first qubit requires a number of conversions:
\begin{lstlisting}
fun f (qs : (qubit & qubit)<P>) : (qubit & qubit)<P> =
  let (q1 : qubit<M>, q2 : qubit<M>) = split<M>(cast<M>(qs)) in
  let q1 : qubit<M> = Z (q1) in
  cast<P>(entangle<M>(q1, q2))
\end{lstlisting}
Conversion operator inference allows us to express the program much more concisely as:
\begin{lstlisting}
fun f (qs : (qubit & qubit)<P>) : (qubit & qubit)<P> =
    let (q1 : qubit<M>, q2 : qubit<M>) = qs in
    let q1 = Z (q1) in
    let out : (qubit & qubit)<P> = (q1, q2) in out
\end{lstlisting}

\subsection{Polymorphism over Purity}

The fact that functions must provide specific purities in their types can result in code duplication with pure and mixed variants. Thus, the language supports generic purity parameters where functions instantiate at a given purity at call site. Polymorphism allows more concise programs that more accurately describe the effect of functions on purity, and also allows the static analysis of~\Cref{sec:analysis} to be more precise.
We extend the syntax of purities to allow \emph{variables}:
\begin{align*}
  \purity \Coloneqq \pure \mid \mixed \mid \pvar{\alpha}
\end{align*}
We require that every purity variable be introduced exactly once in the argument of the function in which it is used.
We do not permit $\SplitA{\purity}{e}$ for generic $\purity$ because its operation fundamentally differs for $\pure$ or $\mixed$,\footnote{One may cast to $\mixed$ to invoke $\texttt{split}_\mixed$, then later cast back.} but we permit generics in \texttt{entangle} and \texttt{cast}, which are statically checked.

To check a function with generic purity, the static analysis instantiates for each purity variable a unique index $i$ and associated history $x_i$, guaranteeing that it cannot be conflated with any other in the system. We also augment the type system to instantiate a function of polymorphic type when it is applied to an argument of specific type.

Generic purities increase the precision of the static analysis. Consider the following program:
\begin{lstlisting}
fun f (q : qubit<M>) : qubit<M> = q
fun g (q : qubit<P>) : qubit<P> = cast<P>(f (cast<M>(q)))
\end{lstlisting}
This program does not pass the static analysis because the function \texttt{f} returns a qubit annotated as mixed rather than pure. However, \texttt{f} is over-specified to only take mixed qubits to mixed qubits, and simply inlining its definition into \texttt{g} results in a well-typed program. A more sophisticated interprocedural static analysis might realize this fact, but a superior solution is to allow \texttt{f} to be polymorphic in the purity of \texttt{q}.
We can annotate the argument to \texttt{f} with the generic purity \texttt{'p}, which \texttt{g} then instantiates with $\pure$:
\begin{lstlisting}
fun f (q : qubit<'p>) : qubit<'p> = q
fun g (q : qubit<P>) : qubit<P> = f (q)
\end{lstlisting}
The example now passes the static analysis and more clearly expresses the effect of \texttt{f} on purity.
\section{Full Language Semantics} \label{sec:app-semantics}

\subsection{\CoreLangHeading{} Language}

\Cref{fig:prog-ok} defines the typing judgment for programs in \CoreLang{} and \LangName{}.
\Cref{fig:core-value-judgment}, \Cref{fig:core-lambda-dynamics}, and \Cref{fig:core-quantum-dynamics} contain the full operational semantics of \CoreLang{}.

\begin{figure}
    \footnotesize
    \begin{mathpar}
        \inferrule{\hastype{\ctx, x : \type}{\emptyset}{e}{\type'} \\ \ProgOk{\ctx, f : \type \to \type'}{m}}{\ProgOk{\ctx}{\Fun{f}{x}{e}; m}}

        \inferrule{\hastype{\ctx}{\emptyset}{e}{\type}}{\ProgOk{\ctx}{\Main{e}}}
    \end{mathpar}
    \caption{Definition of well-formed programs in \CoreLang{} and \LangName{}.} \label{fig:prog-ok}
\end{figure}

\begin{figure}
    \footnotesize
    \begin{mathpar}
        \inferrule[V-Fun]{\vphantom{\ctx}}{\val{f}}

        \inferrule[V-Bool]{\vphantom{\ctx}}{\val{b}}

        \inferrule[V-Qref]{\vphantom{\ctx}}{\val{\Qref{\alpha}}}

        \inferrule[V-Pair]{\val{e_1} \\ \val{e_2}}{\val{(e_1, e_2)}} \\
    \end{mathpar}
    \caption{Definition of value judgment in \CoreLang{}.} \label{fig:core-value-judgment}
\end{figure}

\begin{figure}
    \footnotesize
    \begin{mathpar}
        \inferrule[S-App]{\phi(f) = \Lam{x}{e} \\ \val{e'}}{\stepo{\ket{\qstate}}{\App{f}{e'}}{\emptyset}{1}{\ket{\qstate}}{[e'/x]e}}

        \inferrule[S-AppL]{\stepo{\ket{\qstate}}{e_1}{O}{p}{\ket{\qstate'}}{e_1'}}{\stepo{\ket{\qstate}}{\App{e_1}{e_2}}{O}{p}{\ket{\qstate'}}{\App{e_1'}{e_2}}}

        \inferrule[S-AppR]{\val{e_1} \\ \stepo{\ket{\qstate}}{e_2}{O}{p}{\ket{\qstate'}}{e_2'}}{\stepo{\ket{\qstate}}{\App{e_1}{e_2}}{O}{p}{\ket{\qstate'}}{\App{e_1}{e_2'}}}

        \inferrule[S-PairL]{\stepo{\ket{\qstate}}{e_1}{O}{p}{\ket{\qstate'}}{e_1'}}{\stepo{\ket{\qstate}}{(e_1, e_2)}{O}{p}{\ket{\qstate'}}{(e_1', e_2)}}

        \inferrule[S-PairR]{\val{e_1} \\ \stepo{\ket{\qstate}}{e_2}{O}{p}{\ket{\qstate'}}{e_2'}}{\stepo{\ket{\qstate}}{(e_1, e_2)}{O}{p}{\ket{\qstate'}}{(e_1, e_2')}}

        \inferrule[S-Let]{\val{e_1} \\ \val{e_2}}{\stepo{\ket{\qstate}}{\Let{x}{y}{(e_1, e_2)}{e'}}{\emptyset}{1}{\ket{\qstate}}{[e_1, e_2/x, y]e'}}

        \inferrule[S-LetS]{\stepo{\ket{\qstate}}{e_1}{O}{p}{\ket{\qstate'}}{e_1'}}{\stepo{\ket{\qstate}}{\Let{x}{y}{e_1}{e_2}}{O}{p}{\ket{\qstate'}}{\Let{x}{y}{e_1'}{e_2}}}

        \inferrule[E-Val]{\val{v}}{\stepostar{\ket{\qstate}}{v}{\emptyset}{1}{\ket{\qstate}}{v}}

        \inferrule[E-Step]{\stepo{\ket{\qstate}}{e}{O_1}{p_1}{\ket{\qstate'}}{e'} \\ \stepostar{\ket{\qstate'}}{e'}{O_2}{p_2}{\ket{\qstate''}}{v} \\ O' = O_1 \cup O_2}{\stepostar{\ket{\qstate}}{e}{O'}{p_1 p_2}{\ket{\qstate''}}{v}}
    \end{mathpar}
    \caption{Operational semantics for classical constructs in \CoreLang{} and \LangName.} \label{fig:core-lambda-dynamics}
\end{figure}

\begin{figure}
    \footnotesize
    \begin{mathpar}
        \inferrule[S-Qinit]{\alpha \text{ fresh in } \ket{\qstate}}{\stepo{\ket{\qstate}}{\Qinit\ ()}{\emptyset}{1}{\ket{\qstate} \otimes \ket{0}_\alpha}{\Qref{\alpha}}}

        \inferrule[S-U1]{\vphantom{\ctx}}{\stepo{\ket{\qstate}}{\Uone{\Qref{\alpha}}}{\emptyset}{1}{U_\alpha\ket{\qstate}}{\Qref{\alpha}}}

        \inferrule[S-U1S]{\stepo{\ket{\qstate}}{e}{O}{p}{\ket{\qstate'}}{e'}}{\stepo{\ket{\qstate}}{\Uone{e}}{O}{p}{\ket{\qstate'}}{\Uone{e'}}}

        \inferrule[S-U2]{\vphantom{\ctx}}{\stepo{\ket{\qstate}}{\Utwo{\Qref{\alpha}, \Qref{\beta}}}{\emptyset}{1}{U_{\alpha, \beta}\ket{\qstate}}{(\Qref{\alpha}, \Qref{\beta})}}

        \inferrule[S-U2S]{\stepo{\ket{\qstate}}{e}{O}{p}{\ket{\qstate'}}{e'}}{\stepo{\ket{\qstate}}{\Utwo{e}}{O}{p}{\ket{\qstate'}}{\Utwo{e'}}}

        \inferrule[S-MeasureT]{M_\alpha\ket{\qstate} = \ket{1}_\alpha \otimes \ket{\qstate'} \text{ w.p. } p \\ O = \{(\alpha, \True)\}}{\stepo{\ket{\qstate}}{\Measure{\Qref{\alpha}}}{O}{p}{\ket{\qstate'}}{\True}}

        \inferrule[S-MeasureF]{M_\alpha\ket{\qstate} = \ket{0}_\alpha \otimes \ket{\qstate'} \text{ w.p. } p \\ O = \{(\alpha, \False)\}}{\stepo{\ket{\qstate}}{\Measure{\Qref{\alpha}}}{O}{p}{\ket{\qstate'}}{\False}}

        \inferrule[S-MeasureS]{\stepo{\ket{\qstate}}{e}{O}{p}{\ket{\qstate'}}{e'}}{\stepo{\ket{\qstate}}{\Measure{e}}{O}{p}{\ket{\qstate'}}{\Measure{e'}}}

        \inferrule[S-IfT]{\vphantom{\ctx}}{\stepo{\ket{\qstate}}{\If{\True}{e_1}{e_2}}{\emptyset}{1}{\ket{\qstate}}{e_1}}

        \inferrule[S-IfF]{\vphantom{\ctx}}{\stepo{\ket{\qstate}}{\If{\False}{e_1}{e_2}}{\emptyset}{1}{\ket{\qstate}}{e_2}}

        \inferrule[S-IfS]{\stepo{\ket{\qstate}}{e}{O}{p}{\ket{\qstate'}}{e'}}{\stepo{\ket{\qstate}}{\If{e}{e_1}{e_2}}{O}{p}{\ket{\qstate'}}{\If{e'}{e_1}{e_2}}}
    \end{mathpar}

    \caption{Full operational semantics of quantum constructs in \CoreLang{}.} \label{fig:core-quantum-dynamics}
\end{figure}

\subsection{\LangName{} Language}

\Cref{fig:full-types} contains the full type system of \LangName{}.
\Cref{fig:value-judgment}, \Cref{fig:core-lambda-dynamics}, and \Cref{fig:full-dynamics} contain the full operational semantics of \LangName{}.

\begin{figure}
    \footnotesize
    \begin{mathpar}
        \inferrule[Q-Ref]{\vphantom{\ctx}}{\qhastype{\{\alpha\}}{\Qref{\alpha}}{\qubit}}

        \inferrule[Q-Pair]{\qhastype{\rctx_1}{q_1}{\qtype_1} \\ \qhastype{\rctx_2}{q_2}{\qtype_2}}{\qhastype{\rctx_1 \sqcup \rctx_2}{\Epair{q_1}{q_2}}{\tEpair{\qtype_1}{\qtype_2}}}

        \inferrule[T-Var]{\vphantom{\ctx}}{\hastype{x : \type}{\emptyset}{x}{\type}}

        \inferrule[T-Fun]{\vphantom{\ctx}}{\hastype{f : \type \to \type'}{\emptyset}{f}{\type \to \type'}}

        \inferrule[T-App]{\hastype{\ctx_1}{\rctx_1}{e_1}{\type_1 \to \type_2} \\ \hastype{\ctx_2}{\rctx_2}{e_2}{\type_1}}{\hastype{\ctx_1, \ctx_2}{\rctx_1 \sqcup \rctx_2}{\App{e_1}{e_2}}{\type_2}}

        \inferrule[T-Pair]{\hastype{\ctx_1}{\rctx_1}{e_1}{\type_1} \\ \hastype{\ctx_2}{\rctx_2}{e_2}{\type_2}}{\hastype{\ctx_1, \ctx_2}{\rctx_1 \sqcup \rctx_2}{(e_1, e_2)}{\tPair{\type_1}{\type_2}}}

        \inferrule[T-Let]{\hastype{\ctx_1}{\rctx_1}{e_1}{\tPair{\type_1}{\type_2}} \\\\ \hastype{\ctx_2, x : \type_1, y : \type_2}{\rctx_2}{e_2}{\type}}{\hastype{\ctx_1, \ctx_2}{\rctx_1 \sqcup \rctx_2}{\Let{x}{y}{e_1}{e_2}}{\type}}

        \inferrule[T-If]{\hastype{\ctx_1}{\rctx_1}{e}{\tBool} \\ \hastype{\ctx_2}{\rctx_2}{e_1}{\tPurity{\qtype}{\purity}} \\ \hastype{\ctx_2}{\rctx_2}{e_2}{\tPurity{\qtype}{\purity}}}{\hastype{\ctx_1, \ctx_2}{\rctx_1 \sqcup \rctx_2}{\If{e}{e_1}{e_2}}{\tPurity{\qtype}{\mixed}}}

        \inferrule[T-Bool]{\vphantom{\ctx}}{\hastype{\cdot}{\emptyset}{b}{\tBool}}

        \inferrule[T-Qinit]{\vphantom{\ctx}}{\hastype{\cdot}{\emptyset}{\Qinit\ ()}{\tPurity{\qubit}{\pure}}}

        \inferrule[T-U1]{\hastype{\ctx}{\rctx}{e}{\tPurity{\qubit}{\purity}}}{\hastype{\ctx}{\rctx}{\Uone{e}}{\tPurity{\qubit}{\purity}}}

        \inferrule[T-U2]{\hastype{\ctx}{\rctx}{e}{\tPurity{(\tEpair{\qubit}{\qubit})}{\purity}}}{\hastype{\ctx}{\rctx}{\Utwo{e}}{\tPurity{(\tEpair{\qubit}{\qubit})}{\purity}}}

        \inferrule[T-Measure]{\hastype{\ctx}{\rctx}{e}{\tPurity{\qubit}{\purity}}}{\hastype{\ctx}{\rctx}{\Measure{e}}{\tBool}}

        \inferrule[T-Qval]{\qhastype{\rctx}{q}{\qtype}}{\hastype{\cdot}{\rctx}{\Qpurity{q}{\purity}}{\tPurity{\qtype}{\purity}}}

        \inferrule[T-Entangle]{\hastype{\ctx}{\rctx}{e}{\tPair{\tPurity{\qtype_1}{\purity}}{\tPurity{\qtype_2}{\purity}}}}{\hastype{\ctx}{\rctx}{\EntangleOne{\purity}{e}}{\tPurity{(\tEpair{\qtype_1}{\qtype_2})}{\purity}}}

        \inferrule[T-Split]{\hastype{\ctx}{\rctx}{e}{\tPurity{(\tEpair{\qtype_1}{\qtype_2})}{\purity}}}{\hastype{\ctx}{\rctx}{\SplitA{\purity}{e}}{\tPair{\tPurity{\qtype_1}{\purity}}{\tPurity{\qtype_2}{\purity}}}}

        \inferrule[T-Cast]{\hastype{\ctx}{\rctx}{e}{\tPurity{\qtype}{\purity'}}}{\hastype{\ctx}{\rctx}{\Cast{\purity}{e}}{\tPurity{\qtype}{\purity}}}
    \end{mathpar}
    \caption{Full type system of \LangName{}.} \label{fig:full-types}
    \end{figure}

\begin{figure}
    \footnotesize
    \begin{mathpar}
        \inferrule[V-Fun]{\vphantom{\ctx}}{\val{f}}

        \inferrule[V-Bool]{\vphantom{\ctx}}{\val{b}}

        \inferrule[V-Qval]{\vphantom{\ctx}}{\val{\Qpurity{q}{\purity}}}

        \inferrule[V-Pair]{\val{e_1} \\ \val{e_2}}{\val{(e_1, e_2)}} \\
    \end{mathpar}
    \caption{Definition of value judgment in \LangName{}.} \label{fig:value-judgment}
\end{figure}

\begin{figure}
    \footnotesize
    \begin{mathpar}
        \inferrule[S-Qinit]{\alpha \text{ fresh in } \ket{\qstate}}{\stepo{\ket{\qstate}}{\Qinit\ ()}{\emptyset}{1}{\ket{\qstate} \otimes \ket{0}_\alpha}{\Qpurity{\Qref{\alpha}}{\purity}}}

        \inferrule[S-U1]{\vphantom{\ctx}}{\stepo{\ket{\qstate}}{\Uone{\Qpurity{\Qref{\alpha}}{\purity}}}{\emptyset}{1}{U_\alpha\ket{\qstate}}{\Qpurity{\Qref{\alpha}}{\purity}}}

        \inferrule[S-U1S]{\stepo{\ket{\qstate}}{e}{O}{p}{\ket{\qstate'}}{e'}}{\stepo{\ket{\qstate}}{\Uone{e}}{O}{p}{\ket{\qstate'}}{\Uone{e'}}}

        \inferrule[S-U2]{\vphantom{\ctx}}{\stepo{\ket{\qstate}}{\Utwo{\Qpurity{\Epair{\Qref{\alpha}}{\Qref{\beta}}}{\purity}}}{\emptyset}{1}{U_{\alpha, \beta}\ket{\qstate}}{\Qpurity{\Epair{\Qref{\alpha}}{\Qref{\beta}}}{\purity}}}

        \inferrule[S-U2S]{\stepo{\ket{\qstate}}{e}{O}{p}{\ket{\qstate'}}{e'}}{\stepo{\ket{\qstate}}{\Utwo{e}}{O}{p}{\ket{\qstate'}}{\Utwo{e'}}}

        \inferrule[S-MeasureT]{M_\alpha\ket{\qstate} = \ket{1}_\alpha \otimes \ket{\qstate'} \text{ w.p. } p \\ O = \{(\alpha, \True)\}}{\stepo{\ket{\qstate}}{\Measure{\Qpurity{\Qref{\alpha}}{\purity}}}{O}{p}{\ket{\qstate'}}{\True}}

        \inferrule[S-MeasureF]{M_\alpha\ket{\qstate} = \ket{0}_\alpha \otimes \ket{\qstate'} \text{ w.p. } p \\ O = \{(\alpha, \False)\}}{\stepo{\ket{\qstate}}{\Measure{\Qpurity{\Qref{\alpha}}{\purity}}}{O}{p}{\ket{\qstate'}}{\False}}

        \inferrule[S-MeasureS]{\stepo{\ket{\qstate}}{e}{O}{p}{\ket{\qstate'}}{e'}}{\stepo{\ket{\qstate}}{\Measure{e}}{O}{p}{\ket{\qstate'}}{\Measure{e'}}}

        \inferrule[S-IfT]{\vphantom{\ctx}}{\stepo{\ket{\qstate}}{\If{\True}{e_1}{e_2}}{\emptyset}{1}{\ket{\qstate}}{\Cast{\mixed}{e_1}}}

        \inferrule[S-IfF]{\vphantom{\ctx}}{\stepo{\ket{\qstate}}{\If{\False}{e_1}{e_2}}{\emptyset}{1}{\ket{\qstate}}{\Cast{\mixed}{e_2}}}

        \inferrule[S-IfS]{\stepo{\ket{\qstate}}{e}{O}{p}{\ket{\qstate'}}{e'}}{\stepo{\ket{\qstate}}{\If{e}{e_1}{e_2}}{O}{p}{\ket{\qstate'}}{\If{e'}{e_1}{e_2}}}

        \inferrule[S-Entangle]{\vphantom{\ctx}}{\stepo{\ket{\qstate}}{\EntangleOne{\purity}{\Qpurity{q_1}{\purity}, \Qpurity{q_2}{\purity}}}{\emptyset}{1}{\ket{\qstate}}{\Qpurity{\Epair{q_1}{q_2}}{\purity}}}

        \inferrule[S-EntangleS]{\stepo{\ket{\qstate}}{e}{O}{p}{\ket{\qstate'}}{e'}}{\stepo{\ket{\qstate}}{\EntangleOne{\purity}{e}}{O}{p}{\ket{\qstate'}}{\EntangleOne{\purity}{e'}}}

        \inferrule[S-SplitS]{\stepo{\ket{\qstate}}{e}{O}{p}{\ket{\qstate'}}{e'}}{\stepo{\ket{\qstate}}{\SplitA{\purity}{e}}{O}{p}{\ket{\qstate'}}{\SplitA{\purity}{e'}}}

        \inferrule[S-SplitMixed]{\vphantom{\ctx}}{\stepo{\ket{\qstate}}{\SplitA{\mixed}{\Qpurity{\Epair{q_1}{q_2}}{\mixed}}}{\emptyset}{1}{\ket{\qstate}}{(\Qpurity{q_1}{\mixed}, \Qpurity{q_2}{\mixed})}}

        \inferrule[S-SplitPure]{\ket{\qstate} = \ket{\qstate_1} \otimes \ket{\qstate_2} \otimes \ket{\qstate_0} \\\\ \domain{\ket{\qstate_1}} = \Refs{q_1} \\ \domain{\ket{\qstate_2}} = \Refs{q_2}}{\stepo{\ket{\qstate}}{\SplitA{\pure}{\Qpurity{\Epair{q_1}{q_2}}{\pure}}}{\emptyset}{1}{\ket{\qstate}}{(\Qpurity{q_1}{\pure}, \Qpurity{q_2}{\pure})}}

        \inferrule[S-Cast]{(unchecked)}{\stepo{\ket{\qstate}}{\Cast{\purity}{\Qpurity{q}{\purity'}}}{\emptyset}{1}{\ket{\qstate}}{\Qpurity{q}{\purity}}}

        \inferrule[S-CastS]{\stepo{\ket{\qstate}}{e}{O}{p}{\ket{\qstate'}}{e'}}{\stepo{\ket{\qstate}}{\Cast{\purity}{e}}{O}{p}{\ket{\qstate'}}{\Cast{\purity}{e'}}}
    \end{mathpar}
\caption{Operational semantics of quantum constructs in \LangName{}.} \label{fig:full-dynamics}
\end{figure}
\section{Full Static Analysis Type System} \label{sec:app-analysis}

\Cref{fig:full-analysis} contains the full type system. There is no rule for $\Qpurity{q}{\purity}$ which is not written by the user.

\begin{figure}
    \footnotesize
    \begin{mathpar}
        \inferrule[A-Var]{\vphantom{\ctx}}{\hastypea{x : \type}{x}{\type}}

        \inferrule[A-Fun]{\hastypea{\ctx, x : \type_1}{e}{\type_2}}{\hastypea{\ctx}{\Lam{x}{e}}{\type_1 \to \type_2}}

        \inferrule[A-App]{\hastypea{\ctx_1}{e_1}{\type_1 \to \type_2} \\ \hastypea{\ctx_2}{e_2}{\type_1}}{\hastypea{\ctx_1, \ctx_2}{\App{e_1}{e_2}}{\type_2}}

        \inferrule[A-Pair]{\hastypea{\ctx_1}{e_1}{\type_1} \\ \hastypea{\ctx_2}{e_2}{\type_2}}{\hastypea{\ctx_1, \ctx_2}{(e_1, e_2)}{\tPair{\type_1}{\type_2}}}

        \inferrule[A-Let]{\hastypea{\ctx_2}{e}{\tPair{\type_1}{\type_2}} \\ \hastypea{\ctx_1, x : \type_1, y : \type_2}{e'}{\type}}{\hastypea{\ctx_1, \ctx_2}{\Let{x}{y}{e}{e'}}{\type}}

        \inferrule[A-If]{\hastypea{\ctx_1}{e}{\tBool} \quad \hastypea{\ctx_2}{e_1}{\tPurity{\qtype}{f}} \quad \hastypea{\ctx_2}{e_2}{\tPurity{\qtype}{g}}}{\hastypea{\ctx_1, \ctx_2}{\If{e}{e_1}{e_2}}{\tPurity{\qtype}{\mixed}}}

        \inferrule[A-Bool]{\vphantom{\ctx}}{\hastypea{\cdot}{b}{\tBool}}

        \inferrule[A-Qinit]{\vphantom{\ctx}}{\hastypea{\cdot}{\Qinit\ ()}{\tPurity{\qubit}{\pure}}}

        \inferrule[A-U1]{\hastypea{\ctx}{e}{\tPurity{\qubit}{f}}}{\hastypea{\ctx}{\Uone{e}}{\tPurity{\qubit}{f}}}

        \inferrule[A-U2]{\hastypea{\ctx}{e}{\tPurity{(\tEpair{\qubit}{\qubit})}{f}}}{\hastypea{\ctx}{\Utwo{e}}{\tPurity{(\tEpair{\qubit}{\qubit})}{f}}}

        \inferrule[A-Measure]{\hastypea{\ctx}{e}{\tPurity{\qubit}{f}}}{\hastypea{\ctx}{\Measure{e}}{\tBool}}

        \inferrule[A-Entangle]{\hastypea{\ctx}{e}{\tPair{\tPurity{\qtype_1}{f}}{\tPurity{\qtype_2}{g}}} \quad h = \CombinePoly{f}{g}}{\hastypea{\ctx}{\EntangleOne{\purity}{e}}{\tPurity{(\tEpair{\qtype_1}{\qtype_2})}{h}}}

        \inferrule[A-SplitMixed]{\hastypea{\ctx}{e}{\tPurity{(\tEpair{\qtype_1}{\qtype_2})}{f}} \\ j\ \mathsf{fresh} \\ g = \SplitPoly{f}{j}}{\hastypea{\ctx}{\SplitA{\mixed}{e}}{\tPair{\tPurity{\qtype_1}{g}}{\tPurity{\qtype_2}{g}}}}

        \inferrule[A-SplitPure]{\hastypea{\ctx}{e}{\tPurity{(\tEpair{\qtype_1}{\qtype_2})}{\pure}}}{\hastypea{\ctx}{\SplitA{\pure}{e}}{\tPair{\tPurity{\qtype_1}{\pure}}{\tPurity{\qtype_2}{\pure}}}}

        \inferrule[A-CastMixed]{\hastypea{\ctx}{e}{\tPurity{\qtype}{f}}}{\hastypea{\ctx}{\Cast{\mixed}{e}}{\tPurity{\qtype}{f}}}

        \inferrule[A-CastPure]{\hastypea{\ctx}{e}{\tPurity{\qtype}{\pure}}}{\hastypea{\ctx}{\Cast{\pure}{e}}{\tPurity{\qtype}{\pure}}}

    \end{mathpar}
    \caption{Full static analysis type system.} \label{fig:full-analysis}
\end{figure}
\section{Proofs of Semantic Properties} \label{sec:app-proofs}

In this section, we provide proofs of progress, preservation, and purity soundness theorems for \LangName{}.

\subsection{\LangName{} Language}

\subsubsection{Progress (\Cref{thm:progress})} \label{sec:progress-proof}

First, we state canonical forms lemmas for values:
\begin{lemma}
    If $\hastype{\ctx}{\rctx}{e}{\type}$ and $\val{e}$, then:
    \begin{itemize}
        \item If $\type$ is $\tPurity{\qtype}{\purity}$, then $e$ is $\Qpurity{q}{\purity}$ where $\qhastype{\rctx}{q}{\qtype}$.
        \item If $\type$ is $\tBool$, then $e$ is one of \True{} or \False{}.
        \item If $\type$ is $\tPair{\type_1}{\type_2}$, then $e$ is $(e_1, e_2)$ where $\val{e_1}$ and $\val{e_2}$.
        \item If $\type$ is $\type_1 \to \type_2$, then $e$ is $f$ and $\phi(f) = \Lam{x}{e}$.
    \end{itemize}
\end{lemma}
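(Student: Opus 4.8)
The plan is to prove the lemma by case analysis on the derivation of $\val{e}$ (\Cref{fig:value-judgment}), exploiting the fact that the typing rules of \LangName{} in \Cref{fig:full-types} are syntax-directed: for each syntactic form that a value can take, exactly one typing rule has a conclusion that matches, and there is no subsumption rule, so the typing derivation can be inverted without ambiguity. No induction is needed, since for the pair case the subderivations' values are supplied directly by the premises of \rulename{V-Pair} rather than reconstructed.

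I would enumerate the four value forms. In the \rulename{V-Fun} case, $e = f$; the only rule concluding a type for a bare function name is \rulename{T-Fun}, so $\type$ has the form $\type_1 \to \type_2$. To obtain $\phi(f) = \Lam{x}{e'}$ for some body $e'$, I would appeal to the invariant relating the typing context to $\phi$ for well-formed programs: the judgment $\ProgOk{\ctx}{m}$ (\Cref{fig:prog-ok}) populates the initial context with exactly the declared function signatures, and the operational semantics builds $\phi$ from the same declarations, so every function name in scope has a corresponding definition in $\phi$ of the form $\Lam{x}{e'}$. In the \rulename{V-Bool} case, $e = b$; inverting \rulename{T-Bool} forces $\type = \tBool$, and $b$ ranges over $\True$ and $\False$ by definition of the metavariable. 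In the \rulename{V-Qval} case, $e = \Qpurity{q}{\purity}$; inverting \rulename{T-Qval} yields $\type = \tPurity{\qtype}{\purity}$ together with $\qhastype{\rctx}{q}{\qtype}$, which is exactly the stated conclusion. In the \rulename{V-Pair} case, $e = (e_1, e_2)$ with $\val{e_1}$ and $\val{e_2}$ already given; inverting \rulename{T-Pair} yields $\type = \tPair{\type_1}{\type_2}$.

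The only point requiring attention is verifying that no \emph{other} typing rule could apply to each value form, so that the inversion above is in fact exhaustive: for instance, that a pair value $(e_1, e_2)$ cannot also be typed via \rulename{T-App}, \rulename{T-U2}, \rulename{T-Entangle}, or \rulename{T-Split}, whose conclusions have head symbols $\App{e_1}{e_2}$, $\Utwo{e}$, $\EntangleOne{\purity}{e}$, and $\SplitA{\purity}{e}$ respectively, each syntactically distinct from a pair; analogous reasoning dispatches the alternatives for $f$, $b$, and $\Qpurity{q}{\purity}$ (in particular $\Qpurity{q}{\purity}$ is distinct from $\Qinit\ ()$, $\Uone{e}$, and $\Cast{\purity}{e}$, none of which are values). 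I do not anticipate a substantive obstacle; the lemma is a routine consequence of syntax-directedness. For downstream use in the progress proof I would also state a companion inversion lemma on quantum values — namely that $\qhastype{\rctx}{q}{\qtype}$ implies either $q = \Qref{\alpha}$ with $\qtype = \qubit$ and $\rctx = \{\alpha\}$, or $q = \Epair{q_1}{q_2}$ with $\qtype = \tEpair{\qtype_1}{\qtype_2}$ and $\rctx = \rctx_1 \sqcup \rctx_2$ where the components are typed accordingly — proved the same way by case analysis on $q$ using \rulename{Q-Ref} and \rulename{Q-Pair}.
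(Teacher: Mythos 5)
Your proposal is correct and matches the paper's approach: the paper proves this lemma simply ``by inversion of $\val{e}$,'' and your case analysis on the value judgment followed by inversion of the syntax-directed typing rules is exactly that argument spelled out (with the minor, reasonable addition of appealing to program well-formedness to relate $f$ to $\phi$).
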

\begin{proof}
    By inversion of $\val{e}$.
\end{proof}

\begin{lemma}
    If $\qhastype{\rctx}{q}{\qtype}$, then if $\qtype$ is $\qubit$ then $q$ is $\Qref{\alpha}$ for some $\alpha$ and if $\qtype$ is $\Epair{\qtype_1}{\qtype_2}$ then $q$ is $\Epair{q_1}{q_2}$ for some $q_1, q_2$.
\end{lemma}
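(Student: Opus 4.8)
The plan is to prove this canonical-forms lemma by inversion on the derivation of $\qhastype{\rctx}{q}{\qtype}$, exploiting the fact that the quantum-value typing judgment has exactly two defining rules, \rulename{Q-Ref} and \rulename{Q-Pair}. I would first note that \rulename{Q-Ref} concludes a judgment whose subject is $\Qref{\alpha}$ and whose type is $\qubit$, whereas \rulename{Q-Pair} concludes a judgment whose subject is $\Epair{q_1}{q_2}$ and whose type is an entangled-pair type $\tEpair{\qtype_1}{\qtype_2}$. Since $\qubit$ and an entangled-pair type are syntactically distinct type forms, at most one of the two rules can have been the last rule used in the derivation, and which one is determined entirely by the shape of $\qtype$.

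I would then split on $\qtype$. If $\qtype = \qubit$, the last rule cannot be \rulename{Q-Pair}, since its conclusion carries an entangled-pair type; hence it must be \rulename{Q-Ref}, and the shape of that rule's conclusion forces $q = \Qref{\alpha}$ for some $\alpha$ (in fact with $\rctx = \{\alpha\}$, though the lemma does not need this). If $\qtype = \tEpair{\qtype_1}{\qtype_2}$, the last rule cannot be \rulename{Q-Ref}, since its conclusion has type $\qubit$; hence it must be \rulename{Q-Pair}, whose conclusion has the form $\Epair{q_1}{q_2}$ for some $q_1$ and $q_2$, with premises $\qhastype{\rctx_1}{q_1}{\qtype_1}$, $\qhastype{\rctx_2}{q_2}{\qtype_2}$ and $\rctx = \rctx_1 \sqcup \rctx_2$. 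An equivalent route is to case on the syntax of $q$ — which by the grammar is either $\Qref{\alpha}$ or $\Epair{q_1}{q_2}$ — and invert the unique applicable typing rule in each subcase.

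I do not anticipate any real obstacle here: no induction is required, since only the final rule of the derivation is inverted, and the claim is immediate from the shapes of the two rules. The one point worth double-checking is that the grammars — $q \Coloneqq \Qref{\alpha} \mid \Epair{q_1}{q_2}$ for quantum values and $\qtype \Coloneqq \qubit \mid \tEpair{\qtype_1}{\qtype_2}$ for quantum types — genuinely leave no further cases, so that \rulename{Q-Ref} and \rulename{Q-Pair} are exhaustive. This lemma is then used, together with the value canonical-forms lemma stated immediately above it, in establishing the progress theorem for \LangName{}.
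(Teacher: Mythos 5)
Your proposal is correct and matches the paper's proof, which is exactly ``By inversion of $\qhastype{\rctx}{q}{\qtype}$''; your case split on the shape of $\qtype$ and the observation that \rulename{Q-Ref} and \rulename{Q-Pair} are the only and mutually exclusive applicable rules is just that inversion spelled out in detail.
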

\begin{proof}
    By inversion of $\qhastype{\rctx}{q}{\qtype}$.
\end{proof}

Next, we prove the main theorem:
\begin{proof}
    Proceed by induction on the derivation of $\hastype{\cdot}{\rctx}{e}{\type}$. \rulename{T-Var} does not apply.

    In cases \rulename{T-Abs}, \rulename{T-Qval}, \rulename{T-Bool} we have $\val{e}$.

    In cases \rulename{T-App}, \rulename{T-Pair}, by IH either $\canstep{\ket{\qstate}}{e_1}$ or $\val{e_1}$. In the former case, apply \rulename{S-AppL} or \rulename{S-PairL} to obtain $\canstep{\ket{\qstate}}{e}$. In the latter case, by IH either $\canstep{\ket{\qstate}}{e_2}$ or $\val{e_2}$. In the former case, apply \rulename{S-AppR} or \rulename{S-PairR} and in the latter case apply \rulename{S-App} or \rulename{V-Pair}.

    In cases \rulename{T-Let}, \rulename{T-U1}, \rulename{T-U2}, and \rulename{T-Entangle}, by IH either $\canstep{\ket{\qstate}}{e}$ or $\val{e}$. In the former case, apply \rulename{S-LetS}, \rulename{S-U1S}, \rulename{S-U2S}, or \rulename{S-EntangleS}. In the latter case, apply \rulename{S-Let}, \rulename{S-U1}, \rulename{S-U2}, or \rulename{S-Entangle}.

    In case \rulename{T-Qinit}, apply \rulename{S-Qinit}. In case \rulename{T-If}, by IH either $\canstep{\ket{\qstate}}{e}$ or $\val{e}$. In the former case, apply \rulename{S-IfS}. In the latter case, apply \rulename{S-IfT} or \rulename{S-IfF}.

    In case \rulename{T-Measure}, by IH either $\canstep{\ket{\qstate}}{e}$ or $\val{e}$. In the former case, apply \rulename{S-MeasureS}. In the latter case, apply one of \rulename{S-MeasureT} or \rulename{S-MeasureF}, whose probabilities resulting from a two-outcome quantum measurement add to 1.

    In case \rulename{T-Split}, by IH we have $\canstep{\ket{\qstate}}{e}$ or $\val{e}$. In the former case, apply \rulename{S-SplitS}. In the latter case, either apply \rulename{S-SplitMixed}, or if the premises of \rulename{S-SplitPure} are true, apply \rulename{S-SplitPure}, otherwise apply \rulename{S-SplitFail}.

    In case \rulename{T-Cast}, by IH we have $\canstep{\ket{\qstate}}{e}$ or $\val{e}$. In the former case, apply \rulename{S-CastS}. In the latter case, apply \rulename{S-Cast}.
\end{proof}

\subsubsection{Preservation (\Cref{thm:preservation})} \label{sec:preservation-proof}

In the following, we use $\sempure{q}{\ket{\qstate'}}$ as shorthand for $\sempure{\Qpurity{q}{\pure}}{\ket{\qstate'}}$.
We first state a helpful lemma. Its proof follows from the fact that purity of quantum values is equivalent to separability, and separable qubits remain separable under irrelevant measurement or unitary operations:
\begin{lemma}
    If $\sempure{q}{\ket{\qstate}}$ and $\ket{\qstate'}$ is any sequence of $M_A$ or $U_A$ applied to $\ket{\qstate}$ where $A$ contains no qubits owned by $q$, then $\sempure{q}{\ket{\qstate'}}$.
\end{lemma}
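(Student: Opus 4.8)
The plan is to reduce the purity of the quantum value $\Qpurity{q}{\pure}$ to a plain separability statement about the underlying quantum state, which is then visibly stable under the operations in question.

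First I would observe that $\Qpurity{q}{\pure}$ is a value (rule \rulename{V-Qval}), so under any $\ket{\qstate}$ its only evaluation is the reflexive one given by \rulename{E-Val}, with no intermediate measurement outcomes. Hence \Cref{def:purity} specialized to $\Qpurity{q}{\pure}$ degenerates to the requirement that every implicit measurement of $\Qpurity{q}{\pure}$ in $\ket{\qstate}$ produces the same final state on $\Refs{q}$ (qubit equivalence collapses to equality, since the value component is literally unchanged between the two executions). Using the characterization already established in the text — a value has a unique implicit measurement in $\ket{\qstate}$ iff its referenced qubits are separable in $\ket{\qstate}$ — this is equivalent to saying that $\ket{\qstate}$ factors as $\ket{\qstate} = \ket{\qstate_q} \otimes \ket{\qstate_R}$ with $\domain{\ket{\qstate_q}} = \Refs{q}$. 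So it suffices to show that each operation $M_A$ or $U_A$ with $A$ disjoint from $\Refs{q}$ preserves the existence of such a factorization; an induction on the length of the sequence of operations then yields $\sempure{q}{\ket{\qstate'}}$.

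Next I would fix a factorization $\ket{\qstate} = \ket{\qstate_q} \otimes \ket{\qstate_R}$ with $\domain{\ket{\qstate_q}} = \Refs{q}$ and note $A \subseteq \domain{\ket{\qstate_R}} = \domain{\ket{\qstate}} \setminus \Refs{q}$. In the unitary case, $U_A$ acts as the identity on the $\ket{\qstate_q}$ factor, so $U_A\ket{\qstate} = \ket{\qstate_q} \otimes (U_A\ket{\qstate_R})$, again of the required form. In the measurement case, $M_A$ amounts to measuring the qubits of $A$ one at a time, each lying inside the $\ket{\qstate_R}$ factor; using the definition of single-qubit measurement from \Cref{sec:background} (rewrite in the $\ket{0}_\alpha,\ket{1}_\alpha$ basis and observe that the $\ket{\qstate_q}$ tensor factor rides along unchanged in both outcome branches), every possible outcome state has the form $\ket{\qstate_q} \otimes \ket{\qstate_R'}$, with the first factor, and in particular $\domain{\ket{\qstate_q}} = \Refs{q}$, untouched. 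Either way $\Refs{q}$ remains separable in the resulting state, and by the reduction in the previous paragraph $\sempure{q}{\ket{\qstate'}}$ follows.

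I expect the only mildly delicate point to be the bookkeeping in the measurement case: one must confirm that measuring a qubit contained entirely in the second tensor factor leaves the first factor intact for \emph{each} of its (possibly multiple, probability-weighted) outcomes, i.e. that $M_\alpha$ applied to $\ket{\qstate_q} \otimes \ket{\qstate_R}$ equals $\ket{\qstate_q} \otimes M_\alpha\ket{\qstate_R}$ outcome-by-outcome. This is immediate from the definition of measurement, but it is the one place where the argument genuinely uses the quantum semantics rather than purely syntactic unfolding; everything else is routine induction and rewriting of definitions.
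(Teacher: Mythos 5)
Your argument is correct and matches the paper's (one-sentence) proof exactly: the paper justifies this lemma by noting that purity of a quantum value is equivalent to separability of its referenced qubits, and that separability is preserved by measurements and unitaries acting only on the other tensor factor, which are precisely the two reductions you carry out in detail. The extra bookkeeping you supply (E-Val being the only evaluation of a value, the collapse of qubit equivalence to equality, and the outcome-by-outcome check for $M_\alpha$) is sound and fills in what the paper leaves implicit.
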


Next, we prove the main theorem:
\begin{proof}
    Proceed by induction on the derivation of $\step{\ket{\qstate}}{e}{}{\ket{\qstate'}}{e'}$. In each case, to show that $\iscompat{\ket{\qstate'}}{e'}$, we show that every pure annotation inside a term that the step introduces is pure. If the step does not introduce pure annotations and does not have any effect on $\ket{\qstate}$, then $\iscompat{\ket{\qstate}}{e}$ immediately implies $\iscompat{\ket{\qstate'}}{e'}$.

    In case \rulename{S-App}, by inversion of $\hastype{\ctx_1, \ctx_2}{\rctx_1, \rctx_2}{\App{(\Lam{x}{e})}{e'}}{\type}$ we have that $\hastype{\ctx_1, x : \type'}{\rctx_1}{e}{\type}$ and $\hastype{\ctx_2}{\rctx_2}{e'}{\type'}$. Thus we have $\hastype{\ctx_1, \ctx_2}{\rctx_1, \rctx_2}{[e'/x]e}{\type}$.

    In case \rulename{S-Let}, by inversion of typing we have that $\hastype{\ctx, x : \type_1, y : \type_2}{\rctx}{e'}{\type}$ and $\hastype{\ctx_1}{\rctx_1}{e_1}{\type_1}$ and $\hastype{\ctx_2}{\rctx_2}{e_2}{\type_2}$. Thus we have $\hastype{\ctx, \ctx_1, \ctx_2}{\rctx, \rctx_1, \rctx_2}{[e_1, e_2/x, y]e}{\type}$.

    In case \rulename{S-Qinit}, the introduced term $\Qpurity{\Qref{\alpha}}{\pure}$ has type $\tPurity{\qubit}{\pure}$ and is separable and thus pure in $\ket{\qstate} \otimes \ket{0}_\alpha$.

    In case \rulename{S-SplitMixed}, there are no introduced pure annotations. By inversion of \linebreak $\hastype{\ctx}{\rctx_1, \rctx_2}{\SplitA{\mixed}{\Qpurity{\Epair{q_1}{q_2}}{\mixed}}}{\tPurity{(\tEpair{\qtype_1}{\qtype_2})}{\mixed}}$, we have that $\qhastype{\rctx_1}{q_1}{\qtype_1}$ and $\qhastype{\rctx_2}{q_2}{\qtype_2}$, meaning that $\hastype{\ctx}{\rctx_1, \rctx_2}{(\Qpurity{q_1}{\mixed}, \Qpurity{q_2}{\mixed})}{\tPurity{(\tEpair{\qtype_1}{\qtype_2})}{\mixed}}$.

    In case \rulename{S-Entangle}, by inversion of $\hastype{\ctx}{\rctx_1, \rctx_2}{\Entangle{\purity}{\Qpurity{q_1}{\purity}}{\Qpurity{q_2}{\purity}}}{\tPurity{(\tEpair{\qtype_1}{\qtype_2})}{\purity}}$ we have $\qhastype{\rctx_1}{q_1}{\qtype_1}$ and $\qhastype{\rctx_2}{q_2}{\qtype_2}$, meaning $\hastype{\ctx}{\rctx_1, \rctx_2}{\Qpurity{\Epair{q_1}{q_2}}{\purity}}{\tPurity{(\tEpair{\qtype_1}{\qtype_2})}{\purity}}$.
    If $\purity$ is $\mixed$ then no pure annotations are introduced.
    If $\purity$ is $\pure$, by the IH we know that $\sempure{q_1}{\ket{\qstate}}$ and $\sempure{q_2}{\ket{\qstate}}$, meaning $\sempure{\Epair{q_1}{q_2}}{\ket{\qstate}}$ and we have $\iscompat{\ket{\qstate}}{\Qpurity{\Epair{q_1}{q_2}}{\pure}}$.

    In case \rulename{S-Cast}, where $\purity$ is $\mixed$, no pure annotations are introduced. By inversion of $\hastype{\ctx}{\rctx}{\Cast{\mixed}{\Qpurity{q}{\purity}}}{\tPurity{\qtype}{\mixed}}$ we have $\qhastype{\rctx}{q}{\qtype}$, meaning $\hastype{\ctx}{\rctx}{\Qpurity{q}{\mixed}}{\tPurity{\qtype}{\mixed}}$.

    In case \rulename{S-U1}, by inversion of $\hastype{\ctx}{\rctx}{\Uone{\Qpurity{\Qref{\alpha}}{\purity}}}{\tPurity{\qtype}{\purity}}$ we have $\qhastype{\rctx}{\Qref{\alpha}}{\qtype}$, meaning $\hastype{\ctx}{\rctx}{\Qpurity{\Qref{\alpha}}{\purity}}{\tPurity{\qtype}{\purity}}$. If $\purity$ is $\mixed$ no pure annotations are introduced. If $\purity$ is $\pure$, then by the IH we know that $\sempure{q_1}{\ket{\qstate}}$. Because the unitary operator only acts on $\alpha$, we have $\sempure{q_1}{U_\alpha\ket{\qstate}}$ meaning $\iscompat{U_\alpha\ket{\qstate}}{\Qpurity{\Qref{\alpha}}{\pure}}$.

    In case \rulename{S-U2}, the same reasoning applies as for \rulename{S-U1}, except that if $\purity$ is pure then by the IH we know that $\sempure{\Epair{\Qref{\alpha}}{\Qref{\beta}}}{\ket{\qstate}}$ and the unitary operator only acts on $\alpha, \beta$, meaning we have $\sempure{\Epair{\Qref{\alpha}}{\Qref{\beta}}}{U_{\alpha, \beta}\ket{\qstate}}$ and $\iscompat{U_{\alpha, \beta}\ket{\qstate}}{\Qpurity{\Epair{\Qref{\alpha}}{\Qref{\beta}}}{\pure}}$.

    In cases \rulename{S-IfT} and \rulename{S-IfF}, by inversion of $\hastype{\ctx}{\rctx}{\If{e}{e_1}{e_2}}{\tPurity{\qtype}{\mixed}}$, where $e$ is $\True$ or $\False$, we have $\hastype{\ctx}{\rctx}{e_1}{\tPurity{\qtype}{\purity}}$ and $\hastype{\ctx}{\rctx}{e_2}{\tPurity{\qtype}{\purity}}$, meaning $\hastype{\ctx}{\rctx}{\Cast{\mixed}{e_1}}{\tPurity{\qtype}{\mixed}}$ and $\hastype{\ctx}{\rctx}{\Cast{\mixed}{e_2}}{\tPurity{\qtype}{\mixed}}$.

    In cases \rulename{S-MeasureT} and \rulename{S-MeasureF}, no pure annotations are introduced. By inversion of typing, $\Measure{\Qpurity{\Qref{\alpha}}{\purity}}$ also has Boolean type.

    In case \rulename{S-SplitPure}, by inversion of $\hastype{\ctx}{\rctx_1, \rctx_2}{\SplitA{\pure}{\Qpurity{\Epair{q_1}{q_2}}{\pure}}}{\tPair{\tPurity{\qtype_1}{\pure}}{\tPurity{\qtype_2}{\pure}}}$ we have that $\qhastype{\rctx_1}{q_1}{\qtype_1}$ and $\qhastype{\rctx_2}{q_2}{\qtype_2}$, meaning $\hastype{\ctx}{\rctx_1, \rctx_2}{(\Qpurity{q_1}{\pure}, \Qpurity{q_2}{\pure})}{\tPair{\tPurity{\qtype_1}{\pure}}{\tPurity{\qtype_2}{\pure}}}$. The premise is the separability condition that implies compatibility.

    In cases \rulename{S-AppL}, \rulename{S-AppR}, \rulename{S-PairL}, \rulename{S-PairR}, \rulename{S-LetS}, \rulename{S-SplitS}, \rulename{S-EntangleS}, \rulename{S-CastS}, \rulename{S-U1S}, \rulename{S-U2S}, \rulename{S-IfS}, \rulename{S-MeasureS}, and \rulename{S-SplitS}, the IH directly implies type preservation and state/expression compatibility.
\end{proof}

\subsubsection{Purity Soundness (\Cref{thm:purity})} \label{sec:purity-proof}

First, we state two properties of the language that hold as a consequence of it being a variant of the linear simply-typed $\lambda$-calculus with type safety.
\begin{lemma}[Strong Normalization]
    If $\hastype{\cdot}{\rctx}{e}{\type}$ then for all $\ket{\qstate}$ such that $\rctx \subseteq \domain{\ket{\qstate}}$, there exists $v$ such that $\stepv{\ket{\qstate}}{e}{\ket{\qstate'}}{v}$ and $\val{v}$.
\end{lemma}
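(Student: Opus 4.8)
The plan is to establish genuine strong normalization --- that \emph{every} reduction sequence from a well-typed closed expression is finite --- and then read off the stated terminating evaluation by combining this with progress (\Cref{thm:progress}) and preservation (\Cref{thm:preservation}). Before the normalization argument proper, I would first observe that the quantum state and the probabilistic branching are inert for termination purposes: each step either leaves $\ket{\qstate}$ untouched or performs a single quantum operation (a $\texttt{qinit}$, a unitary, or a measurement) whose \emph{term-level} effect is fixed independently of $\ket{\qstate}$ --- a measurement reduces its head redex to a Boolean literal (a value) under either outcome, and both branches of an $\mathtt{if}$ carry the same type, so the choice of outcome never affects whether or how evaluation proceeds. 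I would also note that $\lambda$-abstractions occur only inside the fixed, finite, recursion-free function environment $\phi$, so no source of divergence comes from recursion. Thus the problem reduces to termination for a linear variant of the simply-typed $\lambda$-calculus, with the quantum state carried along as inert data subject only to the side invariant $\rctx \subseteq \domain{\ket{\qstate}}$ that \Cref{thm:progress,thm:preservation} already maintain.

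For normalization itself I would use the standard Tait--Girard reducibility method adapted to the linear discipline. Define reducibility sets $\mathrm{Red}_\type$ on closed well-typed expressions by induction on $\type$: at $\tBool$ and at every $\tPurity{\qtype}{\purity}$, $e \in \mathrm{Red}_\type$ iff $e$ evaluates to a value (under any $\ket{\qstate}$ with $\rctx \subseteq \domain{\ket{\qstate}}$); at $\tPair{\type_1}{\type_2}$, iff $e$ evaluates to some $(v_1, v_2)$ with $v_i \in \mathrm{Red}_{\type_i}$; and at $\type_1 \to \type_2$, iff $e$ evaluates to a function name $f$ with $\App{f}{v} \in \mathrm{Red}_{\type_2}$ for every $v \in \mathrm{Red}_{\type_1}$. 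Closure under single-step expansion is immediate, fixing at each measurement an outcome of positive probability (which exists by the construction in \Cref{thm:progress}). The heart of the argument is the fundamental lemma: if $\hastype{\ctx}{\rctx}{e}{\type}$ and $\gamma$ substitutes a reducible value of the appropriate type for each variable of $\ctx$ (distributing $\rctx$ according to the linear context split used in each rule), then $\gamma(e) \in \mathrm{Red}_\type$; instantiating at $\ctx = \cdot$ yields the lemma.

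The proof is by induction on the typing derivation. The classical cases --- variables, $\mathtt{let}$, pairing, application, $\mathtt{if}$ --- are the textbook ones. A function name $f$ is reducible because $\phi(f) = \Lam{x}{e_f}$ with $e_f$ typed under $\ctx', x:\type_1$ at qubit context $\emptyset$, so applying the lemma to $e_f$ gives $\App{f}{v} \to^{*} [v/x]e_f \in \mathrm{Red}_{\type_2}$; to keep this induction well-founded despite function bodies being larger than their call sites, I would carry it out over the lexicographic order of (the rank of the current top-level declaration in the topological order of $\phi$, the typing derivation), or equivalently macro-expand $\phi$ into nested top-level $\mathtt{let}$s and induct on the resulting derivation. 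The quantum-operator cases --- qubit initialization, one- and two-qubit unitaries, measurement, $\mathtt{entangle}_\purity$, $\SplitM$, and $\CastM$ --- all follow one template: evaluate the argument(s) to a value by the induction hypothesis, then fire the corresponding head rule, which produces a value, with $\rctx \subseteq \domain{\ket{\qstate}}$ preserved by \Cref{thm:preservation}. The single non-mechanical case is $\SplitP$, whose head rule applies only when the runtime state factors as $\ket{\qstate_1} \otimes \ket{\qstate_2} \otimes \ket{\qstate_0}$ along $\Refs{q_1}, \Refs{q_2}$; so $\mathrm{Red}$-membership at that node holds exactly when that separability premise succeeds, which is the standing hypothesis inherited from \Cref{thm:purity} that all $\SplitP$ checks pass (absent it, the precise conclusion degrades to ``$e$ evaluates to a value or aborts a $\SplitP$ assertion'', which is all the surrounding development actually uses). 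There is no $\CastP$ case, as this section assumes $\CastP$ does not occur.

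The main obstacle is precisely the function environment: because a call $\App{f}{v}$ rewrites to the arbitrarily larger body $e_f$, neither a term-size measure nor a plain structural induction on expressions works, and one must commit to a reducibility argument phrased over a well-founded order that builds in the recursion-freeness of $\phi$. Everything else is bookkeeping: threading $\rctx \subseteq \domain{\ket{\qstate}}$ through the reducibility predicates, selecting a positive-probability measurement branch to witness the evaluation, and isolating the $\SplitP$ proviso. The linear type discipline only \emph{simplifies} matters --- it tightly constrains how substitutions interact with contexts --- and never obstructs the argument.
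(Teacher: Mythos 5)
The paper offers no proof of this lemma at all: it is stated as one of ``two properties of the language that hold as a consequence of it being a variant of the linear simply-typed $\lambda$-calculus with type safety,'' and the surrounding development simply uses it. Your proposal supplies the argument the paper is implicitly appealing to, and it is essentially correct: the Tait--Girard reducibility construction, the observation that the quantum state and probabilistic branching are inert for termination (every measurement step yields a Boolean value under either outcome), and the identification of the one genuine obstacle --- that $\App{f}{v}$ rewrites to an arbitrarily larger body, so the induction must be stratified by the declaration order of the recursion-free environment $\phi$ (which the \textsf{ok} judgment guarantees, since each body is typed only against earlier declarations) rather than by term size. You also correctly flag the one place where the lemma as literally stated is too strong: a failing \SplitP{} separability check takes the \rulename{S-SplitFail} step and aborts rather than reaching a value, so the honest conclusion is ``evaluates to a value or aborts a \SplitP{},'' which is exactly the disjunction the paper's own purity-soundness proof later handles case-by-case (and note that \CastP{} is no obstacle even where it is permitted, since \rulename{S-Cast} is unchecked and always steps). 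What your approach buys is an actual justification, at the cost of the bookkeeping you acknowledge --- threading $\rctx \subseteq \domain{\ket{\qstate}}$ and the state-parametricity through the reducibility sets; what the paper's approach buys is brevity, at the cost of leaving the reader to trust a folklore result whose adaptation to probabilistic, stateful, assertion-aborting reduction is not entirely mechanical.
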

This lemma implies the existence of an evaluation $\stepstar{\ket{\qstate}}{e}{}{\ket{\qstate'}}{v}$ for some $\ket{\qstate'}$ and $v$.
\begin{lemma}[Call-By Equivalence] \label{thm:calling-equivalence}
    $\stepv{\ket{\qstate}}{\App{(\Lam{x}{e})}{e'}}{\ket{\qstate'}}{v}$ such that $\val{v}$ if and only if $\stepv{\ket{\qstate}}{[e'/x]e}{\ket{\qstate'}}{v}$. Likewise, $\stepv{\ket{\qstate}}{\Let{x}{y}{(e_1, e_2)}{e}}{\ket{\qstate'}}{v}$ where $\val{v}$ if and only if $\stepv{\ket{\qstate}}{[e_1, e_2 / x, y]e}{\ket{\qstate'}}{v}$.
\end{lemma}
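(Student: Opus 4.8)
The plan is to prove both biconditionals by exploiting the fact that the \(\beta\)-reduction step (\rulename{S-App}) and the let-destruction step (\rulename{S-Let}) are \emph{administrative}: each leaves the quantum state untouched, emits the empty outcome map, fires with probability \(1\), and is the unique step available at its redex. Consequently, every evaluation of \(\App{(\Lam{x}{e})}{e'}\) to a value must factor through first driving the argument to a value and then performing the single administrative step, so the whole question reduces to commuting the evaluation of the argument with substitution into the body. I would treat the application case in full and then observe that the let case is entirely analogous.

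First I would establish a \emph{unique-decomposition} observation for the application head. Because \(\Lam{x}{e}\) is already a value, the only congruence rule applicable to \(\App{(\Lam{x}{e})}{e''}\) with \(e''\) not a value is \rulename{S-AppR}, and the only rule applicable once the argument is a value is \rulename{S-App}. Hence every evaluation \(\stepv{\ket{\qstate}}{\App{(\Lam{x}{e})}{e'}}{\ket{\qstate'}}{v}\) has the shape
\[
\stepstar{\ket{\qstate}}{\App{(\Lam{x}{e})}{e'}}{}{\ket{\qstate_1}}{\App{(\Lam{x}{e})}{v'}},
\qquad
\step{\ket{\qstate_1}}{\App{(\Lam{x}{e})}{v'}}{}{\ket{\qstate_1}}{[v'/x]e},
\qquad
\stepstar{\ket{\qstate_1}}{[v'/x]e}{}{\ket{\qstate'}}{v},
\]
where the first segment is exactly \(\stepv{\ket{\qstate}}{e'}{\ket{\qstate_1}}{v'}\) with each step lifted through \rulename{S-AppR}. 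Stripping the lifts recovers a genuine evaluation of \(e'\) to the value \(v'\), and the middle administrative step preserves the state \(\ket{\qstate_1}\).

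The technical heart is then a \emph{substitution-commutes-with-evaluation} lemma: given \(\stepv{\ket{\qstate}}{e'}{\ket{\qstate_1}}{v'}\), one has \(\stepv{\ket{\qstate}}{[e'/x]e}{\ket{\qstate_1}}{[v'/x]e}\), after which \([v'/x]e\) continues to \(\ket{\qstate'};v\) exactly as in the factored sequence above. I would prove this by lifting each step of \(e'\mapsto^* v'\) through the evaluation context carved out by the occurrence of \(x\) in \(e\), using the congruence rules (\rulename{S-PairL}/\rulename{S-PairR}, \rulename{S-U1S}, \rulename{S-U2S}, \rulename{S-MeasureS}, \rulename{S-IfS}). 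The forward direction then follows by discarding the administrative \rulename{S-App} step and appealing to this lemma, and the backward direction by re-inserting it; both preserve \(\ket{\qstate'}\), \(v\), and the existence of witnessing \(O\) and \(p\) for \(\mapsto^*\), since \rulename{S-App} contributes \(\emptyset\) and \(1\). For \(\Let{x}{y}{(e_1,e_2)}{e}\) the only congruence rule at the head is \rulename{S-LetS}, which drives the scrutinee pair to a value pair via \rulename{S-PairL}/\rulename{S-PairR}, after which \rulename{S-Let} fires as the unique next step, and the same commutation argument closes both directions simultaneously for \(x\) and \(y\).

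The step I expect to be the main obstacle is the substitution-commutation lemma, and specifically its reliance on the linear typing discipline of \CoreLang{}. The commutation is lockstep when the substituted variable occurs exactly once and in a forcing position, so that reducing the argument in place matches reducing it in the application context; but I must rule out the pathological situations in which duplication or erasure of a substituted subterm would desynchronize the two sequences or duplicate/drop a probabilistic measurement step. I would dispatch this using the type-system invariant that quantum data is used linearly, together with the fact that in the factored sequence the argument is reduced to the value \(v'\) \emph{before} substitution, so that only values are ever copied along freely duplicable classical types. This guarantees that the outcome map and probability witnessing each \(\mapsto^*\) are preserved across the commutation and that no measurement is performed a different number of times on either side, which is exactly what is needed to make the two evaluations agree on the final \((\ket{\qstate'},v)\).
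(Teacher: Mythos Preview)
Your decomposition of the call-by-value evaluation into argument reduction, the administrative step, and body reduction is fine, and your appeal to linearity to rule out duplication and erasure is correct. But the intermediate lemma you state---that from $\stepv{\ket{\qstate}}{e'}{\ket{\qstate_1}}{v'}$ one obtains $\stepv{\ket{\qstate}}{[e'/x]e}{\ket{\qstate_1}}{[v'/x]e}$ by lifting through congruence rules---is false as written. The occurrence of $x$ in $e$ need not be in the leftmost (forcing) position: take $e = (e_0, x)$ with $e_0$ not a value. Then $[e'/x]e = (e_0, e')$, and the call-by-value semantics evaluates $e_0$ first via \rulename{S-PairL}; \rulename{S-PairR} cannot fire until $e_0$ is a value. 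So you cannot simply transport the steps of $e'$ into the hole, and the state after reducing $[e'/x]e$ to $[v'/x]e$ is generally not $\ket{\qstate_1}$. The same issue arises for the \texttt{let} form when the positions of $x$ and $y$ in $e$ do not match the left-to-right order of $e_1, e_2$.

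The obstacle is therefore not duplication or erasure but \emph{reordering} of effects, and linearity of variable occurrence alone does not resolve it. The paper's argument supplies the missing ingredient: because the typing is linear, the subterms whose evaluation order is being exchanged (e.g.\ $e_0$ and $e'$ above, or $e_1$ and $e_2$ in the \texttt{let} case) reference disjoint sets of qubits, and unitary operators and measurements acting on disjoint qubit sets commute. That quantum-semantic fact is what licenses swapping the order of effects and is essential to the lemma; your proposal does not invoke it. Once you add this commutation argument, your factoring strategy goes through, but the ``lift through congruence rules'' step must be replaced by a genuine effect-reordering argument grounded in disjointness of qubit references.
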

In this statement, $e'$, $e_1$ and $e_2$ need not be values, and thus this lemma asserts the equivalence of call-by-name and call-by-value evaluation strategies.
Because we operate in a linear $\lambda$-calculus where every variable occurs once without discarding or duplication, any effects (measurement or unitary operator) of the eagerly-evaluated argument of a function application or \texttt{let}-binding still occur exactly once if they are instead substituted before evaluation. Furthermore, the order of effects of two components of a pair does not matter, because linearity requires them to refer to disjoint sets of qubits, and reversing the order of measurements or unitary operators on disjoint sets of qubits cannot change the computation outcome.

We now give a proof for the purity soundness theorem by logical relations, strengthening the induction hypothesis to describe types other than $\tPurity{\qtype}{\pure}$ and typing judgments that involve non-empty contexts. For our relation, we define the notion of purity at a type $\type$, denoted $\sempuret{e}{\ket{\qstate}}{\type}$.
\begin{align*}
    \sempuret{e}{\ket{\qstate}}{\tPurity{\qtype}{\pure}} &= \sempure{e}{\ket{\qstate}} \\
    \sempuret{e}{\ket{\qstate}}{\tPurity{\qtype}{\mixed}} &= \top \\
    \sempuret{e}{\ket{\qstate}}{\tBool} &= \top \\
    \sempuret{e}{\ket{\qstate}}{\type_1 \to \type_2} &= \forall e', \sempuret{e'}{\ket{\qstate}}{\type_1} \Rightarrow \sempuret{\App{e}{e'}}{\ket{\qstate}}{\type_2} \\
    \sempuret{e}{\ket{\qstate}}{\tPair{\type_1}{\type_2}} &= \forall e_1, e_2, e', (\sempuret{e_1}{\ket{\qstate}}{\type_1} \Rightarrow \sempuret{e_2}{\ket{\qstate}}{\type_2} \Rightarrow \sempuret{[e_1, e_2 / x, y]e'}{\ket{\qstate}}{\type}) \\
    & \quad \Rightarrow \sempuret{\Let{x}{y}{e}{e'}}{\ket{\qstate}}{\type}
\end{align*}
Purity at type $\tPurity{\qtype}{\pure}$ simply invokes the existing definition. At mixed or Boolean type, purity contains no information. A function $\type_1 \to \type_2$ is pure when applying it to a pure argument at $\type_1$ yields an output pure at $\type_2$. Finally, a pure product $\tPair{\type_1}{\type_2}$ contains two elements that are pure at $\type_1$ and $\type_2$ respectively. We represent this by stating the elimination form of a product, a \texttt{let}-expression, is pure at $\type$ if its body $e'$ is pure at $\type$ after being substituted with two expressions pure at $\type_1$ and $\type_2$.

The following lemma states that reversal of deterministic steps preserves the purity relation.
\begin{lemma}
    If $\step{\ket{\qstate}}{e}{1}{\ket{\qstate'}}{e'}$ deterministically and $\sempure{e'}{\ket{\qstate'}}$ then $\sempure{e}{\ket{\qstate}}$. \label{thm:reversal}
\end{lemma}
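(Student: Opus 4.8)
The plan is to exploit the fact that a deterministic step of probability $1$ is transparent to the purity condition: every complete evaluation of $e$ under $\ket{\qstate}$ must begin with the step to $(\ket{\qstate'}, e')$, so the multiset of reachable final program states — and hence of their implicit measurements — is literally the same whether we start from $(\ket{\qstate}, e)$ or from $(\ket{\qstate'}, e')$. First I would record a small inversion fact about the operational semantics: since $\step{\ket{\qstate}}{e}{1}{\ket{\qstate'}}{e'}$ holds deterministically, $e$ is not a value (no value steps, by a routine induction on the value judgment against the step rules), and moreover any single step $\stepo{\ket{\qstate}}{e}{O_1}{p_1}{\ket{\qstate_1}}{e_1}$ must in fact be this one, i.e. $O_1 = \emptyset$, $p_1 = 1$, $\ket{\qstate_1} = \ket{\qstate'}$, and $e_1 = e'$. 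This follows from the structure of the probabilistic semantics, in which a program state admits either exactly one transition (with probability $1$) or exactly two (with probabilities summing to $1$); ``deterministically'' names the former case, and the hypothesis identifies the transition.

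Next I would establish a bijection of evaluations: for all $\ket{\qstate_i}$ and $v_i$,
\[
  \stepstar{\ket{\qstate}}{e}{}{\ket{\qstate_i}}{v_i}
  \quad\text{if and only if}\quad
  \stepstar{\ket{\qstate'}}{e'}{}{\ket{\qstate_i}}{v_i}.
\]
For the forward direction, invert the derivation of the evaluation: since $e$ is not a value it cannot be \rulename{E-Val}, so it is \rulename{E-Step}, whose first premise is a single step of $e$; by the inversion fact that step goes to $(\ket{\qstate'}, e')$ with empty outcome and probability $1$, and the remaining premise is precisely $\stepostar{\ket{\qstate'}}{e'}{O_2}{p_2}{\ket{\qstate_i}}{v_i}$. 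For the backward direction, prepend the given deterministic step via \rulename{E-Step} (with $O_1 = \emptyset$, $p_1 = 1$). Crucially, implicit measurement $\imeas{\ket{\qstate_i}}{v_i}{}{\ket{\qstate'_i}}$ depends only on the final pair $(\ket{\qstate_i}, v_i)$ and not on the trace that produced it, so it is unaffected by this re-bracketing.

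Finally I would close the argument. Suppose $\stepstar{\ket{\qstate}}{e}{}{\ket{\qstate_1}}{v_1}$ with $\imeas{\ket{\qstate_1}}{v_1}{}{\ket{\qstate'_1}}$, and $\stepstar{\ket{\qstate}}{e}{}{\ket{\qstate_2}}{v_2}$ with $\imeas{\ket{\qstate_2}}{v_2}{}{\ket{\qstate'_2}}$. By the bijection these give $\stepstar{\ket{\qstate'}}{e'}{}{\ket{\qstate_1}}{v_1}$ and $\stepstar{\ket{\qstate'}}{e'}{}{\ket{\qstate_2}}{v_2}$, with the same implicit measurements; applying $\sempure{e'}{\ket{\qstate'}}$ yields $(\ket{\qstate'_1}, v_1) \vequiv (\ket{\qstate'_2}, v_2)$. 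Hence $\sempure{e}{\ket{\qstate}}$. I expect the only real obstacle to be the inversion fact — pinning down that a deterministic step forces the first transition of every evaluation of $(\ket{\qstate}, e)$ to be exactly that step — which rests on the one-/two-transition structure of the probabilistic reduction system; once that is in hand, the remainder is bookkeeping over the \rulename{E-Step}/\rulename{E-Val} derivation and the observation that implicit measurement is a property of the final state alone.
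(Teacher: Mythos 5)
Your proof is correct and is essentially the argument the paper intends: the paper dispatches this lemma with the single remark that it ``follows directly from the definition of purity,'' and your inversion of \rulename{E-Step} against the unique deterministic first transition, together with the observation that implicit measurement depends only on the final state--value pair, is exactly the expansion of that remark.
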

The proof follows directly from the definition of purity.
Next, we define the standard notion of a substitution $\gamma$ mapping an open expression $e$ to a closed expression $\gamma(e)$. Define the compatibility judgment $\gamma \vDash \ctx, \rctx, \ket{\qstate}$ to mean that $\gamma$ maps every variable $x_i : \type_i$ in $\ctx$ to a closed term $e_i$ such that $\hastype{\cdot}{\rctx_i}{e_i}{\type_i}$ where $\rctx$ and all $\rctx_i$ are disjoint, and $\iscompat{\ket{\qstate}}{\rctx_i}$ and $\sempuret{e_i}{\ket{\qstate}}{\type_i}$.

Now we state the strengthened purity soundness theorem.
When $\type$ is $\tPurity{\qtype}{\pure}$ and $\ctx$ is empty, the strengthening implies the original theorem.
\begin{theorem} \label{thm:strengthened-purity-soundness}
    If $\hastype{\ctx}{\rctx}{e}{\type}$ and $\gamma \vDash \ctx, \rctx, \ket{\qstate}$ and $\iscompat{\ket{\qstate}}{\gamma(e)}$,
    then $\sempuret{\gamma(e)}{\ket{\qstate}}{\type}$.
\end{theorem}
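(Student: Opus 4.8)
The plan is to prove \Cref{thm:strengthened-purity-soundness} by induction on the derivation of $\hastype{\ctx}{\rctx}{e}{\type}$, with the logical relation $\sempuret{\cdot}{\ket{\qstate}}{\type}$ as the inductive invariant and the compatibility hypothesis $\iscompat{\ket{\qstate}}{\gamma(e)}$ threaded through each case using the compatibility-preserving preservation theorem (\Cref{thm:preservation}). Several cases are immediate from the definition of the relation: when $\type$ is $\tBool$ or $\tPurity{\qtype}{\mixed}$, and when $e$ is an \texttt{if}-expression (whose type is always mixed by \rulename{T-If}), the relation is $\top$, so there is nothing to show beyond maintaining compatibility. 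For \rulename{T-Var} and \rulename{T-Fun}, the goal $\sempuret{\gamma(x)}{\ket{\qstate}}{\type}$ (resp.\ for $f$) is literally one of the conjuncts guaranteed by $\gamma \vDash \ctx, \rctx, \ket{\qstate}$, so these are discharged directly.

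For the structural cases \rulename{T-App}, \rulename{T-Pair}, and \rulename{T-Let}, I would split $\gamma$ along the context splitting in the rule, apply the induction hypothesis to each premise, and then unfold the relation at the relevant type. The arrow-case of the relation makes \rulename{T-App} immediate. For \rulename{T-Let}, I instantiate the ``elimination-form'' clause of the pair-case of the relation (obtained from the IH on the bound expression) at the continuation $\gamma(e_2)$, discharging its hypothesis by a second appeal to the IH on $e_2$ under the substitution extended with two pure expressions. For \rulename{T-Pair}, I use \Cref{thm:calling-equivalence} (Call-By Equivalence) to rewrite $\Let{x}{y}{(\gamma(e_1),\gamma(e_2))}{e'}$ as $[\gamma(e_1),\gamma(e_2)/x,y]e'$ and then conclude from the IH on $e_1,e_2$; transporting purity back along the intervening deterministic reductions uses \Cref{thm:reversal}, lifted from base types to arbitrary $\type$ by induction on the structure of $\type$.

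For the quantum operators, the only nontrivial obligation is that each pure annotation $\Qpurity{q}{\pure}$ a reduction introduces names a sub-state that is genuinely separable — i.e.\ that the step reaches a state compatible with its result — whereupon the relation at the result type follows as in the structural cases. For \rulename{T-Qinit} the new qubit is $\ket{0}_\alpha$, trivially separable. For \rulename{T-U1}/\rulename{T-U2} at pure argument, the IH gives that the argument owns a unique pure sub-state, and a unitary touching only those qubits maps it to another pure state while preserving the tensor split (the case where a \emph{later} sub-expression must not disturb an already-separable sub-state is handled by the lemma in \Cref{sec:preservation-proof}). For \rulename{T-Entangle} at pure argument, linearity forces the two components to own disjoint qubit sets, so the tensor of two separable pure sub-states over disjoint domains is again separable and pure. \rulename{T-Measure}, and \rulename{T-Split}/\rulename{T-Cast} at $\purity=\mixed$, introduce no pure annotations. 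The crux is \rulename{T-Split} at $\purity=\pure$: here $e=\SplitA{\pure}{e_0}$ with $e_0$ of pure entangled-pair type, so the IH gives $\sempure{\gamma(e_0)}{\ket{\qstate}}$, meaning every run reduces $\gamma(e_0)$ to a value $\Qpurity{\Epair{q_1}{q_2}}{\pure}$ whose owned qubits form the \emph{same} pure sub-state $\ket{\qstate_{12}}$ of the final state (using \Cref{thm:purity-denotation} to move between the operational notion of purity and this ``unique owned pure sub-state'' form). The premise of \rulename{S-SplitPure} tests exactly whether $\ket{\qstate_{12}}$ factors along $\Refs{q_1}$ and $\Refs{q_2}$; if it does, both $\Qpurity{q_1}{\pure}$ and $\Qpurity{q_2}{\pure}$ are pure and run-independent (a factorization of a fixed state along a fixed partition is unique), and I finish as in \rulename{T-Let}; if it does not, the program aborts, no evaluation to a value exists, and \Cref{def:purity} — hence the relation at every type — holds vacuously.

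I expect the main obstacle to be exactly this coupling between the \emph{global}, run-quantified definition of purity (\Cref{def:purity}) and the \emph{per-run} reasoning forced by the operational step rules: one must repeatedly argue that even though a scrutinee (such as $\gamma(e_0)$, or the bound expression of a \texttt{let}) reduces nondeterministically, the part of the state it ends up owning is identical across all runs, and that the continuation's behaviour depends on the remainder of the state only through qubits that implicit measurement will eliminate. Reconciling the logical relation's elimination-form clause at product type — which is phrased relative to the initial state $\ket{\qstate}$ — with the fact that a \texttt{let} actually evaluates its body from the post-scrutinee state is precisely where \Cref{thm:calling-equivalence} earns its keep, since replacing evaluation-then-substitution by substitution-then-evaluation keeps every sub-derivation anchored at $\ket{\qstate}$.
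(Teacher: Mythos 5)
Your proposal is correct and follows essentially the same route as the paper's proof: the same logical relation indexed by types, the same induction on the typing derivation threaded with compatibility via preservation, the same use of Call-By Equivalence to anchor the pair/let cases at the initial state, the same reversal-of-deterministic-steps argument for the unitary and entangle cases, and the same vacuous-purity treatment when the \rulename{S-SplitPure} separability check fails. Your explicit remark that \Cref{thm:reversal} must be lifted from the base pure quantum type to arbitrary $\type$ by induction on the type structure is a detail the paper leaves implicit, but it does not change the argument.
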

\begin{proof}
    Proceed by induction on the derivation of $\hastype{\ctx}{\rctx}{e}{\type}$.

    The case for \rulename{T-Var} holds directly from the hypothesis.
    In case \rulename{T-Qval}, no substitution can occur. If $a = \mixed$ then purity is trivial. Otherwise, because $\iscompat{\ket{\qstate}}{e}$ we have $\sempure{e}{\ket{\qstate}}$.
    In case \rulename{T-Qinit}, no substitution can occur, and $e$ has only one deterministic transition and purity is trivial.

    In case \rulename{T-U1}, by the IH, $\hastype{\cdot}{\rctx}{\gamma(e)}{\tPurity{\qubit}{\pure}}$ and $\iscompat{\ket{\qstate}}{\gamma(e)}$ so $\sempure{\gamma(e)}{\ket{\qstate}}$.
    Any evaluation $\stepv{\ket{\qstate}}{\Uone{\gamma(e)}}{\ket{\qstate'}}{e'}$ where $\val{e'}$ is of the form
    $\ket{\qstate}; \Uone{\gamma(e)} \mapsto \qstate_1; \Uone{e_1} \mapsto \ldots \mapsto \ket{\qstate'}; \Uone{\Qpurity{\Qref{\alpha}}{\pure}} \mapsto_1 U_\alpha\ket{\qstate}; \Qpurity{\Qref{\alpha}}{\pure}$. Inverting each step, because $\sempure{\gamma(e)}{\ket{\qstate}}$ we have that the execution up to $\ket{\qstate'}$ is unique. By preservation, $\iscompat{U_\alpha\ket{\qstate'}}{\Qpurity{\Qref{\alpha}}{\pure}}$ so $\sempure{\Qpurity{\Qref{\alpha}}{\pure}}{U_\alpha\ket{\qstate'}}$. By~\Cref{thm:reversal}, we have $\sempure{\Uone{\Qpurity{\Qref{\alpha}}{\pure}}}{\ket{\qstate'}}$. Stitching the two deterministic executions together, we conclude that $\sempure{\gamma(\Uone{e})}{\ket{\qstate}}$.

    Case \rulename{T-U2} follows by the same reasoning as for \rulename{T-U1}, with the only difference being that we execute a two-qubit operator $U_{\alpha, \beta}$ on a pair of qubits in the deterministic step.

    Case \rulename{T-Entangle} also follows by similar reasoning. If $a = \mixed$ then purity is trivial. Otherwise, every execution first evaluates $\gamma(e)$ to a compatible final state and value $\iscompat{\ket{\qstate}}{\Qpurity{\Epair{q_1}{q_2}}{\pure}}$, implying $
    \sempure{\Epair{q_1}{q_2}}{\ket{\qstate}}$. Reverse this step and stitch together the deterministic executions of $\gamma(e_1)$ and $\gamma(e_2)$ to obtain $\sempure{\gamma(\Entangle{\purity}{e_1}{e_2})}{\ket{\qstate}}$.

    In case \rulename{T-Abs}, we have that $\hastype{x : \type}{\rctx}{\gamma(e)}{\type'}$. By the IH we have that if $\sempuret{e'}{\ket{\qstate}}{\type}$ then $\sempuret{[e'/x]\gamma(e)}{\ket{\qstate}}{\type'}$. From this we have $\sempuret{\App{(\Lam{x}{\gamma(e)})}{e'}}{\ket{\qstate}}{\type'}$ by~\Cref{thm:calling-equivalence}. We also have that $\sempuret{\gamma(\Lam{x}{e})}{\ket{\qstate}}{\type \to \type'}$.

    In case \rulename{T-Pair}, we have that $\hastype{\cdot}{\rctx_1}{\gamma(e_1)}{\type_1}$ and $\hastype{\cdot}{\rctx_2}{\gamma(e_2)}{\type_2}$. By the IH we have that $\sempuret{\gamma(e_1)}{\ket{\qstate}}{\type_1}$ and $\sempuret{\gamma(e_2)}{\ket{\qstate}}{\type_2}$. Thus, if we assume $\sempuret{[\gamma(e_1), \gamma(e_2) / x, y]e}{\ket{\qstate}}{\type}$, then we have $\sempuret{\Let{x}{y}{(\gamma(e_1), \gamma(e_2))}{e}}{\ket{\qstate}}{\type}$ by~\Cref{thm:calling-equivalence}, and so $\sempuret{(\gamma(e_1), \gamma(e_2))}{\ket{\qstate}}{\tPair{\type_1}{\type_2}}$.

    In case \rulename{T-App}, we have that $\hastype{\cdot}{\rctx_1}{\gamma(e_1)}{\type_1 \to \type_2}$ and $\hastype{\cdot}{\rctx_2}{\gamma(e_2)}{\type_1}$. By the IH we have $\sempuret{\gamma(e_1)}{\ket{\qstate}}{\type_1 \to \type_2}$ and $\sempuret{\gamma(e_2)}{\ket{\qstate}}{\type_1}$, thus $\sempuret{\gamma(\App{e_1}{e_2})}{\ket{\qstate}}{\type_2}$.

    In case \rulename{T-Let}, we have that $\hastype{\cdot}{\rctx_1}{\gamma(e)}{\tPair{\type_1}{\type_2}}$ and $\hastype{x : \type_1, y : \type_2}{\rctx_2}{\gamma(e')}{\type}$. By the IH we have $\sempuret{\gamma(e)}{\ket{\qstate}}{\tPair{\type_1}{\type_2}}$ and that if $\sempuret{e_1}{\ket{\qstate}}{\type_1}$ and $\sempuret{e_2}{\ket{\qstate}}{\type_2}$ then $\sempuret{[e_1, e_2 / x, y]\gamma(e')}{\ket{\qstate}}{\type}$. Thus, we have $\sempuret{\gamma(\Let{x}{y}{e}{e'})}{\ket{\qstate}}{\type}$.

    In case \rulename{T-SplitPure}, we have $\hastype{\cdot}{\rctx}{\gamma(e)}{\tPurity{(\tEpair{\qtype_1}{\qtype_2})}{\pure}}$ and need to show that\linebreak $\sempuret{\SplitA{\pure}{\gamma(e)}}{\ket{\qstate}}{\tPair{\tPurity{\qtype_1}{\pure}}{\tPurity{\qtype_2}{\pure}}}$. By the IH we have that $\sempure{\gamma(e)}{\ket{\qstate}}$.

    Proceed by the same reasoning as case \rulename{T-U1} to fully evaluate $\gamma(e)$ by rule \rulename{S-SplitS} and deterministically obtain $\Qpurity{\Epair{q_1}{q_2}}{\pure}$ under the unique state $\ket{\qstate'}$.

    First assume that the separability condition holds.
    The next step is \rulename{S-SplitPure}, and inverting it yields its premises $\ket{\qstate'} = \ket{\qstate_1} \otimes \ket{\qstate_2} \otimes \ket{\qstate_0}$, $\domain{\ket{\qstate_1}} = \Refs{q_1}$, $\domain{\ket{\qstate_2}} = \Refs{q_2}$, implying that $\sempure{q_1}{\ket{\qstate'}}$ and $\sempure{q_2}{\ket{\qstate'}}$. Thus, if we assume $\sempuret{[\Qpurity{q_1}{\pure}, \Qpurity{q_2}{\pure} / x, y]e'}{\ket{\qstate'}}{\type}$, from~\Cref{thm:calling-equivalence} we have that $\sempuret{\Let{x}{y}{(\Qpurity{q_1}{\pure}, \Qpurity{q_2}{\pure})}{e'}}{\ket{\qstate'}}{\type}$.

    Now assume that the separability condition fails. Then, every evaluation of $\SplitA{\pure}{\gamma(e)}$ will next take step \rulename{S-SplitFail}, meaning that every evaluation fails and purity holds vacuously.

    The remaining cases yield outputs of mixed or Boolean type, for which purity holds trivially.
\end{proof}

\subsection{Static Analysis for Purity}

\subsubsection{Purity Soundness (\Cref{thm:analysis-soundness})} \label{sec:analysis-soundness-proof}
We have already shown that all $\pure$ annotations introduced by the semantics of \LangName{} without \CastP{} are on pure expressions. We next show that if the static analysis assigns an expression $e$ a pure quantum type, then $e$ is pure. If so, rule \rulename{S-Cast} only introduces $\pure$ annotations on pure expressions, implying the soundness theorem.

\begin{proof}
    Proceed by induction on the derivation of $\hastypea{\ctx}{e}{\tPurity{\qtype}{\pure}}$. We only need to examine new $\pure$ annotations introduced by the $\vdash_A$ judgment.

    In case \rulename{A-CastPure}, if $e$ has pure type then by the IH it is pure, meaning $\Cast{\pure}{e}$ is also pure.

    In case \rulename{A-SplitMixed}, $g = \SplitPoly{f}{j}$ is never $\pure$, so purity is trivially satisfied.

    In case \rulename{A-Entangle}, if $h$ is not $\pure$ then purity is trivial. In addition, if $f$ and $g$ are both $\pure$ then the same reasoning as in the original proof applies.
    If $f$ and $g$ are not $\pure$ but $h$ is pure, we must show that $\Entangle{\purity}{e_1}{e_2}$ is pure. Suppose it is not. The first possibility is that it evaluates to an entangled pair $\Epair{e_1}{e_2}$ that is entangled with some qubit $\alpha$ not owned by $e_1$ or $e_2$. But this means one of $e_1$ or $e_2$ is entangled with $\alpha$; without loss of generality let it be $e_1$. Then, some $\mathtt{split}_\mixed$ occurred with $\alpha$ on one side and $e_1$ on the other, introducing a term for $x_i$ in the history $f$. Because $\alpha$ is not owned by $e_2$, $x_i$ is absent from the history $g$ and must be present in $h = \CombinePoly{f}{g}$, contradicting the fact that $h$ is $\pure$. The second possibility is that the evaluation of $\Entangle{\purity}{e_1}{e_2}$ encounters an \texttt{if}-expression that evaluates to a mixed state that is not discarded. But then this \texttt{if}-expression appears in the evaluation of $e_1$ or $e_2$, meaning by preservation that either $f$ or $g$ is $\mixed$ and thus $h = \CombinePoly{f}{g}$ is $\mixed$, which is also a contradiction.

    All other rules follow by the same reasoning as in the proof of~\Cref{thm:strengthened-purity-soundness}.
\end{proof}
\section{Full Benchmark Descriptions} \label{sec:app-examples}

\paragraph{Teleport-Deferred} This program implements the deferred-measurement teleportation from~\Cref{sec:examples}. We also implemented an erroneous variant, \emph{Teleport-NoCZ}, which replaces the CZ gate with a CNOT and results in the final qubit actually being entangled, and a third variant \emph{Teleport-Measure} that measures the ancillas and uses classical rather than quantum conditioning.

\paragraph{AndOracle} This program inverts the phase of the state conditioned on two qubits, as may be seen in an oracle for Grover's algorithm. We also implement an erroneous variant \emph{AndOracle-NotUncomputed} that does not correctly uncompute the ancilla.

\paragraph{Bell-GHZ} This program attempts to illegally substitute a sub-state of a Greenberger-Horne-Zeilinger state~\citep{ghz} for a Bell state, by dropping an entangled ancilla.

\paragraph{Deutsch} This program implements Deutsch's algorithm~\citep{deutsch}, which determines whether a black-box function $f : \{0, 1\} \to \{0, 1\}$ is the constant function. We also implemented an erroneous variant \emph{Deutsch-BadResultBasis}, which omits an essential Hadamard gate, causing the result qubit to be in the incorrect basis, and attempts to drop an entangled ancilla.

\paragraph{DeutschJozsa} This program implements the Deutsch-Jozsa algorithm~\citep{deutsch}, a generalization of Deutsch's algorithm, which determines whether a black-box function $f : \{0, 1\}^n \to \{0, 1\}$ is constant or balanced (returns 1 for exactly half of the domain). We also implemented an erroneous variant \emph{DeutschJozsa-MixedInit}, which uses an incorrect initial state of ${(\ket{00} + \ket{11})}/{\sqrt{2}}$ rather than ${(\ket{00} + \ket{01} + \ket{10} + \ket{11})}/{2}$ and will produce an incorrect result. This variant uses a purifying cast to try to force the algorithm to accept this incorrect state.

\paragraph{Grover} This program implements Grover's search algorithm~\citep{grover} on a two-qubit, four-element database, locating a distinguished element in cell $\ket{11}$. We also implemented an erroneous variant \emph{Grover-BadOracle}, which uses \emph{AndOracle-NotUncomputed} and drops an entangled ancilla. It uses a purifying cast to try to force the algorithm to accept this incorrect oracle.

\paragraph{QFT} This program implements the quantum Fourier transform~\citep{qft}, a building block for Simon's algorithm and Shor's algorithm, on three qubits. The purity specification of the QFT is that its output has the same purity annotation as its input.

\paragraph{ShorCode} This program implements encoding and decoding operations for Shor's error correcting code~\citep{shor-code}, which uses nine physical qubits to correct an arbitrary single error on one qubit. The program also implements a phase flip error channel and runs error correction on a qubit subject to phase-flip error. The purity specifications on the encoding and decoding operations allows the program to discard the extra parity bits that are separable from the decoded data. We also implemented an erroneous variant \emph{ShorCode-Drop}, which is the result of a programmer error that shadows a function argument that is not known to be pure.

\paragraph{ModMul($n$)} We followed the scheme of~\citet{markov2015constantoptimized} for quantum circuits for modular multiplication, which generalizes a predictable structural pattern for arbitrary $n$, with the number of gates linear in $n$. For each $n$, we implemented conditional multiplication mod $2^n-1$ by some $k$ and also by $k^{-1}$. Applying the first operation on a $n$-qubit register and a condition qubit entangles the register and qubit. Then, applying the inverse operation must disentangle the register and condition. Thus, the program uses a purifying-split operator to verify that the condition qubit is pure at program termination. We also implemented an erroneous variant \emph{ModMul($n$)-NotInverse} where multiplication by $k^{-1}$ is defective, resulting in a register and condition that are still entangled.

\subsection{Analysis Result Descriptions}

\paragraph{Teleport-Deferred} We determine that the teleportation circuit satisfies its purity specification, taking a pure qubit to another pure qubit. We also determine that the erroneous variant that substitutes a CZ gate does not correctly separate the output qubit from the temporaries, meaning that measuring the temporaries causes the qubit to enter a mixed state.

\paragraph{AndOracle} We determine that the oracle correctly uncomputes ancillas and yields a pure output. We also determine that the variant that does not uncompute an ancilla may yield a mixed output, violating its specification.

\paragraph{Bell-GHZ} We determine that the program is ill-typed due to its attempt to directly return mixed qubits in a function whose output is specified to be pure.

\paragraph{Deutsch} We determine that the algorithm satisfies its specification, including the fact that an ancilla is separable from the output and can be safely dropped. We also determine that the erroneous variant results in an entangled ancilla and rejects the attempt to drop it at runtime.

\paragraph{DeutschJozsa} We determine that the algorithm satisfies its specification. For the erroneous variant with a defective initial state, we determine at the use site of the state that it has an incorrect purity and the attempt to directly cast the type of the state fails the static analysis.

\paragraph{Grover} We determine that the algorithm satisfies its specification. For the erroneous variant with a defective oracle, the attempt to directly cast the result of the oracle fails the static analysis.

\paragraph{QFT} We determine that the algorithm satisfies its purity specification, taking pure inputs to pure outputs. The result indicates that the output is correct and lacks entanglement with any other qubit in the system.

\paragraph{ShorCode} We determine that the algorithm satisfies its purity specification, including the fact that at the end of the decoding process, the extra check bits may be safely discarded without disrupting the decoded data. For the erroneous variant with a programming error, the type checker detects that a value not known to be pure cannot be shadowed.

\paragraph{ModMul($n$)} We determine that the algorithm satisfies its purity specification that the condition bit is separable from the output, implying that the inverse operation was implemented correctly. For the erroneous variant with incorrect inverse, the runtime verification fails, indicating the condition bit is still entangled and that the algorithm is incorrect.
\section{Full Benchmark Programs} \label{sec:app-full-sources}

In this section, we present the full source code for each benchmark. Several benchmarks use syntactic features of \LangName{} not discussed in the main paper, including inference of purity assertions and polymorphic purity annotations, which are described in~\Cref{sec:higher-level}.

\subsection{Teleport-Deferred}
\begin{lstlisting}[numbers=left,style=tiny]
fun bell_pair () : (qubit & qubit)<P> =
  CNOT (H (qinit ()), qinit ())

fun teleport (q1 : qubit<P>) : qubit<P> =
  let (q2 : qubit<M>, q3 : qubit<M>) = bell_pair () in
  let (q1 : qubit<M>, q2 : qubit<M>) = CNOT (q1, q2) in
  let q1 = H (q1) in
  let (q2 : qubit<M>, q3 : qubit<M>) = CNOT (q2, q3) in
  let (q1 : qubit<M>, q3 : qubit<M>) = CZ (q1, q3) in
  let all : ((qubit & qubit) & qubit)<P> = ((q1, q2), q3) in
  let (_ : (qubit & qubit)<P>, q3 : qubit<P>) = all in q3

fun main () : qubit<P> = teleport (H (qinit ()))
\end{lstlisting}

\subsection{Teleport-NoCZ}

All other functions are identical to the \emph{Teleport-Deferred} example.

\begin{lstlisting}[numbers=left,style=tiny]
fun teleport (q1 : qubit<P>) : qubit<P> =
  let (q2 : qubit<M>, q3 : qubit<M>) = bell_pair () in
  let (q1 : qubit<M>, q2 : qubit<M>) = CNOT (q1, q2) in
  let q1 : qubit<M> = H (q1) in
  let (q2 : qubit<M>, q3 : qubit<M>) = CNOT (q2, q3) in
  let (q1 : qubit<M>, q3 : qubit<M>) = CNOT (q1, q3) in
  let all : ((qubit & qubit) & qubit)<P> = ((q1, q2), q3) in
  (* Dynamic separability check failure: argument entangled *)
  let (discard : (qubit & qubit)<P>, q3 : qubit<P>) = all in
  let discard = measure (discard) in q3
\end{lstlisting}

\subsection{Teleport-Measure}

All other functions are identical to the \emph{Teleport-Deferred} example.

\begin{lstlisting}[numbers=left,style=tiny]
fun teleport (q1 : qubit<P>) : qubit<P> =
  let (q2 : qubit<M>, q3 : qubit<M>) = bell_pair () in
  let (q1 : qubit<M>, q2 : qubit<M>) = CNOT (q1, q2) in
  let q1 : qubit<M> = H (q1) in
  let q3 = if measure (q2) then X (q3) else q3 in
  let q3 = if measure (q1) then Z (q3) else q3 in
  cast<P>(q3) (* Static analysis failure: q1 and q2 not covered *)
\end{lstlisting}

\subsection{AndOracle}
\begin{lstlisting}[numbers=left,style=tiny]
fun and_oracle (p0 : qubit<P>, p1 : qubit<P>) : (qubit & qubit)<P> =
  let x = qinit () in
  let (p0 : qubit<M>, (p1 : qubit<M>, x : qubit<M>)) = TOF (p0, (p1, x)) in
  let (p0 : qubit<M>, (p1 : qubit<M>, x : qubit<M>)) = TOF (p0, (p1, x)) in
  let qs : (qubit & (qubit & qubit))<P> = (x, (p0, p1)) in
  let (x : qubit<P>, rest : (qubit & qubit)<P>) = qs in rest
fun main () : (qubit & qubit)<P> = and_oracle (H (qinit ()), X (qinit ()))
\end{lstlisting}

\subsection{AndOracle-NotUncomputed}
\begin{lstlisting}[numbers=left,style=tiny]
fun and_oracle (p0 : qubit<P>, p1 : qubit<P>) : (qubit & qubit)<P> =
  let x = qinit () in
  let (p0 : qubit<M>, (p1 : qubit<M>, x : qubit<M>)) = TOF (p0, (p1, x)) in
  let p0 = Z (p0) in
  let _ = measure x in
  entangle<P>(p0, p1) (* Type error: p0 and p1 are mixed *)
fun main () : (qubit & qubit)<P> = and_oracle (H (qinit ()), X (qinit ()))
\end{lstlisting}

\subsection{Bell-GHZ}
\begin{lstlisting}[numbers=left,style=tiny]
fun main () : (qubit & qubit)<P> =
  let q1 = H (qinit ()) in
  let (q1 : qubit<M>, q2 : qubit<M>) = CNOT (q1, qinit ()) in
  let (q1 : qubit<M>, q3 : qubit<M>) = CNOT (q1, qinit ()) in
  let _ = measure q3 in
  entangle<P>(q1, q2) (* Type error: q1 and q2 are mixed *)
\end{lstlisting}

\subsection{Deutsch}
\begin{lstlisting}[numbers=left,style=tiny]
fun deutsch (uf : (qubit & qubit)<P> -> (qubit & qubit)<P>) : bool =
  let input : (qubit & qubit)<P> = (H (qinit ()), H (X (qinit ()))) in
  let (x : qubit<P>, _ : qubit<P>) = uf (input) in
  measure (H (x))

fun cnot (xy : (qubit & qubit)<P>) : (qubit & qubit)<P> = CNOT (xy)

fun always_true (xy : (qubit & qubit)<P>) : (qubit & qubit)<P> =
  let (x : qubit<M>, y : qubit<M>) = xy in
  cast<P>(entangle<M>(x, X (y)))

fun always_false (xy : (qubit & qubit)<P>) : (qubit & qubit)<P> = xy

fun main () : ((bool * bool) * bool) =
  ((deutsch (always_false), deutsch (always_true)), deutsch (cnot))
\end{lstlisting}

\subsection{Deutsch-BadResultBasis}

All other functions are identical to the \emph{Deutsch} example.

\begin{lstlisting}[numbers=left,style=tiny]
fun deutsch (uf : (qubit & qubit)<P> -> (qubit & qubit)<P>) : bool =
  (* Dynamic separability check failure: argument entangled *)
  let (x : qubit<P>, y : qubit<P>) =
    uf (entangle<P>(H (qinit ()), (X (qinit ())))) in
  let _ = measure (y) in
  measure (H (x))
\end{lstlisting}

\subsection{DeutschJozsa}
\begin{lstlisting}[numbers=left,style=tiny]
type oracle = ((qubit & qubit) & qubit)<P> -> ((qubit & qubit) & qubit)<P>
type domain = (qubit & qubit)<P>

(* An (entangled) domain-codomain pair *)
type graph_pt = ((qubit & qubit) & qubit)<P>

(* Prepare domain qubits in |0> + |1> *)
fun init_domain () : (qubit<P> * qubit<P>) = (H (qinit ()), H (qinit ()))

(* Prepare output qubit in |0> - |1> *)
fun init_output () : qubit<P> = H (X qinit ())

fun test_oracle (f : oracle) : graph_pt =
  let out : qubit<P> = init_output () in
  let dom : (qubit<P> * qubit<P>) = init_domain () in
  let all : graph_pt = (dom, out) in
  let inout : graph_pt = f (all) in
  let ((d0 : qubit<M>, d1 : qubit<M>), out: qubit<M>) = inout in
  (* Hadamard the domain qubits *)
  let (inout_post : ((qubit & qubit) & qubit)<M>) =
    (((H d0), (H d1)), out) in
  cast<P>(inout_post)

(* A balanced function {0, 1}^2 -> {0, 1} that selects states with second
    qubit |1> *)
fun is_odd (pt : graph_pt) : graph_pt =
  let ((d0 : qubit<M>, d1 : qubit<M>), out : qubit<M>) = pt in
  let (d1 : qubit<M>, out : qubit<M>) = (CNOT (d1, out)) in
  let (inout : ((qubit & qubit) & qubit)<M>) = ((d0, d1), out) in
  cast<P>(inout)

fun main () : graph_pt = test_oracle (is_odd)
\end{lstlisting}

\subsection{DeutschJozsa-MixedInit}

The other functions are identical to the \emph{DeutschJozsa} example.

\begin{lstlisting}[numbers=left,style=tiny]
fun init_domain () : (qubit<M> * qubit<M>) =
  let (x : qubit<M>, y : qubit<M>) = CNOT (H (qinit ()), qinit ()) in
  let _ = measure (y) in
  (x, cast<M>(qinit ()))

fun test_oracle (f : oracle) : graph_pt =
  let out : qubit<P> = init_output () in
  let dom : (qubit<M> * qubit<M>) = init_domain () in
  let all : graph_pt = (dom, out) in
  let inout : graph_pt = f (all) in
  let ((d0 : qubit<M>, d1 : qubit<M>), out: qubit<M>) = inout in
  (* Hadamard the domain qubits *)
  let (inout_post : ((qubit & qubit) & qubit)<M>) =
    (((H d0), (H d1)), out) in
  (* Static analysis failure: mixed output of init_domain () not covered *)
  cast<P>(inout_post)
\end{lstlisting}

\subsection{Grover}
\begin{lstlisting}[numbers=left,style=tiny]
type addr = (qubit & qubit)<P>
type oracle = (qubit & qubit)<P> -> (qubit & qubit)<P>
fun init_addr () : addr = entangle<P>(H qinit(), H qinit ())
fun diffuse (p : addr) : addr =
  let (p0 : qubit<M>, p1 : qubit<M>) = p in
  let (p0 : qubit<M>, p1 : qubit<M>) = (H p0, H p1) in
  let (p0 : qubit<M>, p1 : qubit<M>) = (Z p0, Z p1) in
  let (p0 : qubit<M>, p1 : qubit<M>) = CZ (p0, p1) in
  let (p0 : qubit<M>, p1 : qubit<M>) = (H p0, H p1) in
  let p = entangle<M>(p0, p1) in
  cast<P>(p)

fun grover (f : oracle) : addr =
  let addr = init_addr () in
  let addr = f (addr) in
  let addr = diffuse (addr) in
  addr

fun final_addr (p : addr) : addr = (CZ (p))
fun main () : addr = grover (final_addr)
\end{lstlisting}

\subsection{Grover-BadOracle}

The other functions are identical to the \emph{Grover} example.

\begin{lstlisting}[numbers=left,style=tiny]
fun final_addr (p : addr) : addr =
  let (p0 : qubit<M>, p1 : qubit<M>) = p in
  let x = qinit () in
  let (p0 : qubit<M>, (p1 : qubit<M>, x : qubit<M>)) = TOF (p0, (p1, x)) in
  let (x : qubit<M>, p0 : qubit<M>) = CZ (x, p0) in
  let _ = measure x in
  cast<P>(entangle<M>(p0, p1)) (* Static analysis failure: x not covered *)
\end{lstlisting}

\subsection{QFT}
\begin{lstlisting}[numbers=left,style=tiny]
fun qft_sub_1 (q : qubit<'p>) : qubit<'p> = H q

fun qft_sub_2 (qs : (qubit & qubit)<'p>) : (qubit & qubit)<'p> =
  let (q0 : qubit<M>, q1 : qubit<M>) = cast<M>(qs) in
  let q0 = qft_sub_1 (q0) in
  (* Controlled phase of 2 pi / 2 ** 2 *)
  let (qs : (qubit & qubit)<M>) = CPHASE 0.250 (q1, q0) in
  let (q1 : qubit<M>, q0 : qubit<M>) = qs in
  let (qs : (qubit & qubit)<M>) = (q0, q1) in
  cast<'p>(qs)

fun qft_sub_3 (qs : ((qubit & qubit) & qubit)<'p>) :
  ((qubit & qubit) & qubit)<'p> =
  let (qs : (qubit & qubit)<M>, q2 : qubit<M>) = qs in
  let qs = qft_sub_2 (qs) in
  let (q0 : qubit<M> , q1 : qubit<M>) = qs in
  (* Controlled phase of 2 pi / 2 ** 3 *)
  let (q2 : qubit<M>, q0 : qubit<M>) = CPHASE 0.125 (q2, q0) in
  let (qs : ((qubit & qubit) & qubit)<M>) = ((q0, q1), q2) in
  cast<'p>(qs)

fun qft_1 (q : qubit<'p>) : qubit<'p> = qft_sub_1 (q)

fun qft_2 (qs : (qubit & qubit)<'p>) : (qubit & qubit)<'p> =
  let qs = qft_sub_2 (qs) in
  let (q0 : qubit<M>, q1 : qubit<M>) = qs in
  let q1 = qft_1 (q1) in
  let (qs: (qubit & qubit)<M>) = (q0, q1) in
  cast<'p>(qs)

fun qft_3 (qs : ((qubit & qubit) & qubit)<'p>) :
  ((qubit & qubit) & qubit)<'p> =
  let qs = qft_sub_3 (qs) in
  let ((q0 : qubit<M>, q1 : qubit<M>), q2 : qubit<M>) = qs in
  let tail : (qubit & qubit)<M> = (q1, q2) in
  let (q1 : qubit<M>, q2 : qubit<M>) = qft_2 (tail) in
  let (qs: ((qubit & qubit) & qubit)<M>) = ((q0, q1), q2) in
  cast<'p>(qs)

fun main () : ((qubit & qubit) & qubit)<P> =
  let qs : ((qubit & qubit) & qubit)<pure> = ((qinit(), X qinit()), qinit()) in
  qft_3 (qs)
\end{lstlisting}

\subsection{ShorCode}
\begin{lstlisting}[numbers=left,style=tiny]
type triple_p = (qubit & (qubit & qubit))<P>
type triple_m = (qubit & (qubit & qubit))<M>
type nonuple_p = ((qubit & (qubit & qubit)) &
                 ((qubit & (qubit & qubit)) &
                  (qubit & (qubit & qubit))))<P>
type nonuple_m = ((qubit & (qubit & qubit)) &
                 ((qubit & (qubit & qubit)) &
                  (qubit & (qubit & qubit))))<M>

(* Encode a qubit with the three-qubit bit-flip code *)
fun enc_bit (q : qubit<'p>) : (qubit & (qubit & qubit))<'p> =
  let (a1 : qubit<M>) = qinit () in
  let (a2 : qubit<M>) = qinit () in
  let (q : qubit<M>, a2 : qubit<M>) = (CNOT (q, a2)) in
  let (q : qubit<M>, a1 : qubit<M>) = (CNOT (q, a1)) in
  let (out : (qubit & (qubit & qubit))<M>) = (q, (a1, a2)) in
  cast<'p>(out)

(* Encode a qubit with the three-qubit phase-flip code *)
fun enc_phase (q : qubit<'p>) : (qubit & (qubit & qubit))<'p> =
  let (x : qubit<M>, (y : qubit<M>, z : qubit<M>)) = enc_bit (q) in
  let (out : (qubit & (qubit & qubit))<M>) = (H x, (H y, H z)) in
  cast<'p>(out)

(* Encode a qubit with the nine-qubit Shor code by concatenating the bit- and
 * phase-flip codes *)
fun enc_shor (q : qubit<P>) : nonuple_p =
  let (x : qubit<M>, (y : qubit<M>, z : qubit<M>)) = enc_phase (cast<M>(q)) in
  let (out : nonuple_m) = (enc_bit (x), (enc_bit (y), enc_bit (z))) in
  cast<P>(out)

(* Decode the three-qubit bit-flip code *)
fun dec_bit (enc : (qubit & (qubit & qubit))<'p>) :
  (qubit & (qubit & qubit))<'p> =
  let (q0 : qubit<M>, tail : (qubit & qubit)<M>) = enc in
  let (q1 : qubit<M>, q2 : qubit<M>) = tail in
  let (q0 : qubit<M>, q1 : qubit<M>) = (CNOT (q0, q1)) in
  let (q0 : qubit<M>, q2 : qubit<M>) = (CNOT (q0, q2)) in
  let (q2 : qubit<M>, (q1 : qubit<M>, q0 : qubit<M>)) = TOF (q2, (q1, q0)) in
  let env : (qubit & qubit)<M> = entangle<M>(q1, q2) in
  let out : (qubit & (qubit & qubit))<M> = entangle<M>(q0, env) in
  cast<'p>(out)

(* Decode the three-qubit phase-flip code *)
fun dec_phase (enc : (qubit & (qubit & qubit))<'p>) :
  (qubit & (qubit & qubit))<'p> =
  let (x : qubit<M>, (y : qubit<M>, z : qubit<M>)) = enc in
  let qs : (qubit & (qubit & qubit))<M> = (H x, (H y, H z)) in
  let out : (qubit & (qubit & qubit))<'p> = dec_bit (cast<'p>(qs)) in
  out

(* Decode the Shor code, without discarding the extra bits *)
fun dec_shor (enc : nonuple_p) : nonuple_p =
  let (x : triple_m, (y : triple_m, z : triple_m)) = enc in
  let (x : triple_m, (y : triple_m, z : triple_m)) =
    (dec_bit (x), (dec_bit (y), dec_bit(z))) in

  let (x0 : qubit<M>, (x1 : qubit<M>, x2 : qubit<M>)) = x in
  let (y0 : qubit<M>, (y1 : qubit<M>, y2 : qubit<M>)) = y in
  let (z0 : qubit<M>, (z1 : qubit<M>, z2 : qubit<M>)) = z in
  let heads : triple_m = (x0, (y0, z0)) in
  let (x0 : qubit<M>, (y0 : qubit<M>, z0 : qubit<M>)) = dec_phase (heads) in

  let x = (x0, (x1, x2)) in
  let y = (y0, (y1, y2)) in
  let z = (z0, (z1, z2)) in
  let dec : nonuple_m = (x, (y, z)) in
  cast<P>(dec)

fun test_bitflip (q : qubit<P>) : qubit<P> =
  let (q0 : qubit<M>, tail : (qubit & qubit)<M>) = enc_bit (q) in
  let (q1 : qubit<M>, q2 : qubit<M>) = tail in
  let qs : (qubit & (qubit & qubit))<M> = (q0, (q1, q2)) in
  let (q : qubit<P>, env : (qubit & qubit)<P>) =
    dec_bit (cast<P>(qs)) in
  q

(* Accept a qubit and a noise operation. Encode the qubit, apply the given noise
 * operation on the physical qubits, then decode and discard the extra bits. *)
fun shor_ecc (qop : (qubit<P> * (nonuple_p -> nonuple_p))) : qubit<P> =
  (* Encode the qubit *)
  let (q : qubit<P>, op : nonuple_p -> nonuple_p) = qop in
  let enc = enc_shor (q) in
  (* Disturb the encoded state with the noise operation *)
  let enc = op (enc) in
  (* Decode and discard the parity bits *)
  let dec = dec_shor (enc) in
  let (x : triple_p, (y : triple_p, z : triple_p)) = dec in
  let (dec : qubit<P>, others : (qubit & qubit)<P>) = x in
  dec

(* A unitary channel that produces a single phase flip *)
fun phaseflip_channel (enc : nonuple_p) : nonuple_p =
  let (x : triple_m, (y : triple_m, z : triple_m)) = enc in
  let (x0 : qubit<M>, x_tail : (qubit & qubit)<M>) = x in
  let x0 = Z x0 in
  let x : triple_m = (x0, x_tail) in
  let enc : nonuple_p = (x, (y, z)) in
  cast<P>(enc)

fun main () : qubit<P> = shor_ecc ((H (qinit ()), phaseflip_channel))
\end{lstlisting}

\subsection{ShorCode-Drop}

All other functions are identical to the \emph{ShorCode} example.

\begin{lstlisting}[numbers=left,style=tiny]
fun enc_bit (q : qubit<'p>) : ((qubit & qubit) & qubit)<'p> =
  let (a1 : qubit<M>) = qinit () in
  let (a2 : qubit<M>) = qinit () in
  let (q : qubit<M>) = qinit () in (* Type error: drops shadowed q *)
  let (q : qubit<M>, a2 : qubit<M>) = (CNOT (q, a2)) in
  let (q : qubit<M>, a1 : qubit<M>) = (CNOT (q, a1)) in
  let (out : ((qubit & qubit) & qubit)<M>) = ((q, a1), a2) in
  cast<'p>(out)
\end{lstlisting}

\subsection{ModMul(4)}

We next show the modular multiplication benchmark for $n = 4$. All other benchmarks in this family are very similar.

\begin{lstlisting}[numbers=left,style=tiny]
type five = (qubit & (((qubit & qubit) & qubit) & qubit))<P>
type four_m = (((qubit & qubit) & qubit) & qubit)<M>
type four_p = (((qubit & qubit) & qubit) & qubit)<P>

(* controlled multiply by 7 mod 15,
 * using negation followed by three controlled swaps *)
fun mult7 (cqs : five) : five =
  let (c : qubit<M>, qs : four_m) = cqs in
  let (((q1 : qubit<M>, q2 : qubit<M>), q3 : qubit<M>), q4 : qubit<M>) = qs in
  let (c : qubit<M>, q1 : qubit<M>) = CNOT (c, q1) in
  let (c : qubit<M>, q2 : qubit<M>) = CNOT (c, q2) in
  let (c : qubit<M>, q3 : qubit<M>) = CNOT (c, q3) in
  let (c : qubit<M>, q4 : qubit<M>) = CNOT (c, q4) in
  let (c : qubit<M>, (q2 : qubit<M>, q3 : qubit<M>)) = FRED (c, (q2, q3)) in
  let (c : qubit<M>, (q1 : qubit<M>, q2 : qubit<M>)) = FRED (c, (q1, q2)) in
  let (c : qubit<M>, (q1 : qubit<M>, q4 : qubit<M>)) = FRED (c, (q1, q4)) in
  let res : five = (c, (((q1, q2), q3), q4)) in
  res

(* controlled multiply by 13 mod 15 *)
fun mult13 (cqs : five) : five =
  let (c : qubit<M>, qs : four_m) = cqs in
  let (((q1 : qubit<M>, q2 : qubit<M>), q3 : qubit<M>), q4 : qubit<M>) = qs in
  let (c : qubit<M>, q1 : qubit<M>) = CNOT (c, q1) in
  let (c : qubit<M>, q2 : qubit<M>) = CNOT (c, q2) in
  let (c : qubit<M>, q3 : qubit<M>) = CNOT (c, q3) in
  let (c : qubit<M>, q4 : qubit<M>) = CNOT (c, q4) in
  let (c : qubit<M>, (q1 : qubit<M>, q4 : qubit<M>)) = FRED (c, (q1, q4)) in
  let (c : qubit<M>, (q1 : qubit<M>, q2 : qubit<M>)) = FRED (c, (q1, q2)) in
  let (c : qubit<M>, (q2 : qubit<M>, q3 : qubit<M>)) = FRED (c, (q2, q3)) in
  let res : five = (c, (((q1, q2), q3), q4)) in
  res

fun z () : qubit<P> = qinit ()
fun o () : qubit<P> = H (qinit ())
fun main () : (qubit<P> * four_p) =
  let c = o () in
  (* 0b1001 = 9 *)
  let num : four_p = ((((o ()), z ()), z ()), o ()) in
  let (c : qubit<P>, rest : four_p) = mult13 (mult7 (entangle<P>(c, num))) in
  (* restored to 0b1001 *)
  (c, rest)
\end{lstlisting}

\subsection{ModMul(4)-NotInverse}

All other functions are identical to the \emph{ModMul(4)} example.

\begin{lstlisting}[numbers=left,style=tiny]
fun mult13 (cqs : (qubit & (((qubit & qubit) & qubit) & qubit))<P>) :
  (qubit & (((qubit & qubit) & qubit) & qubit))<P> =
  let (c : qubit<M>, qs : (((qubit & qubit) & qubit) & qubit)<M>) = cqs in
  let (((q1 : qubit<M>, q2 : qubit<M>), q3 : qubit<M>), q4 : qubit<M>) = qs in
  let (c : qubit<M>, q1 : qubit<M>) = CNOT (c, q1) in
  let (c : qubit<M>, q2 : qubit<M>) = CNOT (c, q2) in
  let (c : qubit<M>, q3 : qubit<M>) = CNOT (c, q3) in
  let (c : qubit<M>, q4 : qubit<M>) = CNOT (c, q4) in
  let (c : qubit<M>, (q1 : qubit<M>, q4 : qubit<M>)) = FRED (c, (q1, q4)) in
  let (c : qubit<M>, (q1 : qubit<M>, q3 : qubit<M>)) = FRED (c, (q1, q3)) in (* WRONG *)
  let (c : qubit<M>, (q2 : qubit<M>, q3 : qubit<M>)) = FRED (c, (q2, q3)) in
  let res : (qubit & (((qubit & qubit) & qubit) & qubit))<P> =
    (c, (((q1, q2), q3), q4)) in
  res

fun main () : (qubit<P> * four_p) =
  let c = o () in
  let num : four_p = ((((o ()), z ()), z ()), o ()) in
  (* Dynamic separability check failure: argument entangled *)
  let (c : qubit<P>, rest : four_p) = mult13 (mult7 (entangle<P>(c, num))) in
  (c, rest)
\end{lstlisting}

\end{document}